\documentclass[11pt, letterpaper]{article}
\usepackage[letterpaper, margin=1in]{geometry}
\usepackage{amsmath,amssymb,amsthm}
\usepackage{todonotes,comment}
\usepackage{titling}
\usepackage{float}
\usepackage{bbm}
\usepackage[ruled,linesnumbered, vlined]{algorithm2e}
\usepackage{thm-restate}
\usepackage{hyperref}

\newtheorem{thm}{Theorem}
\numberwithin{thm}{section}

\newtheorem{bigthm}{Theorem}
\newtheorem{lemma}[thm]{Lemma}
\newtheorem{proposition}[thm]{Proposition}
\newtheorem{remark}[thm]{Remark}
\newtheorem{defn}[thm]{Definition}
\newtheorem{claim}[thm]{Claim}

\def\polylog{\operatorname{polylog}}
\def\poly{\operatorname{poly}}

\title{Weighted Matching in the Random-Order Streaming and \\ Robust Communication Models}
\author{Diba Hashemi \\EPFL \and Weronika Wrzos-Kaminska \\EPFL }
\date{}
\begin{document}
\begin{titlingpage}
\maketitle
\begin{abstract}
    We study the maximum weight matching problem in the random-order semi-streaming model and in the robust communication model.  Unlike many other sublinear models, in these two frameworks, there is a large gap between the guarantees of the best known algorithms for the unweighted and weighted versions of the problem.  
    
    In the random-order semi-streaming setting, the edges of an $n$-vertex graph arrive in a stream in a random order. The goal is to compute an approximate maximum weight matching with a single pass over the stream using $O(n\polylog n)$ space. Our main result is a $(2/3-\epsilon)$-approximation algorithm for maximum weight matching in random-order streams, using space $O(n \log n \log R)$, where $R$ is the ratio between the heaviest and the lightest edge in the graph. Our result nearly matches the best known unweighted $(2/3+\epsilon_0)$-approximation (where $\epsilon_0 \sim 10^{-14}$ is a small constant) achieved by Assadi and Behnezhad [ICALP 2021], and significantly improves upon previous weighted results. 
    
    Our techniques also extend to the related robust communication model, in which the edges of a graph are partitioned randomly between Alice and Bob. Alice sends a single message of size $O(n\polylog n)$ to Bob, who must compute an approximate maximum weight matching. We achieve a $(5/6-\epsilon)$-approximation using $O(n \log n \log R)$ words of communication, matching the results of Azarmehr and Behnezhad [ICALP 2023] for unweighted graphs.
\end{abstract}
\end{titlingpage}
\section{Introduction}

The maximum matching problem is a fundamental problem in graph algorithms. In the unweighted version of the problem, we are interested in computing a maximum \emph{cardinality} matching, i.e. to maximize the total number of edges in the matching. In the weighted version, we are interested in computing a maximum \emph{weight} matching, i.e. to maximize the sum of the edge weights in the matching. 

In this paper, we study matchings in the semi-streaming model. The semi-streaming model, originally introduced in \cite{Stream_OG}, is motivated by the rise of massive graphs where the data is too large to be stored in memory, and has received extensive attention (see among others \cite{McG05, ELMS11, GKK, Zelke12, Kap_lb_mid, CS14, 1/2-adversarial, simplified_1/2, Kap_lb_best}). In this model, the edges of a graph arrive sequentially as a stream. The algorithm typically makes a single pass over the stream using space $O(n \polylog n)$ and must output an approximate maximum matching at the end of the stream. If the graph is unweighted, the greedy algorithm trivially gives a $1/2$-approximation, which is the best known for adversarially ordered streams. On the hardness side, it is known that a $0.59$-approximation is not possible \cite{Kap_lb_best} (see also \cite{GKK, Kap_lb_mid}). Closing the gap between these upper and lower bounds is one of the major open problems in the graph streaming literature. There has also been a long line of work on the weighted problem \cite{Stream_OG, McG05, ELMS11,Zelke12, CS14, 1/2-adversarial, simplified_1/2}, culminating in a $(1/2 - \epsilon)$-approximation using space $O(n)$ \cite{1/2-adversarial, simplified_1/2}. 

Recently, there has been a wide interest in the random-order version of this problem, in which the arrival order of the edges is chosen uniformly at random. This problem has been extensively studied in the unweighted setting \cite{KMM12, konrad_stream,  stream_ABBMS, Ola,Bernstein,  stream_Farhadi,  beating_two-thirds, AS23}. Notably, Bernstein \cite{Bernstein} gave a $2/3$-approximation, and Assadi and Behnezhad \cite{beating_two-thirds} improved it to $(2/3 + \epsilon_0)$ for a small constant $\epsilon_0 \sim 10^{-14}$.

Progress on the weighted version of the problem lags behind. Gamlath et al. \cite{Ola} broke the barrier of $1/2$ in weighted graphs by obtaining a $(1/2+\delta)$-approximation for a small constant $\delta \sim 10^{-17}$. More recently, Huang and Sellier \cite{b-matching} gave a $\frac{1}{2-1/(2W)}$-approximation under the assumption that the weights take integral values in $[W]$. This leaves a considerable gap between the best known results for the unweighted and weighted versions of the problem. In contrast, in other sublinear contexts, such as adversarially ordered streams or the dynamic graph setting, the weighted/unweighted gap has largely been closed \cite{BDL}. The challenge of closing the gap in random-order streams remains an open problem, and has been highlighted explicitly in \cite{Bernstein} and \cite{BDL}. 

In this paper, we give a $(\frac{2}{3}-\epsilon)$-approximation algorithm for the weighted setting. Our result almost matches the best known $(\frac{2}{3} + \epsilon_0)$-guarantee for the unweighted setting and improves significantly upon the previous results for the weighted setting. 

 \begin{bigthm}\label{thm:main_stream}
Given any constant $\epsilon>0$, there exists a deterministic single-pass streaming algorithm that with high probability computes a  $(\frac{2}{3}-\epsilon)$-approximate maximum weight matching if the edges arrive in a uniformly random order. The space usage of the algorithm is $O(n\log n \log R),$ where $R$ is the ratio between the heaviest and the lightest edge weight in the graph. 
 \end{bigthm}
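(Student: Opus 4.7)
The plan is to reduce the weighted matching problem to a collection of near-unweighted subproblems via weight-based bucketing, and then combine the per-class matchings with an aggregation procedure that preserves the $2/3$ guarantee globally.

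The first step is to partition the edge set into geometric weight classes $E_i = \{e : w(e) \in [(1+\delta)^i, (1+\delta)^{i+1})\}$ for a parameter $\delta = \Theta(\epsilon)$, yielding $O(\delta^{-1}\log R)$ classes. Within each $E_i$, all edge weights agree up to a factor $1+\delta$, so any unweighted $(2/3-\epsilon)$-approximate matching in the subgraph $(V,E_i)$ is automatically a $(2/3 - O(\epsilon))$-approximate maximum weight matching of that subgraph. A crucial observation is that since the full edge stream is uniformly random, the subsequence of edges falling into any fixed class $E_i$ is itself in uniformly random order, so one may run a Bernstein-type random-order $(2/3)$-approximation (e.g.\ the algorithm of Bernstein or of Assadi and Behnezhad) independently on each class. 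Running the $O(\delta^{-1}\log R)$ copies in parallel, each using $O(n\log n)$ space, fits within the claimed $O(n\log n \log R)$ budget.

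The second step is the aggregation. Let $M_i$ be the matching returned by the instance on class $E_i$. The $M_i$'s are not jointly a matching because vertices can be shared, so the algorithm must commit to a conflict-free subset of $\bigcup_i M_i$ whose total weight is close to $\sum_i w(M_i)$. I would process the classes in decreasing order of weight and maintain a global matching $M$: whenever an edge of $M_i$ conflicts with edges of $M$, resolve the conflict either by direct preemption (a strictly heavier edge replaces a conflicting lighter one) or via a short augmenting swap. The geometric spacing of weights ensures that each preemption multiplicatively improves the weight, while the per-class $(2/3)$-approximation guarantee controls the number and structure of the augmentations needed to integrate each $M_i$.

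The central obstacle is the analysis of this aggregation: the $1/3$ multiplicative loss that is intrinsic to a $(2/3)$-approximation must not compound over the $O(\log R)$ classes. Writing $M^*_i = M^* \cap E_i$ for the restriction of an optimum weighted matching, it is straightforward to see $w(M^*) \le (1+\delta)\sum_i w(M^*_i)$, so the task reduces to showing that the global $M$ captures at least a $(2/3 - O(\epsilon))$ fraction of $\sum_i w(M^*_i)$ \emph{in aggregate}, rather than class-by-class. I expect the key technical ingredient to be a structural strengthening of the unweighted random-order result — a witness-set or EDCS-type certificate showing that each $M_i$ admits enough augmenting structure within the stream that heavier classes can absorb it without losing weight — together with a charging argument that assigns each edge of $M^*$ to a distinct part of $M$ across classes. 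Establishing this cross-class certificate in the random-order model, and implementing it in a single pass within the prescribed space, is in my view the main technical hurdle.
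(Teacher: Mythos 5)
Your proposal takes a genuinely different route from the paper, but it contains a gap at the aggregation step that I do not think you can close.

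Your first step (geometric weight classes, each an independent random-order substream, each solved by Bernstein's unweighted algorithm, total space $O(n\log n\log R)$) is correct as far as it goes. The paper also reduces $\log R$ via bucketing, but through the Gupta--Peng reduction (Theorem~\ref{thm:space}), which reduces general weights to bounded \emph{integral} weights $[W]$ for an algorithm that is still weighted, rather than decomposing the problem into independent unweighted instances.

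The critical difference is what happens next. The paper never needs to combine matchings across classes. It uses graph unfolding (the Kao--Lam--Sung--Ting construction, Definition 2.1) to turn the single weighted instance into a single \emph{unweighted} instance $\phi(G)$, so there is only one matching to return, and a $2/3$-approximate cardinality matching in $\phi(G)$ refolds (Lemma~\ref{lemma:matchsize}, Theorem~\ref{thm:refolding}) into a $2/3$-approximate weight matching in $G$. The technical work is then to show that Bernstein's algorithm tolerates the batch correlations introduced by unfolding (Proposition~\ref{prop:batch_algo}), and to handle non-bipartite refolding (Lemma~\ref{lemma:refolding_approx}).

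Your proposal instead produces $O(\log R)$ matchings $M_1, M_2, \dots$ and has to commit to a conflict-free sub-matching of $\bigcup_i M_i$ that retains a $(2/3-\epsilon)$ fraction of $w(M^*)$. You identify this as ``the main technical hurdle'' and sketch an aspiration (preemption by weight, short augmenting swaps, a charging argument, a cross-class witness-set certificate), but you do not supply the argument, and I do not believe one exists in this form. The obstruction is that each $M_i$ is only guaranteed to be $2/3$-approximate with respect to the \emph{maximum matching in its own class} $E_i$, and it may match entirely different vertices than $M^* \cap E_i$. Two classes whose $M_i$'s collide on the same vertex can force you to drop weight that $M^*$ captured in \emph{other} classes on the sacrificed vertices. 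One can construct instances where $\mu_w\bigl(\bigcup_i M_i\bigr)$ is only $(2/3)\,w(M^*)$ even before the per-class $2/3$ loss is felt, so the two losses threaten to compound; and the per-class guarantee does not supply the ``augmenting structure'' you would need to rule this out. Historically, this style of per-weight-class decomposition with preemptive combination is exactly what gave the early constant-factor-but-far-from-$1/2$ results for weighted streaming matching, and getting the weighted ratio to match the unweighted one required abandoning it in favor of local-ratio or unfolding-type techniques. So the missing ``structural strengthening'' is not a detail to be filled in: it is the crux, and the unfolding reduction is precisely how the paper sidesteps it by never creating the aggregation problem in the first place.
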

 
 We also consider the two-player communication complexity model \cite{CommComplexity_OG}, and in particular the one-way communication complexity of matching, first studied in \cite{GKK}. Here, the edge-set is partitioned between two parties Alice and Bob. Alice sends a single message to Bob, who must output an approximate maximum matching. Typically, we are interested in protocols with communication complexity $O(n \polylog n).$ 
 
 If the edges are partitioned adversarially between Alice and Bob, the right answer turns out to be $2/3$. A $2/3$-approximation can be achieved using $O(n)$ communication for both bipartite unweighted \cite{GKK}, general unweighted \cite{AB} and general weighted \cite{BDL} graphs. Going beyond a $2/3$-approximation requires $n^{1+1/(\log \log n)} \gg n \polylog n$ communication even for unweighted bipartite graphs \cite{GKK}. 
 
If instead the edges are partitioned randomly between the two parties, the answer is less clear. Recently, Azarmehr and Behnezhad \cite{RobustComm} gave a $5/6$-approximation algorithm for unweighted graphs, improving upon a previous result of Assadi and Behnezhad \cite{RobustComm_bipartite}. To the best of our knowledge, prior to our work, there were no results for weighted graphs (besides the $2/3$-approximation implied by adversarial protocols). We match the unweighted guarantees of Azarmehr and Behnezhad \cite{RobustComm}, thus closing the weighted/unweighted gap in the robust communication model. 
 
 \begin{bigthm}\label{thm:main_2party}
Given any constant $\epsilon>0$, there exists a protocol that with high probability computes a $(\frac{5}{6}-\epsilon)$-approximate maximum weight matching in the two-party robust communication model using  $O(n\log n \log R)$ words of communication, where $R$ is the ratio between the heaviest and the lightest edge weight in the graph. 
 \end{bigthm}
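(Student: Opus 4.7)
The plan is to reduce the weighted problem to $O(\epsilon^{-1}\log R)$ parallel instances of unweighted robust matching, for which we invoke the protocol of Azarmehr and Behnezhad~\cite{RobustComm}, and then combine the resulting per-level matchings via a hierarchical matching skeleton argument in the style of Bernstein~\cite{Bernstein}. The key structural observation is that if Alice's and Bob's edge sets form a uniformly random partition of $E(G)$, then for every weight threshold $t$ the edges of the level subgraph $G_{\geq t} := \{e \in E(G) : w_e \geq t\}$ are uniformly randomly partitioned as well, since the restriction of a uniformly random subset of $E(G)$ to any fixed subset is uniformly distributed on that subset.

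First I would discretize the weight scale geometrically. Let $t_\ell := (1+\epsilon')^\ell$ for $\ell = 0, 1, \ldots, L$, where $\epsilon' = \Theta(\epsilon)$ and $L = O(\epsilon^{-1}\log R)$. For each level $\ell$, Alice runs her side of the Azarmehr--Behnezhad protocol on $G_{\geq t_\ell}$ using only her edges of weight at least $t_\ell$, and she concatenates the $L+1$ resulting messages into a single transmission; this costs $O(L \cdot n\polylog n) = O(n\log n \log R)$ words in total. Bob then executes his side of the protocol at each level and recovers matchings $M_0, M_1, \ldots, M_L$ with $|M_\ell| \geq (5/6 - \epsilon')\,\mu(G_{\geq t_\ell})$ with high probability for every $\ell$, where $\mu(\cdot)$ denotes the maximum cardinality matching size. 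A union bound over $L$ levels preserves high probability at the cost of a polylogarithmic factor in the Azarmehr--Behnezhad failure parameter.

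Next, Bob assembles the per-level matchings into a weighted matching by forming the edge set $H := \bigcup_\ell M_\ell$ (which has $O(nL)$ edges) and computing the maximum weight matching $\widetilde{M}$ in $H$ locally. The approximation guarantee then follows from a telescoping slice-weight argument: assigning each edge of $M_\ell$ a contribution of $t_\ell - t_{\ell-1}$ produces a ``layered'' weight function whose total value on the $M_\ell$'s is at least $(5/6-\epsilon')$ times the same quantity evaluated on the optimal matching $M^*$, because $\mu(G_{\geq t_\ell}) \geq |M^* \cap G_{\geq t_\ell}|$ for every $\ell$. Since the geometric spacing ensures that the telescoped layered weight of $M^*$ is $(1-O(\epsilon))\,w(M^*)$, and since the matching polytope admits an integral optimum, Bob's $\widetilde{M}$ has weight at least $(5/6 - \epsilon')(1 - O(\epsilon))\, w(M^*) \geq (5/6 - \epsilon)\, w(M^*)$ for an appropriate choice of $\epsilon'$.

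The main obstacle is the final combination step: the per-level matchings share vertices, so a naive greedy union of them would only preserve a factor of $1/2$, not $5/6$. The hierarchical skeleton argument is designed precisely to close this gap, but verifying that the approximate per-level guarantees compose correctly (as opposed to exact maximum cardinality matchings as in Bernstein's original presentation) and that the fractional-to-integral step introduces no additional loss is the most delicate part of the proof. A minor secondary technicality is bounding the effective weight range: edges of weight less than $w_{\max}/(\epsilon n)$ contribute at most an $O(\epsilon)\,w(M^*)$ fraction and can be discarded, so $L = O(\epsilon^{-1}\log(n/\epsilon))$ always suffices, matching the claimed $O(n\log n\log R)$ communication bound when $R = \mathrm{poly}(n)$.
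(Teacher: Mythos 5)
Your reduction to per-level unweighted protocols is a genuinely different route from the paper's (which unfolds the weighted graph into an unweighted multigraph and runs a single EDCS-based streaming algorithm on it), but the combination step has a real gap. You argue that the layered quantity $\sum_\ell (t_\ell - t_{\ell-1})|M_\ell|$ is at least $(5/6-\epsilon')\sum_\ell(t_\ell-t_{\ell-1})|M^*\cap G_{\geq t_\ell}|$, which is correct, and then assert that Bob's maximum weight matching $\widetilde M$ in $H=\bigcup_\ell M_\ell$ has at least this much weight. That last step is false: the per-level matchings are mutually inconsistent, and the union $H$ need not contain any matching achieving the layered weight. In fact the final conclusion $\mu_w(H)\geq(5/6-\epsilon)\mu_w(G)$ fails \emph{even when every $M_\ell$ is an exact maximum-cardinality matching} at its level. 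Take the bipartite graph with left vertices $\{a,b\}$, right vertices $\{c,d,e\}$, and edges $ac$ (weight $4$), $ad$ (weight $2$), $bc$ (weight $2$), $be$ (weight $1$); then $\mu_w(G)=w(\{ac,be\})=5$. The exact level matchings are $M_2=\{ac\}$ in $G_{\geq 4}$, $M_1=\{ad,bc\}$ (the unique MCM) in $G_{\geq 2}$, and one may take $M_0=\{ad,bc\}$ in $G_{\geq 1}=G$. Then $H=\{ac,ad,bc\}$ excludes the edge $be$, so $\mu_w(H)=4=\tfrac{4}{5}\mu_w(G)$, a $20\%$ loss that is independent of $\epsilon$ and of the unweighted protocol's quality $\gamma$. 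The known guarantee for the union-of-geometric-levels combination (Crouch--Stubbs style, greedy from heavy to light) is only $\gamma/(2+O(\epsilon))$, and your appeal to the integrality of the matching polytope does not fix this because you have not exhibited a valid fractional matching of the claimed weight in $H$.

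The paper sidesteps this obstacle by never decomposing into per-level matchings. Instead, after the Gupta--Peng reduction (Theorem~\ref{thm:space}) to bounded integral weights, Alice \emph{unfolds} her edges into the unweighted multigraph $\phi(E_A)$ (Definition of $\phi$ from \cite{KLST}) and runs Bernstein's EDCS-based algorithm (Algorithm~\ref{alg:batch_bernstein}) on $\phi(E_A)$ in the $b$-batch random-order model, where each batch is the set of copies $\phi(e)$ of a single weighted edge. She communicates the refolded subgraph $\mathcal{R}(H\cup U_A)$, and Bob outputs a maximum weight matching in $\mathcal{R}(H\cup U_A)\cup E_B$. The key structural point your approach is missing is that the bounded-edge-degree (relaxed EDCS) property of $H\cup U$ is preserved under unfolding and ``bipartification,'' which is exactly what yields $\mu_w(\mathcal{R}(H\cup U))\geq(2/3-\epsilon)\mu_w(G)$ (Lemma~\ref{lemma:refolding_approx}); the $5/6$ bound then comes from a careful fractional-matching extension using Bob's independently sampled edges (Lemmas~\ref{lemma:xval} and~\ref{lemma:fractmatching}). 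Per-level matchings carry no analogous compatibility guarantee, which is precisely what the counterexample above exploits.
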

More generally, we match the results of Azarmehr and Behnezhad \cite{RobustComm} for unweighted $k$-party robust communication, thus closing the unweighted/weighted gap also in this model. 
 \begin{bigthm}\label{thm:main_kparty}
Given any $k \geq 2$ and any constant $\epsilon>0$, there exists a protocol that with high probability computes a $(\frac{2}{3}+\frac{1}{3k}-\epsilon)$-approximate maximum weight matching in the $k$-party one-way robust communication model using  $O(n\log n \log R)$ words of communication, where $R$ is the ratio between the heaviest and the lightest edge weight in the graph. 
 \end{bigthm}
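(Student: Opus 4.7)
The plan is to reduce the weighted $k$-party problem to the unweighted $k$-party protocol of Azarmehr and Behnezhad \cite{RobustComm} via a weight-bucketing scheme that loses only a $(1-\epsilon)$ factor. Note that Theorem~\ref{thm:main_2party} is the special case $k=2$ of this statement (since $2/3 + 1/6 = 5/6$), so the proof should proceed by extending the two-party argument to general $k$ in a largely black-box fashion.

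First I would partition the edges into $T = O(\epsilon^{-1}\log R)$ weight classes $E_1,\dots,E_T$, where $E_i$ contains all edges whose weight lies in $[(1+\epsilon)^i,(1+\epsilon)^{i+1})$; within each bucket the weights agree up to a factor of $1+\epsilon$, so the problem restricted to $E_i$ is essentially unweighted. Since the uniformly random partition of $E$ among the $k$ players induces an independent uniformly random partition of each $E_i$, every player can locally identify its share of each bucket and run the unweighted $k$-party protocol of \cite{RobustComm} on each $E_i$ in parallel, concatenating the resulting messages. Each instance costs $O(n\,\polylog n)$ words, so summing over the $T$ buckets yields the claimed $O(n\log n\log R)$ total communication.

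The core of the argument is the weighted-to-unweighted reduction itself: combining the per-bucket matchings $M_1,\dots,M_T$ via a maximum weight matching on $\bigcup_i M_i$ should yield a $(2/3+1/(3k)-O(\epsilon))$-approximation to the optimal weighted matching $M^*$. I would decompose $M^*$ into its restrictions $M^*_i := M^*\cap E_i$, observe that each $M^*_i$ is an almost-unweighted matching in $E_i$ whose weight is within $(1+\epsilon)$ of $|M^*_i|\cdot(1+\epsilon)^i$, and invoke the unweighted guarantee of \cite{RobustComm} bucket-by-bucket. The main obstacle will be this combination step: a naive bucket-wise union gives only a $1/2$-type bound, because edges from different buckets conflict at shared vertices. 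To overcome this I would restrict attention to the $O(\epsilon^{-1})$ heaviest nontrivial buckets (arguing that the geometrically shrinking tail contributes at most $O(\epsilon)\cdot w(M^*)$), and use a hybrid/charging argument, analogous to the one underlying Theorems~\ref{thm:main_stream} and~\ref{thm:main_2party}, to show that the per-bucket $(2/3+1/(3k))$-guarantees compose into a global weighted guarantee with only an additive $O(\epsilon)\cdot w(M^*)$ loss from weight rounding, bucket truncation, and the union bound over protocol failures. Rescaling $\epsilon$ then gives the stated approximation ratio.
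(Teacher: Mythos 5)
Your proposal takes a fundamentally different route from the paper, and the route has a genuine gap at its crux.

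The paper does not prove Theorem~\ref{thm:main_kparty} by bucketing edges into near-unweighted weight classes and running the unweighted protocol of \cite{RobustComm} per bucket. Instead, it uses the Gupta--Peng reduction (Theorem~\ref{thm:space}) to pass to integral weights bounded by a constant $\gamma_\epsilon$, then applies the graph-unfolding technique of \cite{KLST,BDL}: each weighted edge of weight $w_e$ becomes a batch $\phi(e)$ of $w_e$ unweighted edges. The first $k-1$ parties simulate a random-order stream, run the $b$-batch adaptation of Bernstein's algorithm (Algorithm~\ref{alg:batch_bernstein}) on the unfolded graph, and forward $\mathcal{R}(H \cup U)$; the last party combines this with its own edges. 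The approximation analysis builds a fractional matching $x$ on $\mathcal{R}(H\cup U)$ of value $(2/3-\epsilon)\mu_w(E_{late})$ (Lemma~\ref{lemma:xval}, which requires the new non-bipartite refolding result, Lemma~\ref{lemma:refolding_approx}), extends it to $\hat y$ and $y$ using the last party's private edges and a random submatching $M'\subseteq M^*$, and shows $\mathbb{E}[\sum_e w_e y_e] \geq (2/3 + p/3 - 4\epsilon)\mu_w(G)$ with $p = 1/k$ (Lemma~\ref{lemma:fractmatching}). Rounding via Proposition~\ref{prop:blossom} and concentration via self-bounding functions finish the proof. This machinery is what lets the guarantee exceed $2/3$; the approach is essentially identical to the $k=2$ case in Theorem~\ref{thm:main_2party} with the first $k-1$ parties jointly playing Alice.

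Your plan replaces all of this with geometric bucketing plus per-bucket applications of the unweighted protocol of \cite{RobustComm}, followed by an unspecified ``hybrid/charging argument'' to combine the $T$ per-bucket matchings. This is exactly where the proof breaks. You correctly identify that a naive union of the $M_i$ loses a factor of roughly $2$ because of vertex conflicts across buckets, but then appeal to a combination argument ``analogous to the one underlying Theorems~\ref{thm:main_stream} and~\ref{thm:main_2party},'' which does not exist: those theorems are proven via unfolding, not by composing per-weight-class unweighted approximations. No such composition lemma is known to lift a per-bucket $(2/3+1/(3k))$-guarantee to a global weighted one while only losing $O(\epsilon)$; avoiding this composition loss is precisely why the paper passes through the unfolded graph. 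In addition, the truncation step is wrong as stated: keeping only the $O(\epsilon^{-1})$ heaviest buckets does not bound the tail by $O(\epsilon)\, w(M^*)$, since the tail may consist of $\Omega(n)$ light edges whose total weight dwarfs $w_{\max}$ (and hence possibly $w(M^*)$ as well). To salvage a bucketing-based proof you would need, at minimum, a concrete and new combination lemma for weighted matchings assembled from geometric weight classes, which is not supplied here. The intended proof instead reduces to bounded integral weights (this is where the $\log R$ factor enters), unfolds, runs the batch-robust Bernstein algorithm, and uses the refolding and fractional-matching arguments of Section~\ref{sec:Comm}.
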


\subsection{Related Work}The maximum matching problem is one of the most studied problems in the streaming setting, with numerous lines of work. This includes among others single-pass algorithms \cite{Stream_OG, McG05, ELMS11, GKK, Zelke12,Kap_lb_mid,  CS14,1/2-adversarial, simplified_1/2,  Kap_lb_best, Regularitylemma}, multi-pass algorithms using $2$ or $3$ passes \cite{KMM12,EHM16, KT17, konrad_stream, KN21,  Assadi22, FS22, KNS23, KN24}, and $(1-\epsilon)$-approximation using a higher number of passes \cite{McG05, EKMS12,AG13, GO16,AG18, Tir18, Ola, ALT21, AJJST22, FMU22, AS23, HS23, Assadi24}. Garg et al. considered matching in a robust random-order streaming model with adversarial noise \cite{Ola_robust}. There are many results on dynamic streams, where edges can be deleted \cite{CCHM15,Kon15,  AKLY16, CCEHMMV16, AKL17, DK20, AS22}. A different line of work considers estimating matching size, either in random-order streams \cite{KKS14, BS15, MMPS17, KMNT20, AS23_exactmatching} or in adversarially ordered streams \cite{BS15, MV16,AKL17,  CJMM17, EHLMO18,MV18, BGMMSVZ19, AKSY20, AN21}.  Finally, there have also been several works on exact matching \cite{Stream_OG, CCEHMMV16, GO16, AR20, KPS0Y21, AJJST22}. 

\section{Technical Overview}\label{sec:tech}
In this paper, we are interested in the random-order streaming model. The maximum cardinality matching problem has gained significant attention within this framework  \cite{KMM12, konrad_stream, Ola, stream_ABBMS, stream_Farhadi, Bernstein,  beating_two-thirds}.  Bernstein \cite{Bernstein} gave a $2/3$-approximation algorithm by adapting the  ``matching sparsifier" Edge-Degree Constrained Subgraph  (EDCS) to the streaming context. Subsequent work by Assadi and Behnezhad \cite{beating_two-thirds} improved upon this, achieving a $(2/3 + \epsilon_0)$-approximation by simultaneously running Bernstein's algorithm while identifying short augmenting paths. One of the motivations for studying the random-order setting, is that real-world data is rarely ordered adversarially. Rather, in most practical applications, it is reasonable to assume that the data is drawn from some distribution. However, assuming uniform randomness is often too strong of an assumption since data correlations are prevalent in many real-world settings. This raises the question: 

\begin{center}
    \em 
How robust are random-order streaming algorithms to correlations in the arrival order? 
\end{center}
The robustness of random-order streaming algorithms to various types of adversarial distortions has already been studied previously, among others in the context of maximum matching and submodular maximization \cite{Ola_robust}, rank selection \cite{semirandom_quantile1, semirandom_quantile2, semirandom_quantile3}, clustering problems \cite{semirandom_facility} and component collection and counting \cite{CKKP}. 
 In this paper, we focus on matchings. Our first contribution is showing that existing algorithms for unweighted matching in random-order streams are in fact robust to correlations in the arrival order. 
 \begin{center}
    \em 
Berstein's $\left( \frac{2}{3}-\epsilon \right)$-approximation algorithm is resilient to (limited) adversarial correlations in the arrival order. 
\end{center}
 
Surprisingly, this immediately gives a reduction from weighted matching in random-order streams.

In \emph{adversarially} ordered streams, Bernstein, Dudeja and Langley \cite{BDL}  gave a reduction from maximum weight matching to maximum cardinality matching. Progress in \emph{random-order streams} has been comparatively limited. Gamlath et al. \cite{Ola} achieved a $(1/2+\delta)$-approximation, where  $\delta \sim 10^{-17}$ is a small constant. More recently, Huang and Sellier \cite{b-matching} gave a $\frac{1}{2-1/(2W)}$-approximation under the assumption that the weights take integral values in $[W]$, improving upon the result of Gamlath et al. \cite{Ola} for small weights. They generalized the definition of EDCS to weighted graphs, which enabled them to adapt Bernstein's algorithm \cite{Bernstein} to weighted graphs. However, their generalized notion of EDCS has weaker guarantees compared to the unweighted version, resulting in a significant loss in the approximation ratio. 
 
Our second contribution is to nearly close the gap between weighted and unweighted maximum matching in random-order streams. We show that the reduction of Berstein, Dudeja and Langley can be applied to random-order streaming algorithms which are resilient to specific correlations in the arrival order. This, together with the fact that Bernstein's algorithm \cite{Bernstein} is robust to the appropriate correlations, gives a $2/3$-approximation algorithm for weighted bipartite graphs. We are also able to extend the guarantees to non-bipartite graphs.

\subsection{Reduction in Adversarial Streams} First, we review the reduction of Bernstein, Dudeja and Langley \cite{BDL} for adversarial streams. It is based on a technique called graph unfolding by Kao, Lam, Sung and Ting \cite{KLST}.
\begin{defn}[Graph Unfolding \cite{KLST}]
	Let $G = (V, E, w)$ be a graph with non-negative integral edge weights. The unfolded graph $\phi(G)$ is an unweighted graph created as follows. For each vertex $u \in V$, let $W_u =max_{e \ni u} w_e$ be the maximum edge weight incident on $u$. There are $W_u$ copies of $u$ in $\phi(G)$, denoted by $u^1, ..., u^{W_u}$. For each edge $e= (u,v)$ in $G$, there are $w_e$ edges $\{(u^{i},v^{w_e-i+1}) \}_{i \in [w_e]}$ in $\phi(G)$. See Figure~\ref{fig:unfolding} for an illustration.
\end{defn}
\begin{figure}[H]
	\centering
 \resizebox{0.6\linewidth}{!}{%
	\begin{tikzpicture}
%draw the weighted graph before unfolding: 
\node (G) at (1,2) {$G$};
\begin{scope}[every node/.style={draw, circle, fill=white, inner sep=1.5pt}]
    \node (x) at (1,-0.5) {$x$};
    \node (y) at (0,1) {$y$};
    \node (z) at (2,1) {$z$};
\end{scope}
    \draw (x) -- (y) node[midway, above]{$2$};
    \draw (x) -- (z) node[midway, above]{$3$}; 
%arrow: 
\draw[->] (2.5,0.2) --(4.3,0.2) node[midway, above]{$\phi$};
%weighted graph after unfolding: 
\node (phi) at (6.3,2) {$\phi(G)$};
\begin{scope}[every node/.style={draw, circle, fill=white, inner sep=0pt,minimum size=1pt}]
\node (x1) at (5.3,-0.5) {$x^1$};
\node (x2) at (6,-0.5) {$x^2$};
\node (x3) at (6.7,-0.5) {$x^3$};
\node (y1) at (4.8,1) {$y^1$};
\node (y2) at (5.5,1) {$y^2$};
\node (z1) at (6.3,1) {$z^1$};
\node (z2) at (7,1) {$z^2$};
\node (z3) at (7.7,1) {$z^3$};
\end{scope}
    \draw (x1) -- (y2);
    \draw (x2) -- (y1);
    \draw (x1) -- (z3); 
    \draw (x2) -- (z2); 
    \draw (x3) -- (z1); 
\end{tikzpicture}
 }
	\caption{An example of a weighted graph $G$ and its unfolding $\phi(G)$.}
	\label{fig:unfolding}
\end{figure}
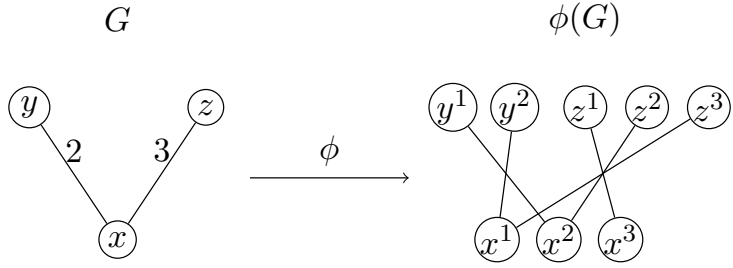
One can also do a reverse operation of unfolding to bring a subgraph back to $G$.
\begin{defn}[Refolding \cite{BDL}]
	Let $G=(V,E)$ be a weighted graph and let $H \subseteq \phi(G)$. The refolded graph $\mathcal{R}(H)$ has
	vertex set $V$ and edge set $E(\mathcal{R}(H)) := \{e= (u,v) \in G : (u^{i},v^{w_e-i+1}) \in H$ for some $i \in [w_e]\}$. See Figure~\ref{fig:refolding} for an illustration.
\end{defn}
\begin{figure}[H]
	\centering
     \resizebox{0.6\linewidth}{!}{%
	\begin{tikzpicture}
%draw the weighted graph before unfolding: 
\node (G) at (1,2) {$G$};
\begin{scope}[every node/.style={draw, circle, fill=white, inner sep=1.5pt}]
    \node (x) at (1,-0.5) {$x$};
    \node (y) at (0,1) {$y$};
    \node (z) at (2,1) {$z$};
\end{scope}
    \draw[blue, line width = 3pt] (x) -- (y) node[midway, above,black]{$2$};
    \draw (x) -- (z) node[midway, above]{$3$}; 
    \node[blue] (R text) at (-0.1,0.1){$\mathcal{R}(H)$};
%arrow: 
\draw[->] (4.3,0.2)-- (2.5,0.2) node[midway, above]{$\mathcal{R}$};
%weighted graph after unfolding: 
\node (phi) at (6.3,2) {$\phi(G)$};
\begin{scope}[every node/.style={draw, circle, fill=white, inner sep=0pt,minimum size=1pt}]
\node (x1) at (5.3,-0.5) {$x^1$};
\node (x2) at (6,-0.5) {$x^2$};
\node (x3) at (6.7,-0.5) {$x^3$};
\node (y1) at (4.8,1) {$y^1$};
\node (y2) at (5.5,1) {$y^2$};
\node (z1) at (6.3,1) {$z^1$};
\node (z2) at (7,1) {$z^2$};
\node (z3) at (7.7,1) {$z^3$};
\end{scope}
    \draw[blue, line width = 3pt] (x1) -- (y2);
    \draw (x2) -- (y1);
    \draw (x1) -- (z3); 
    \draw (x2) -- (z2); 
    \draw (x3) -- (z1); 
    \node[blue] (text H) at (5,0.1){$H$};
\end{tikzpicture}
 }
	\caption{An example of a subgraph $H \subseteq \phi(G)$ and its refolding $\mathcal{R}(H) \subseteq G$. In this example, $H = \{(u^1,v^2)\}$. Then $\mathcal{R}(H) = \{(u,v)\}$.}
	\label{fig:refolding}
\end{figure}
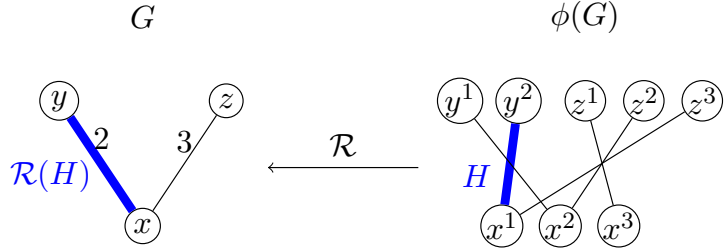
Figure \ref{fig:unfolding} illustrates the unfolding operation and Figure \ref{fig:refolding} illustrates the refolding operation. The key property of refolding is that it preserves the matching size in bipartite graphs. 
\begin{lemma}[\cite{BDL}]
	\label{lemma:matchsize}
	Let $G$ be a weighted bipartite graph, and let $H \subseteq \phi(G)$ be a subgraph of its unfolding. Then $\mu_w(\mathcal{R}(H)) \geq \mu(H)$.
\end{lemma}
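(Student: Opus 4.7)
The plan is to prove this via LP duality, using bipartiteness of $\mathcal{R}(H) \subseteq G$ crucially. Since $\mathcal{R}(H)$ is bipartite with integer edge weights, its vertex-edge incidence matrix is totally unimodular, so both the maximum-weight matching LP and its dual (a weighted vertex cover LP with constraints $y_u + y_v \geq w_e$ for each edge $e = (u,v)$) admit integer optima. Consequently, it suffices to show that every integer vector $y \in \mathbb{Z}_{\geq 0}^{V}$ satisfying $y_u + y_v \geq w_e$ for each $e \in \mathcal{R}(H)$ has $\sum_u y_u \geq \mu(H)$, since the minimum of $\sum_u y_u$ over such integer $y$ equals $\mu_w(\mathcal{R}(H))$.

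Given such a cover $y$, I lift it to a vertex cover of $H$ inside $\phi(G)$ by taking prefixes of copies:
\[
C \;=\; \bigcup_{u \in V} \bigl\{u^1, u^2, \ldots, u^{\min(y_u,\,W_u)}\bigr\} \;\subseteq\; V(\phi(G)).
\]
Immediately $|C| = \sum_u \min(y_u, W_u) \leq \sum_u y_u$. The main verification is that $C$ covers every edge of $H$. Consider an edge $(u^i, v^{w_e - i + 1}) \in H$; then $e = (u,v) \in \mathcal{R}(H)$ and $y_u + y_v \geq w_e$. If $i \leq y_u$, then $u^i \in C$ (we automatically have $i \leq W_u$ since $u^i$ is a valid copy). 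Otherwise, integrality of $y$ forces $y_u \leq i - 1$, which gives $y_v \geq w_e - (i-1) = w_e - i + 1$, and so $v^{w_e - i + 1} \in C$ (using that $w_e - i + 1 \leq W_v$ by validity of the copy).

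Once $C$ is confirmed to be a vertex cover of the bipartite graph $H$, K\"onig's theorem yields $|C| \geq \mu(H)$, and the chain $\sum_u y_u \geq |C| \geq \mu(H)$ closes the argument. The step I expect to require the most care is integrality: we need to handle $y_u$ as an integer to conclude ``$y_u < i$ implies $y_u \leq i-1$'', which is precisely where bipartiteness is indispensable (on non-bipartite $G$ the LP relaxation has an integrality gap, and indeed the lemma fails on a weighted triangle). The prefix-of-copies construction is the natural way to convert an integer weighted cover on $G$ into an ordinary cover on $\phi(G)$, exactly matching the ``staircase'' pairing $(u^i, v^{w_e - i + 1})$ of the unfolding.
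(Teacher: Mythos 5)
Your proof is correct. The paper itself does not prove this lemma --- it is quoted directly from \cite{BDL} --- so there is no in-paper argument to compare against, but the Eg\'erv\'ary/LP-duality route you take is a clean and valid way to establish it. The three pieces all check out: (i) since $\mathcal{R}(H)\subseteq G$ is bipartite with integer weights, total unimodularity of the bipartite incidence matrix gives an integral optimal dual, so $\mu_w(\mathcal{R}(H))$ equals the minimum of $\sum_u y_u$ over integer weighted vertex covers $y$; (ii) the prefix-of-copies lift $C=\bigcup_u\{u^1,\dots,u^{\min(y_u,W_u)}\}$ really does cover every edge $(u^i,v^{w_e-i+1})\in H$, using integrality of $y$ exactly as you point out, and $|C|\le\sum_u y_u$; (iii) $|C|\ge\mu(H)$ then closes the chain. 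Two small remarks. First, for step (iii) you invoke K\"onig's theorem, but you only need the trivial direction (any vertex cover has size at least any matching), which holds in every graph; the full K\"onig equality is not used. Second, and relatedly, it slightly muddies the exposition to say that bipartiteness ``is indispensable'' at the integrality step ``$y_u<i\Rightarrow y_u\le i-1$'': that implication is just integer arithmetic. Where bipartiteness genuinely enters is one step earlier --- in guaranteeing that an optimal \emph{integer} dual $y$ exists with value $\mu_w(\mathcal{R}(H))$ (Eg\'erv\'ary's theorem / TU). On a non-bipartite triangle there is no such integer cover achieving $\mu_w$, which is why the lemma fails there, consistent with the paper's Figure~\ref{fig:nonbiprefolding}.
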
 
This leads to a reduction from maximum weight bipartite matching to maximum cardinality bipartite matching in \emph{adversarially ordered streams}: Upon the arrival of each weighted edge $e \in G$,  unfold $e$ and pass the corresponding unweighted edges $\phi(e)$ into an unweighted streaming algorithm. At the end of the stream, we obtain an unweighted matching in $\phi(G)$, which we can refold to obtain a weighted matching in $G$. 

In \emph{random-order streams}, this reduction breaks for the following reason: For each weighted edge $e \in G$, the unweighted edges $\phi(e)$ will necessarily arrive together. This introduces correlations in the arrival order of the edges, so the guarantees of random-order streaming algorithms do not apply. To address this, we consider a new streaming model, the $b$-batch random-order stream model, which is similar to the hidden-batch model introduced in \cite{CKKP}. This model allows us to capture the edge-correlations that arise from graph unfolding. 
\pagebreak

\begin{restatable}[$b$-batch random-order stream model]{defn}{bbatch}
	In the $b$-batch random-order stream model the edge set of the input graph $G = (V, E)$ is presented as follows: 
	An adversary partitions the edge set $E$ into batches $\mathcal{B} = \{B_1, ..., B_q\}$ with $|B_i| \leq b$ for all $i$. The arrival order of the batches $(B_{i_1}, ..., B_{i_q})$ is then chosen uniformly at random among all the permutations of $\mathcal{B}$. The edges in each batch arrive simultaneously. 
\end{restatable}

 The graph unfolding technique gives a reduction from weighted bipartite random-order streams to unweighted bipartite $b$-batch random-order streams. Indeed, each batch corresponds to one weighted edge. So given a weighted graph $G$, we can simply run a $b$-batch random-order stream algorithm on $\phi(G)$ with batches $\mathcal{B} = \{\phi(e): e \in G\}. $

\subsection{Bernstein's Algorithm for Unweighted Random-Order Streams} We now review Bernstein's algorithm for unweighted random-order streams \cite{Bernstein}. The algorithm proceeds in two Phases. Let $\beta = O(\poly(\epsilon^{-1}))$ be a parameter. Phase 1 constructs a subgraph $H$ such that for all $(u,v) \in H$,
\begin{equation}\label{eqn:underfull}
\deg_H(u) + \deg_H(v) \leq \beta.
\end{equation}
    Given a subgraph $H$, we will say that an edge $(u,v) \in G$ is \emph{underfull} if  $\deg_H(u) + \deg_H(v) \leq \beta -2$, 
otherwise say that $(u,v)$ is non-underfull.

 The algorithm constructs $H$ by greedily adding underfull edges, and then removing any edges that violate Equation \ref{eqn:underfull}. Phase 1 terminates when $\approx \poly(\epsilon) \frac{m}{n}$ non-underfull edges arrive in a row, and the algorithm then moves on to Phase 2. Bernstein \cite{Bernstein} showed that it is only possible to make at most $n \beta^2$ modifications to $H$. Since Phase 1 terminates when we see $\approx \poly(\epsilon) \frac{m}{n}$ edges in a row without modifying $H$, the phase must terminate within the first $\approx n \beta^2 \cdot  \poly(\epsilon) \frac{m}{n} \approx \epsilon m$ edges. This argument also holds in the $b$-batch random-order stream model. 
 
Then, in Phase 2, the algorithm simply collects all underfull edges into a separate set $U$ (without modifying the graph $H$). Let $G_{late}$ denote the edges that arrive in Phase 2. Bernstein \cite{Bernstein} proved the following structural result about $H \cup U$, which holds regardless of the assumptions on the arrival order: 
\begin{equation}\label{eqn:Bguarantee}
	\mu(H \cup U) \geq \left(\frac{2}{3}-\epsilon\right) \mu(G_{late}). 
\end{equation}
Since Phase 2 contains at least a $(1-\epsilon)$ fraction of the edges, and since the stream is uniformly at random, it follows from Chernoff bounds that $\mu(G_{late}) \geq (1-2\epsilon)\mu(G)$. Consequently, by Equation \ref{eqn:Bguarantee}, it holds that
$$\mu(H \cup U) \geq \left(\frac{2}{3}-3\epsilon\right)\mu(G).$$

For the space analysis, observe that $H$ contains at most $n \beta = O(n)$ edges. Let us now consider $U$. Recall that $U$ is the set of all underfull edges that arrive after the termination of Phase 1, and that Phase 1 terminates when we see $\approx \frac{m}{n}$ non-underfull edges in a row. So the only way for $U$ to become too large is if we draw $\approx \frac{m}{n}$ non-underfull edges in a row when there are more than $C \cdot n \log n$ underfull edges left in the stream, for some constant $C$. The probability of this event can be upper-bounded by 
\begin{equation*}
	\left(1- \frac{C \cdot n \log n}{m}\right)^{m/n} \leq n^{-C},
\end{equation*}
so with high probability, the algorithm stores at most $O(n \log n)$ edges. Note that the space analysis breaks down in the $b$-batch random-order stream model, due to the correlated arrival orders. 

\subsection{Applying the Algorithm to Batch Arrivals} We now sketch why Bernstein's algorithm can be adapted to work under batch arrivals. Let $b$ denote the upper bound on the batch size, and let $q$ denote the total number of batches in the stream. Recall that in the reduction from weighted random-order streams, $b$ corresponds to the maximum weight in the graph and $q$ corresponds to the number of edges in the weighted graph. We will now describe how to obtain an algorithm with a polynomial space dependence on $b$. We will later discuss how to remove this dependence in the reduction from weighted random-order streams. 

\begin{defn}
    Say that a batch is \emph{underfull} if it contains at least one underfull edge. Otherwise, if it does not contain any underfull edges, say that it is non-underfull.
\end{defn} 
We terminate Phase 1 when $\approx \poly(\epsilon) \frac{q}{b n}$ non-underfull batches arrive in a row.  
This ensures that Phase 1 terminates within the first $\approx n \beta^2 \cdot  \poly(\epsilon) \frac{q}{b n} \approx \frac{\epsilon}{b} q$ batches. Since each batch can contain at most $b$ edges, and since the arrival order of the batches is uniformly at random, it follows from Chernoff bounds that $$\mu(G_{late}) \geq  (1-2\epsilon )\mu(G).$$ Combining with Equation \ref{eqn:Bguarantee} we obtain $$\mu(H \cup U) \geq \left(\frac{2}{3}-3\epsilon\right) \mu(G).$$ 

 The only way for the space usage to become too large is if $\approx \poly(\epsilon) \frac{q}{b \cdot n}$ non-underfull batches arrive in a row when there are more than $C \cdot n \log n \poly(\frac{b}{\epsilon})$ underfull batches left in the stream, for some constant $C$. The probability of this event can be upper-bounded by 
\begin{equation*}
	\left(1- C \cdot  \frac{ n \log n}{q} \poly \left(\frac{b}{\epsilon}\right)\right)^{\poly(\epsilon/b)q/n} \leq n^{-C},
\end{equation*}
so with high probability, the algorithm stores at most $O(n \log n \poly(b))$ edges. 

In our reduction, the parameter $b$ corresponds to the maximum edge weight $W$ in the graph. This means that we would incur a polynomial dependence on $W$ in the space usage. However, a reduction due to Gupta and Peng \cite{GP} allows us to offset this space dependence. They devised a scheme for bucketing together edges according to their weight, which gives a reduction from general (possibly non-integral) weights, to integral bounded weights. Combining with this reduction, our algorithm uses space $O(n \log n \log R)$, where $R$ is the ratio between the heaviest and the lightest edge in the graph, and it can handle any (possibly non-integral) edge weights. In particular, the space usage is $O(n \polylog n)$ as long as the weights are polynomial in $n$. 

\subsection{Non-Bipartite Graphs}
In general, the reduction of Bernstein, Dudeja and Langley only holds for bipartite graphs. For non-bipartite graphs, it is no longer true that refolding preserves the matching size, since refolding a matching in $\phi(G)$ could incur an additional $2/3$ loss in the approximation ratio. Indeed, consider for example a weighted triangle with all edges of weight $2$ (see Figure~\ref{fig:nonbiprefolding}). 

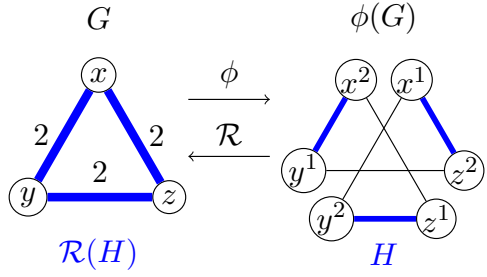
\begin{figure}[H]
\centering
\resizebox{0.4\linewidth}{!}{%
\begin{tikzpicture}
%Weighted graph: 
\draw[->] (1.1,0.7)-- (2.1,0.7) node[midway, above]{$\phi$};
\draw[->] (2.1,0)-- (1.1,0) node[midway, above]{$\mathcal{R}$};
\node (G) at (0,1.7) {$G$};
\begin{scope}[every node/.style={draw, circle, fill=white, inner sep=1.5pt}]
\node (x) at (90:1) {$x$}; 
\node (y) at (210:1) {$y$}; 
\node (z) at (-30:1) {$z$}; 
\end{scope}
\draw[blue, line width = 3pt] (x) -- (y) node[midway, left,black]{$2$};
\draw[blue, line width = 3pt]  (y) -- (z) node[midway, above, black]{$2$}; 
\draw[blue, line width = 3pt]  (z) -- (x) node[midway, right, black]{$2$};
\node[blue] (R text) at (0,-1.2){$\mathcal{R}(H)$};
%Unweighted graph: 
\node (phiG) at (3.5,1.7) {$\phi(G)$};
\begin{scope}[every node/.style={draw, circle, fill=white, inner sep=0pt,minimum size=1pt},shift={(0:3.5)}]]
\node (x1) at (70:1) {$x^1$}; 
\node (x2) at (110:1) {$x^2$}; 
\node (y1) at (190:1) {$y^1$}; 
\node (y2) at (230:1) {$y^2$}; 
\node (z1) at (-50:1) {$z^1$}; 
\node (z2) at (-10:1) {$z^2$}; 
\end{scope}
\draw[blue, line width = 2pt] (x2) -- (y1) node[midway, left,black]{};
\draw[blue, line width = 2pt]  (y2) -- (z1) node[midway, above, black]{};
\draw[blue, line width = 2pt]  (z2) -- (x1) node[midway, right, black]{};
\draw (x1) -- (y2) node[midway, left,black]{};
\draw(y1) -- (z2) node[midway, above, black]{};
\draw (z1) -- (x2) node[midway, right, black]{};
\node[blue] (R text) at (3.5,-1.2){$H$};
\end{tikzpicture}
}
\caption{Refolding does not in general preserve matching size in non-bipartite graphs. Consider for example the blue subgraph $H = \{(x^1,z^2), (z^1,y^2), (y^1,x^2) \} \subseteq \phi(G)$ shown in the diagram. Then $\mu(H) = 3$, but $\mu_w(\mathcal{R}(H)) =2.$}
\label{fig:nonbiprefolding}
\end{figure}

We prove that the subgraph $H \cup U$ computed by Bernstein's algorithm still satisfies $\mu_w(\mathcal{R}(H \cup U )) \geq (2/3-\epsilon) \mu_w(G)$, even for non-bipartite graphs. This allows us to apply the unfolding reduction without any loss in the approximation ratio.  We achieve this by reducing to the bipartite case: We show that for every weighted graph $G$, there exists a bipartite subgraph $\widetilde{G} \subseteq G$ such that $\mu((H \cup U) \cap \phi(\widetilde{G})) \geq (2/3-\epsilon) \mu_w(G)$. We can then apply Lemma \ref{lemma:matchsize} to the bipartite graph $\widetilde{G}$ to get the result. 

In order to ``bipartify" the graph, we use the following lemma from \cite{BDL}, which says that there exists a bipartite subgraph in which the degrees to $H$ concentrate well (See Lemma \ref{lemma:bernstein4.2} for the formal statement). 
\begin{lemma}[Informal version of Lemma 5.7 in \cite{BDL}]\label{lemma:bipartifysimple}
Let $G$ be a weighted graph and let $M^*$ be a maximum weight matching in $G$. Suppose that $H \subseteq \phi(G)$ satisfies Equation \ref{eqn:underfull}. Then there exists a bipartite subgraph $\widetilde{G} \subseteq G$ such that $\widetilde{G}$ contains $M^*$, and, setting $\widetilde{H}:= H \cap \widetilde{G}$, it holds that $$\deg_{\widetilde{H}}(v) \approx \frac{\deg_H(v)}{2} \qquad \forall v \in V.$$
\end{lemma}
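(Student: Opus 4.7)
The plan is to construct $\widetilde G$ from a random bipartition of $V$ that is forced to respect $M^*$, and then show per-vertex concentration of $\deg_{\widetilde H}(v)$ around $\deg_H(v)/2$ via a Hoeffding-type inequality, exploiting the degree bound $\deg_H(v)\le \beta$ implied by Equation~\ref{eqn:underfull}.

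First I would set up the random bipartition. For each matched edge $(u,v)\in M^*$ I flip an independent fair coin to decide which endpoint goes into $L$ and which into $R$; for each vertex unmatched by $M^*$ I independently place it in $L$ or $R$ uniformly at random. I then let $\widetilde G\subseteq G$ consist of the edges of $G$ whose two endpoints land on opposite sides. By construction $\widetilde G$ is bipartite and every edge of $M^*$ survives, so $M^*\subseteq \widetilde G$.

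Next I would compute the expected degree. Fix $v\in V$ and group the edges of $H$ incident to copies of $v$ by the $G$-vertex of their other endpoint: let $u_1,\dots,u_\ell$ be the distinct such $G$-vertices and $m_i$ the number of $H$-edges connecting copies of $v$ to copies of $u_i$, so $\sum_i m_i=\deg_H(v)$. The group for $u_i$ contributes $m_i$ to $\deg_{\widetilde H}(v)$ exactly when $v$ and $u_i$ end up on opposite sides. For every $u_i$ that is not the $M^*$-partner of $v$, this event has probability $1/2$; if $u_i$ is the $M^*$-partner of $v$, it always occurs. Hence
\[
\mathbb{E}[\deg_{\widetilde H}(v)] \;=\; \frac{\deg_H(v)}{2} + \frac{m_{j^*}}{2},
\]
where $m_{j^*}\in[0,\beta]$ is the multiplicity contributed by $v$'s $M^*$-partner (zero if $v$ is unmatched or has no $H$-edge to its partner). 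This yields the claimed mean up to an additive $O(\beta)$.

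The last step is concentration. Once $v$'s side is fixed, the randomness that influences $\deg_{\widetilde H}(v)$ splits into independent pieces indexed by the $M^*$-pairing restricted to $\{u_1,\dots,u_\ell\}$: a neighbor $u_i$ not matched inside this set contributes $m_i\cdot \mathbf{1}[u_i\text{ opposite }v]$ with a fresh coin, while two neighbors $u_i,u_j$ matched to each other contribute exactly one of $m_i$ or $m_j$ depending on a single coin (they must occupy opposite sides). Each independent summand is bounded by $\beta$, and the sum of squared ranges is at most $\sum_i m_i^2 \le (\max_i m_i)\sum_i m_i \le \beta^2$. Hoeffding's inequality gives
\[
\Pr\!\left[\,\bigl|\deg_{\widetilde H}(v)-\mathbb{E}[\deg_{\widetilde H}(v)]\bigr| > \beta\sqrt{C\log n}\,\right]\le n^{-\Omega(C)},
\]
and a union bound over $v\in V$ establishes $\deg_{\widetilde H}(v)=\deg_H(v)/2\pm O(\beta\sqrt{\log n})$ simultaneously for all $v$ with high probability.

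The main obstacle I anticipate is keeping the dependence structure clean: the random sides of different $u_i$'s are only independent after explicitly identifying the $M^*$-matched pairs \emph{inside} the neighborhood of $v$, and one has to verify that a negatively-correlated pair still behaves as a single bounded independent summand (rather than breaking the Hoeffding hypothesis). Once this combinatorial bookkeeping is carried out, the bound $\deg_H(v)\le\beta$ from Equation~\ref{eqn:underfull} does the rest: it controls both the additive mean shift $m_{j^*}/2$ and the Hoeffding deviation $O(\beta\sqrt{\log n})$, matching the informal statement of the lemma.
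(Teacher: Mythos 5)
Your construction---a random bipartition of $V(G)$ whose coin flips are forced to respect $M^*$---is the right high-level idea, and you correctly identify that coins of $M^*$-matched neighbours must be grouped into a single independent block before invoking a Hoeffding-type bound. Note, however, that the paper itself does not prove this statement; it cites the quantitative version, Lemma~\ref{lemma:bipartify} (Lemma~5.7 of~\cite{BDL}), which asserts $|\deg_{\widetilde{H}}(v)-\deg_H(v)/2| \le \delta d$ for all $v \in V(H)$ whenever the degree cap satisfies $d \ge 36\delta^{-2}\log(W/\delta)$, a threshold \emph{independent of $n$}. Your argument --- Hoeffding plus a union bound over the up-to-$nW$ vertices of $\phi(G)$ --- necessarily pays a $\sqrt{\log n}$ factor in the deviation, forcing $d = \Omega(\delta^{-2}\log(nW))$. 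This is not cosmetic: the paper applies the lemma with $\beta = O(\poly(\epsilon^{-1})\log W)$ a constant, and a $\log n$-dependent $\beta$ would inflate the claimed $O(n\log n\log R)$ space. The standard way to remove the $\log n$ is to replace the union bound by a local argument such as the Lov\'{a}sz Local Lemma: the bad event at a copy $v^i$ is determined by the coins of at most $d$ $G$-vertices (its $H$-neighbourhood plus $v$), and each $G$-vertex influences only $O(dW)$ such events, so the dependency degree is $\poly(d,W)$ rather than $nW$, and the LLL condition closes at $d = O(\delta^{-2}\log(W/\delta))$.

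There is also a smaller mismatch in what ``degree'' means. You aggregate over all $W_v$ copies of $v$ and treat the multiplicities $m_j$ (number of $H$-edges from copies of $v$ to copies of $u_j$) as the summands. But the quantitative lemma, and the way it is used in Claim~\ref{claim:relaxed_edcs}, controls $\deg_{\widetilde H}(v^i)$ for each individual copy $v^i \in V(\phi(G))$. For a fixed copy $v^i$, each $G$-neighbour of $v$ contributes at most one $H$-edge to $v^i$, so the relevant sum has at most $\deg_H(v^i)\le d$ terms, each in $\{0,1\}$, rather than weights $m_j$ that can be as large as $W$; correspondingly, the additive bias from the $M^*$-partner is at most $1/2$, not the $O(\beta)$ you claim. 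Working per copy also simplifies the dependency bookkeeping. Once this is fixed, the $\sqrt{\log n}$ issue above is the essential remaining gap.
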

To show that $(H \cup U)\cap \phi(\widetilde{G})$ contains a large matching, we will in fact show that it is an EDCS:  
\begin{defn}[EDCS \cite{BS_EDCSbip}]\label{def:EDCS} Let $G = (V, E)$ be an unweighted graph, and $H = (V, E_H )$ a subgraph of $G$. Given parameters $\beta \geq 2$ and $\lambda < 1$, we say that $H$ is a $(\beta, \lambda)$-EDCS of $G$ if $H$ satisfies the following properties:
	\begin{itemize}
		\item (Property P1:) For all edges $(u, v) \in H$, it holds that $\deg_H(u) + \deg_H(v) \leq \beta.$
		\item (Property P2:) For all edges $(u, v) \in G \setminus H$, it holds that $\deg_H(u) + \deg_H(v) \geq \beta (1 - \lambda)$.
	\end{itemize}
\end{defn}
 The crucial property of EDCS is that it contains a $2/3$-approximate maximum cardinality matching. This was first proved in \cite{BS_EDCSbip} for bipartite graphs and in \cite{BS16} for general graphs. See also Lemma 3.2 in \cite{AB} for a simpler proof with improved parameters. 
\begin{thm}[EDCS contain a $2/3$-approximate matching \cite{AB}]
\label{thm:EDCS}
Let $G$ be an unweighted graph and let $\epsilon < 1/2$ be a parameter. Let $\lambda, \beta$ be parameters with $\lambda \leq \frac{\epsilon}{64}, \beta \geq 8\lambda^{-2} \log (1/\lambda)$. Then, for any $(\beta, \lambda)$-EDCS $H$ of $G$, we have that $\mu(H) \geq (\frac{2}{3}-\epsilon) \mu(G).$ 
\end{thm}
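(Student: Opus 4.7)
The plan is to exhibit a fractional matching $x:E(H)\to[0,1]$ on $H$ of value at least $(1-O(\lambda))\mu(G)$ satisfying the degree constraints $\sum_{e\ni v}x_e\leq 1$ at every vertex, and then convert it to an integer matching by invoking the tight $2/3$ integrality gap of the fractional matching polytope (with only degree constraints) on general graphs.

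First, I would fix a maximum matching $M^*$ of $G$ and split it into present edges $M^*\cap H$ and missing edges $M^*\setminus H$. Initialize $x_e=1$ on every present edge. For each missing edge $(u,v)$, Property~P2 gives $\deg_H(u)+\deg_H(v)\geq\beta(1-\lambda)$, so I would add a ``patch'' of total weight close to $1$ spread across the $H$-edges incident to $u$ and $v$, each such $H$-edge receiving weight approximately $2/\beta$. Summing over all missing edges, the total $x$-value is at least $(1-O(\lambda))|M^*|$.

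Second, I would verify the degree-capacity constraints. Property~P1 gives $\deg_H(v)\leq\beta$, so the naive patch load at $v$ is at most $2\deg_H(v)/\beta\leq 2$. To tighten this to $1+O(\lambda)$, I would orient each missing edge of $M^*$ randomly toward one of its endpoints and place its patch only at that endpoint; a Chernoff bound combined with a union bound over the at most $n$ vertices then keeps every vertex load within a $(1\pm\lambda)$ window, provided $\beta\geq 8\lambda^{-2}\log(1/\lambda)$. A global rescaling by $1-O(\lambda)$ then yields a valid fractional matching of value $\geq(1-O(\lambda))\mu(G)$. Next, I would convert this fractional matching into an integer one. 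By Balinski's theorem, the fractional matching polytope with only degree constraints on a general graph is half-integral, and any extreme point decomposes into matching edges (value $1$) together with vertex-disjoint odd cycles (value $1/2$ on each edge). The rounding loss on an odd cycle of length $2k+1$ is exactly $\tfrac{2k}{2k+1}$, minimized at $k=1$ with ratio $2/3$. Therefore $\mu(H)\geq\tfrac{2}{3}(1-O(\lambda))\mu(G)\geq(\tfrac{2}{3}-\epsilon)\mu(G)$ when $\lambda\leq\epsilon/64$.

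The main obstacle is the load-bounding step in the second paragraph, and it is precisely what dictates the quantitative form of the hypothesis on $\beta$. Without the randomized orientation, adversarial overlaps among missing $M^*$-edges can drive the patch load at a vertex of $H$-degree close to $\beta$ up to nearly $2$, which would destroy the fractional matching bound; the Chernoff concentration recovers a load bounded by $1+O(\lambda)$, but only once $\beta$ is at least logarithmic in $1/\lambda$. Once this step is in place, the remaining arithmetic is routine and feeds directly into the $2/3$ integrality-gap argument.
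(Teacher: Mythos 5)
Your plan rests on constructing a fractional matching on $H$, with only the vertex-degree constraints, of value $(1-O(\lambda))\mu(G)$, and then absorbing the $2/3$ factor via the half-integrality / odd-cycle rounding. The second half of that chain is fine, but the first half cannot be made to work: such a fractional matching does not exist in general. To see this, note that in a bipartite graph the fractional matching LP with only degree constraints has no integrality gap, so your claim would force $\mu(H) \geq (1-O(\lambda))\mu(G)$ for every bipartite $(\beta,\lambda)$-EDCS $H$; but the $2/3$ factor in the theorem is tight already for bipartite instances (this tightness is precisely why the algorithms built on EDCS in this paper and its predecessors top out at $2/3$), so one can make $\mu(H)$, and hence the LP value, approach $\tfrac{2}{3}\mu(G)$. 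Concretely, the load-bounding step is where your construction breaks. Take a vertex $v$ matched by $M^*$ to $m_v$ with $(v,m_v)\in H$: you already place $x_{(v,m_v)}=1$, and each of the remaining up-to-$\beta-2$ $H$-neighbors $w$ of $v$ whose $M^*$-partner is missing contributes roughly $2/\beta$ to the edge $(v,w)$ whenever the orientation points toward $w$. The expected total at $v$ is therefore about $1 + (\deg_H(v)-1)/\beta$, which can be close to $2$, not $1+O(\lambda)$. The Chernoff/union-bound step controls fluctuations around this mean, but the mean itself is nearly $2$, so no amount of concentration rescues the $(1\pm\lambda)$ window. (A symmetric problem arises when $(v,m_v)$ is missing and oriented toward $v$: the local patch alone already contributes up to $2\deg_H(v)/\beta \approx 2$.)

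The paper does not prove this theorem itself; it cites Assadi--Bernstein \cite{AB}. Their argument is superficially similar (fractional matching plus blossom rounding via something like Proposition~\ref{prop:blossom}) but structurally different in exactly the place where your sketch fails: they construct a fractional matching supported on $H$ whose \emph{value} is already only $(2/3 - O(\lambda))\mu(G)$ --- the $2/3$ loss is built into the accounting of how much patch weight each missing $M^*$-edge can contribute once the degree-capacity constraints at $M^*$-matched vertices are respected --- and the key additional property is that every edge outside $M^*$ carries weight $O(1/\beta)$, so the odd-set/blossom inequalities are satisfied up to an $O(\lambda)$ slack and the rounding loses only $O(\epsilon)$, not an extra factor $1/3$. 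Your proposal effectively tries to charge the $2/3$ twice (once to the patch value, once to the rounding), but the value-$\approx\mu(G)$ fractional matching that would make the rounding step the sole source of loss simply isn't achievable.
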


Now consider the weighted input graph $G$. Fix a maximum weight matching $M^*$ in $G$ and let $H$ be the graph computed by Phase 1 of Bernstein's algorithm on input $\phi(G)$.  Let $\widetilde{G} \subseteq G$ be the bipartite subgraph from Lemma \ref{lemma:bipartifysimple}. Ideally, we would like to show that $(H \cup U) \cap \phi(\widetilde{G})$ is an EDCS. However, this is not true in general, since the degrees to $U$ can be arbitrarily large (consider for example the case when $U$ is a star, see Figure \ref{fig:edcsrefolding} for an illustration), so $\deg_{(H \cup U) \cap \phi(\widetilde{G})}$ cannot be upper-bounded by a constant. Instead, we will sparsify $U$, so that its contribution to the degrees becomes insignificant. Let $\widetilde{H} = H \cap \phi(\widetilde{G})$ and let $\widetilde{U} = U \cap \phi(M^*)$ (see Figure \ref{fig:edcsrefolding}). This idea is similar to Bernstein's original analysis \cite{Bernstein}, except that now we perform this sparsification in the unfolded and ``bipartified" graph. 

\begin{figure}[H]
\centering
\resizebox{\linewidth}{!}{%
\usetikzlibrary{arrows}

\begin{tikzpicture}
%nortwest quadrant: 
\node[align = center] at (0,11.2) {Weighted}; 
\node[align = center] at (-3.5 ,9) {Non-bipartite}; 
\node (G) at (0,10.6) {$G$};
\node (text_v) at (-1,10.0) {\footnotesize High-degree vertex affecting the edge-degrees};
\draw [->] (text_v) to [bend right=25] (-0.3,9.5); 
\begin{scope}[every node/.style={draw, circle, fill=black,inner sep=1pt}]
\node (a1) at (-0.2,9.4){};
\node (a2) at (0.6,9.4){}; 
\node (a3) at (1.4,9.4){}; 
\node (b1) at (-0.2,8.6){};
\node (b2) at (0.6,8.6){};
\node (b3) at (1.4,8.6){}; 
\node (b4) at (-0.6,8.6){};
\node (b5) at (-1.0,8.6){};
\node (b6) at (-1.4,8.6){}; 
\end{scope}
\draw[black, line width = 1pt, ] (a1) -- (a2); 
\draw[black, line width = 1pt, ] (a2) -- (a3); 
\draw[black, line width = 1pt, ] (b1) -- (b2); 
\draw[black, line width = 1pt, ] (b2) -- (b3); 
\draw[black, line width = 1pt, ] (a1) -- (b2); 
\draw[black, line width = 1pt, ] (a1) -- (b3); 
\draw[black, line width = 1pt,] (a1) -- (b4); 
\draw[black, line width = 1pt, ] (a1) -- (b5); 
\draw[black, line width = 1pt, ] (a1) -- (b6);
\draw[red, line width = 1pt,] (a1) -- (b1); 
\draw[red, line width = 1pt, ] (a2) -- (b2); 
\draw[red, line width = 1pt, ] (a3) -- (b3);
\node[red] (M) at (2,9) {$M^*$}; 

%southwest quadrant: 
\node[align = center] at (-3.5,5) {Bipartite}; 
\draw[->] (0,8)--(0,7)  node[midway, right, font = \small]{Bipartify};
\node (Gtilde) at (0,6.6) {$\tilde{G}$};
\begin{scope}[every node/.style={draw, circle, fill=black,inner sep=1pt}]
\node (c1) at (-0.2,5.4){};
\node (c2) at (0.6,5.4){}; 
\node (c3) at (1.4,5.4){}; 
\node (d1) at (-0.2,4.6){};
\node (d2) at (0.6,4.6){};
\node (d3) at (1.4,4.6){}; 
\node (d4) at (-0.6,4.6){};
\node (d5) at (-1.0,4.6){};
\node (d6) at (-1.4,4.6){}; 
\end{scope}
\draw[black, line width = 1pt, ] (c1) -- (d2); 
\draw[black, line width = 1pt, ] (c1) -- (d3); 
\draw[black, line width = 1pt,] (c1) -- (d4); 
\draw[black, line width = 1pt, ] (c1) -- (d5); 
\draw[black, line width = 1pt, ] (c1) -- (d6);
\draw[red, line width = 1pt,] (c1) -- (d1); 
\draw[red, line width = 1pt, ] (c2) -- (d2); 
\draw[red, line width = 1pt, ] (c3) -- (d3);
\node[red] (M) at (2,5) {$M^*$}; 
%move all by -3.5 in the 2d coordinate 
%northeast quadrant: 
\node[align = center] at (8.5,11.2) {Unweighted}; 
\draw[->] (3,9)--(5.5,9); 
\node[font=\small, align = center] (text) at (4.25, 9.7) {Unfold and run \\ Bernstein's algorithm};
\node (phiG) at (8.5,10.6) {$\phi(G)$};
\node (text_u) at (8.5,10.0) {\footnotesize Underfull edges are marked blue};
\draw [->] (text_u) to [bend right=25] (7.8,9.4); 
\begin{scope}[every node/.style={draw, circle, fill=black,inner sep=1pt}]
\node (e1) at (8.3,9.4){};
\node (e2) at (9.1,9.4){}; 
\node (e3) at (9.9,9.4){}; 
\node (f1) at (8.3,8.6){};
\node (f2) at (9.1,8.6){};
\node (f3) at (9.9,8.6){}; 
\node (f4) at (7.9,8.6){};
\node (f5) at (7.5,8.6){};
\node (f6) at (7.1,8.6){}; 
\end{scope}
\draw[blue!30, line width = 5pt,-round cap] (e1)--(f1); 
\draw[blue!30,line width = 5pt, -round cap] (e1)--(f6); 
\draw[blue!30,line width = 5pt, -round cap] (e1)--(f5); 
\draw[blue!30,line width = 5pt, -round cap] (e1)--(f4); 
\draw[black, line width = 1pt, -round cap] (e1) -- (e2); 
\draw[black, line width = 1pt] (e2) -- (e3); 
\draw[black, line width = 1pt, ] (f1) -- (f2); 
\draw[black, line width = 1pt, ] (f2) -- (f3); 
\draw[black, line width = 1pt, ] (e1) -- (f2); 
\draw[black, line width = 1pt, ] (e1) -- (f3); 
\draw[black, line width = 1pt,]  (e1) -- (f4); 
\draw[black, line width = 1pt, ] (e1) -- (f5); 
\draw[black, line width = 1pt, ] (e1) -- (f6);
\draw[red, line width = 1pt,]   (e1) -- (f1); 
\draw[red, line width = 1pt, ] (e2) -- (f2); 
\draw[red, line width = 1pt, ] (e3) -- (f3);
\node[red]  at (10.7,9) {$\phi(M^*)$}; 
\node[blue] (U) at (7.7,8.2) {$U$};

%southeast quadrant
\draw[->] (8,7.8)--(8,7); 
\node[font=\small, align = center] (text2) at (9.5,7.5){Bipartify and \\ Sparsify};
\draw[->] (3,5)--(5.5,5); 
\node[font=\small, align = center] (text3) at (4.25, 5.7) {Unfold and\\ Sparsify};
\node at (8.5,6.6) {$\tilde{H} \cup \phi(M^*) \subseteq \phi(\tilde{G})$};
\node (text_res) at (8.5,6) {\footnotesize Restrict $U$ in order to reduce edge-degrees};
\draw [->] (text_res) to [bend right=25] (7.8,5.4); 
\begin{scope}[every node/.style={draw, circle, fill=black,inner sep=1pt}]
\node (g1) at (8.3,5.4){};
\node (g2) at (9.1,5.4){}; 
\node (g3) at (9.9,5.4){}; 
\node (h1) at (8.3,4.6){};
\node (h2) at (9.1,4.6){};
\node (h3) at (9.9,4.6){}; 
\node (h4) at (7.9,4.6){};
\node (h5) at (7.5,4.6){};
\node (h6) at (7.1,4.6){}; 
\end{scope}
\draw[blue!30,line width = 5pt,-round cap] (g1)--(h1); 
\draw[black, line width = 1pt, ] (g1) -- (h2); 
\draw[black, line width = 1pt, ] (g1) -- (h3); 
\draw[red, line width = 1pt,] (g1) -- (h1); 
\draw[red, line width = 1pt, ] (g2) -- (h2); 
\draw[red, line width = 1pt, ] (g3) -- (h3);
\node[red] at (10.7,5) {$\phi(M^*)$}; 
\node[blue] (U) at (8.5,4) {$\tilde{U} = U \cap \phi(M^*)$}; 
\end{tikzpicture}
}
\caption{Illustration of the reduction to the bipartite case. We show that $\widetilde{H} \cup \widetilde{U}$ contains a matching of size at least $\left(\frac{2}{3}-\epsilon \right) \mu_w(G)$. Since $\widetilde{G}$ is bipartite, we can refold $\widetilde{H} \cup \widetilde{U}$ without reducing the matching size.}
\label{fig:edcsrefolding}
\end{figure}
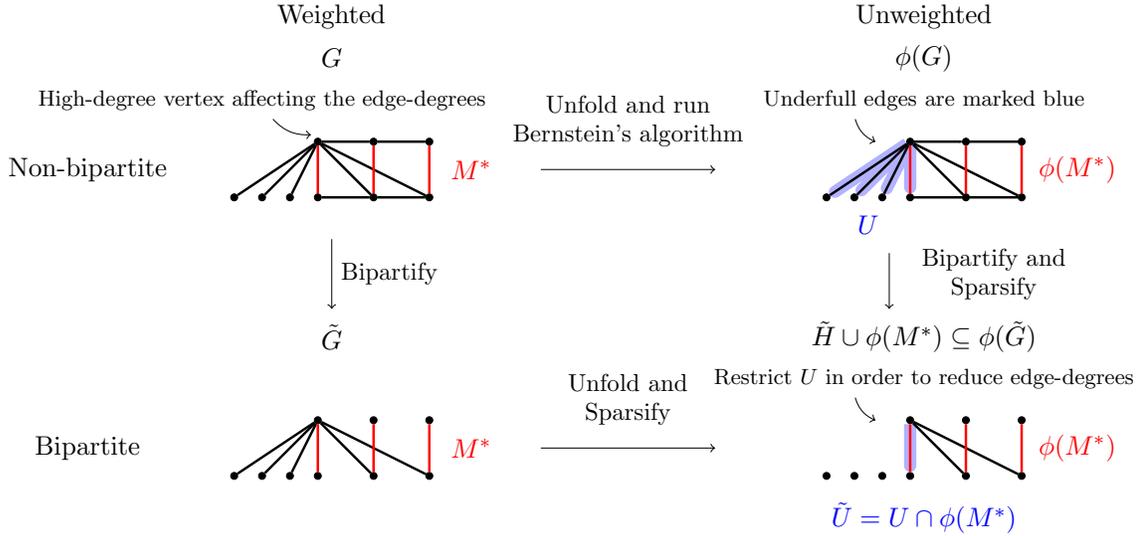

Now $\widetilde{U}$ is a matching, so for all $v \in V$, we have $\deg_{\widetilde{H} \cup \widetilde{U}}(v) \in \{ \deg_{\widetilde{H}}(v), \deg_{\widetilde{H}}(v)+1\}$. So
$$\deg_{\widetilde{H} \cup \widetilde{U}}(v) \approx \deg_{\widetilde{H}}(v) \approx \frac{1}{2} \deg_{H}(v).$$
In particular, 
$$\forall (u,v) \in \widetilde{H} \cup \widetilde{U}, \qquad \deg_{\widetilde{H} \cup \widetilde{U}}(u) + \deg_{\widetilde{H} \cup \widetilde{U}}(v) \approx \frac{1}{2} \deg_{H}(u) + \frac{1}{2} \deg_{H}(v) \leq \frac{\beta}{2},$$
and 
$$\forall (u,v) \in \phi(M^*) \setminus (\widetilde{H} \cup \widetilde{U}), \qquad \deg_{\widetilde{H} \cup \widetilde{U}}(u) + \deg_{\widetilde{H} \cup \widetilde{U}}(v) \approx \frac{1}{2} \deg_{H}(u) + \frac{1}{2} \deg_{H}(v) \geq \frac{\beta}{2} -1 .$$
Setting  $X= \widetilde{H} \cup \widetilde{U}$, $\beta' \approx \frac{\beta}{2}$, and $\lambda'$ to be a sufficiently small constant, we can now apply Theorem \ref{thm:EDCS}  to the graph  $\widetilde{H} \cup \phi(M^*)$, and obtain
$$ \mu(\widetilde{H} \cup \widetilde{U}) \geq \left(2/3-\epsilon \right) \mu(\widetilde{H} \cup \phi(M^*)) \geq  \left(2/3-\epsilon \right) \mu(\phi(M^*)). $$
Since $\widetilde{H} \cup\widetilde{U} \subseteq \phi(\widetilde{G})$ and since $\widetilde{G}$ is bipartite, we can apply Lemma \ref{lemma:matchsize} to get the required result
$$\mu_w(\mathcal{R}(H \cup U))\geq \mu_w(\mathcal{R}(\widetilde{H} \cup\widetilde{U})) = \mu(\widetilde{H} \cup\widetilde{U}) \geq   \left(2/3-\epsilon \right) \mu(\phi(M^*)) =  \left(2/3-\epsilon \right) \mu_w(G). $$

In the rest of the paper, we will present the full analysis. In Section \ref{sec:stream}, we formally present the algorithm and analysis for random-order streams. In Section \ref{sec:Comm}, we prove Theorem \ref{thm:main_2party} and Theorem \ref{thm:main_kparty}. 

\section{Notation and Preliminaries}

Given a graph $G = (V,E)$, we will use $n:=|V|$ to denote the number of vertices and $m:=|E|$ to denote the number of edges in $G$. If $G$ is weighted, then we will use $w: E \rightarrow \mathbb{R}^+$ to denote the edge weights, and $R:= \max_{e \in E} w_e/\min_{e \in E} w_e $ to denote the ratio between the heaviest and the lightest edge in $G$. We use $\mu(G)$ to denote the size of the maximum cardinality matching in $G$, and $\mu_w(G)$ to denote the weight of the maximum weight matching in $G$.

Given $\epsilon > 0$, let $\gamma_\epsilon := (1/\epsilon)^{\Theta(1/\epsilon)}$ be a large constant which will be incurred in the space usage of our algorithms (instead of a dependence on the maximum weight of the graph). Note that for any fixed $\epsilon$, we have $\gamma_{\epsilon} = O(1)$. 

\subsection{Models}
\begin{description}
    \item[Random-order streams] In the random-order stream model, the weighted edges of the input graph arrive one by one in an order chosen uniformly at random from all possible orderings. The algorithm makes a single pass over the stream and must output an approximate maximum weight matching at the end of the stream. 

    \item[Robust communication model] In the $k$-party one-way robust communication model, each weighted edge of the input graph is assigned independently and uniformly at random to one of the $k$ parties. The $i$th party is supplied with its assigned edges and a message $m_{i-1}$ from the $(i-1)$st party, and must send a message $m_i$ to the $(i+1)$st party. The $k$th party must output a valid weighted matching of the input graph. The communication complexity of a protocol is defined to be $\max_{1 \leq i \leq k} |m_i|$, where $|m_i|$ is the number of words in message $m_i$.\\
    In the case when $k=2$, we refer to the first party as Alice and to the second party as Bob. 
\end{description}
\subsection{Graph Unfolding}
In addition to the facts already stated in Section \ref{sec:tech}, we will need the following: 
\begin{thm}[Unfolding preserves matching size in bipartite graphs \cite{KLST}]\label{thm:refolding}
If G is a weighted bipartite graph, then $\mu_w(G) = \mu(\phi(G))$.
\end{thm}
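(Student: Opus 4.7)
The plan is to prove the equality by showing both inequalities, with the harder direction following almost immediately from Lemma~\ref{lemma:matchsize}.

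For the inequality $\mu_w(G) \leq \mu(\phi(G))$, I would take a maximum weight matching $M$ in $G$ and unfold it edge-by-edge. Concretely, for each $e = (u,v) \in M$ of weight $w_e$, take all $w_e$ edges $\{(u^i, v^{w_e - i + 1})\}_{i \in [w_e]}$ of $\phi(e)$. Within a single $\phi(e)$, the copies of $u$ used are the distinct superscripts $1, \ldots, w_e$ and the copies of $v$ used are the distinct superscripts $w_e, w_e - 1, \ldots, 1$, so these edges form a matching. Across distinct edges $e, e' \in M$, vertex-disjointness of $M$ in $G$ guarantees that no vertex in $\phi(G)$ is shared. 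Hence $\bigcup_{e \in M} \phi(e)$ is a matching in $\phi(G)$ of size $\sum_{e \in M} w_e = \mu_w(G)$, giving $\mu(\phi(G)) \geq \mu_w(G)$.

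For the reverse inequality $\mu(\phi(G)) \leq \mu_w(G)$, I would invoke Lemma~\ref{lemma:matchsize} directly. Let $M'$ be a maximum matching in $\phi(G)$, so $|M'| = \mu(\phi(G))$. Since $M' \subseteq \phi(G)$ and $G$ is bipartite, Lemma~\ref{lemma:matchsize} applies to $H := M'$, yielding
\[
\mu_w(\mathcal{R}(M')) \;\geq\; \mu(M') \;=\; \mu(\phi(G)).
\]
Because $\mathcal{R}(M') \subseteq G$, any weighted matching in $\mathcal{R}(M')$ is a weighted matching in $G$, so $\mu_w(G) \geq \mu_w(\mathcal{R}(M')) \geq \mu(\phi(G))$.

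Combining the two directions gives $\mu_w(G) = \mu(\phi(G))$. The main subtlety, and the place where bipartiteness is essential, is in the second inequality: bipartiteness is precisely what makes Lemma~\ref{lemma:matchsize} valid, and the triangle example in Figure~\ref{fig:nonbiprefolding} shows that $\mu_w(\mathcal{R}(H)) \geq \mu(H)$ can fail in non-bipartite graphs. The first inequality, by contrast, does not require bipartiteness. Everything else is a routine bookkeeping check that unfolding a matching produces a matching, which is why I expect the proof to be short.
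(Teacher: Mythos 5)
The paper states this theorem as a cited result from \cite{KLST} and gives no proof of it, so there is no internal argument to compare against. Your proof of $\mu_w(G) \leq \mu(\phi(G))$ is correct and is the standard one: unfolding a maximum weight matching produces a matching of the same total size, because each $\phi(e)$ is itself a perfect matching on the used copies and disjointness of the edges of $M$ in $G$ propagates to disjointness of the corresponding copies in $\phi(G)$. You are also right that bipartiteness plays no role in this direction.

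The concern is your treatment of the reverse inequality. Deducing $\mu(\phi(G)) \leq \mu_w(G)$ from Lemma~\ref{lemma:matchsize} is a legal move inside this paper's ecosystem, but it is not self-contained and risks circularity: the cleanest proof of Lemma~\ref{lemma:matchsize} \emph{itself} applies Theorem~\ref{thm:refolding} to the weighted graph $\mathcal{R}(H)$, observing $H \subseteq \phi(\mathcal{R}(H))$ so that $\mu_w(\mathcal{R}(H)) = \mu(\phi(\mathcal{R}(H))) \geq \mu(H)$. If you want a first-principles proof, the direction $\mu(\phi(G)) \leq \mu_w(G)$ is exactly where K\"onig duality (and hence bipartiteness) enters: take an optimal integral dual $y$ for the weighted-matching LP on $G$, so $y_u + y_v \geq w_e$ for every $e = (u,v)$ and WLOG $y_u \leq W_u$, and form $C := \{u^i : i \leq y_u\} \subseteq V(\phi(G))$. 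If some edge $(u^i, v^{w_e-i+1}) \in \phi(e)$ has $i > y_u$, then $i > w_e - y_v$, so $w_e - i + 1 \leq y_v$ and $v^{w_e - i + 1} \in C$; thus $C$ is a vertex cover of $\phi(G)$ with $|C| = \sum_u y_u = \mu_w(G)$, and K\"onig gives $\mu(\phi(G)) \leq |C|$. Both your route and this one locate bipartiteness in the same place, but yours outsources it to the cited lemma, while the \cite{KLST} argument handles it directly and is the one you would want to write if Lemma~\ref{lemma:matchsize} were not already available.
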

\begin{defn}[Refolding approximate \cite{BDL}]\label{def:alphaapprox}
Let $G$ be a weighted graph. A subgraph $H \subseteq \phi(G)$ is $\alpha$-refolding-approximate if
$\mu_w(\mathcal{R}(H)) \geq \alpha \cdot \mu_w(G).$
\end{defn}

\subsection{EDCS}
We will use the following guarantee which holds for a certain relaxed notion of EDCS. 
\begin{defn}[Bounded edge-degree \cite{Bernstein}]\label{def:bdddeg}
    	We say that a graph $H$ has bounded edge-degree $\beta$ if for every edge $(u, v) \in H$, it holds that $\deg_H(u) + \deg_H(v) \leq \beta$.
\end{defn}
\begin{defn}[Underfull edge \cite{Bernstein}]\label{def:underfull}
    Let $G$ be any unweighted graph, and let $H$ be a subgraph of $G$ with bounded edge-degree $\beta$. Given a parameter $\lambda < 1$, we say that an edge $(u, v) \in G\setminus H$ is $(G, H, \beta, \lambda)$-underfull if $\deg_H(u) + \deg_H(v) < \beta(1 - \lambda)$.
\end{defn}
\begin{lemma}[Relaxed EDCS contain a $2/3$-approximate matching \cite{Bernstein}]\label{lemma:bernstein4.1}
Let $\epsilon < \frac{1}{2}$ be a parameter, and let $\lambda, \beta$ be parameters with $\lambda \leq \frac{\epsilon}{128}$, $\beta \geq 16\lambda^{-2} \log(1/\lambda)$. Consider any unweighted graph $G$ and any subgraph $H$ with bounded edge-degree $\beta$. Let $U$ contain all edges in $G \setminus H$ that are $(G,H,\beta, \lambda)$-underfull. Then $\mu(H \cup U) \geq (2/3 - \epsilon)\mu(G).$
\end{lemma}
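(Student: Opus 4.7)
The plan is to reduce Lemma~\ref{lemma:bernstein4.1} to Theorem~\ref{thm:EDCS} by sparsifying $U$ so that $H$ together with this sparsification forms a genuine EDCS of a carefully chosen subgraph. The difficulty is that $U$ can have arbitrarily large maximum degree (as noted in the technical overview, $U$ could be a star), so $H \cup U$ does not directly satisfy Property~P1 of the EDCS definition, and hence Theorem~\ref{thm:EDCS} cannot be applied to $H \cup U$ as-is.

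Fix a maximum matching $M^*$ in $G$, so $|M^*| = \mu(G)$. Define $U^* := U \cap M^*$, which is a matching, and let $M^+ := M^* \setminus (H \cup U)$, which consists of $M^*$-edges that are neither in $H$ nor underfull. Set $G^* := H \cup U^* \cup M^+$. Since $M^* = (M^* \cap H) \cup U^* \cup M^+ \subseteq G^*$, we have $\mu(G^*) \geq |M^*| = \mu(G)$. I will then show that $H \cup U^*$ is a $(\beta', \lambda')$-EDCS of $G^*$ for $\beta' := \beta + 2$ and $\lambda' := (2 + \beta\lambda)/(\beta+2)$. For Property~P1: since $U^*$ is a matching, adding it to $H$ raises each vertex's degree by at most one, so for any $(u,v) \in H \cup U^*$ we get $\deg_{H \cup U^*}(u) + \deg_{H \cup U^*}(v) \leq (\deg_H(u)+1) + (\deg_H(v)+1) \leq \beta + 2 = \beta'$. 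For Property~P2: every edge $(u,v) \in G^* \setminus (H \cup U^*) \subseteq M^+$ is non-underfull, so $\deg_{H \cup U^*}(u) + \deg_{H \cup U^*}(v) \geq \deg_H(u) + \deg_H(v) \geq \beta(1-\lambda) = \beta'(1-\lambda')$ by our choice of $\lambda'$.

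Applying Theorem~\ref{thm:EDCS} to the $(\beta',\lambda')$-EDCS $H \cup U^*$ of $G^*$ yields $\mu(H \cup U^*) \geq (2/3 - \epsilon) \mu(G^*) \geq (2/3 - \epsilon) \mu(G)$. Since $H \cup U^* \subseteq H \cup U$, this gives the desired bound $\mu(H \cup U) \geq (2/3 - \epsilon)\mu(G)$.

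The main (routine) obstacle is checking that the chosen parameters $(\beta', \lambda')$ satisfy the hypotheses of Theorem~\ref{thm:EDCS} with approximation parameter $\epsilon$. One verifies $\lambda' \leq 2\lambda \leq \epsilon/64$ because the additive perturbation $2/(\beta+2)$ is much smaller than $\lambda$ under the assumption $\beta \geq 16\lambda^{-2}\log(1/\lambda)$; and $\beta' = \beta + 2 \geq 8(\lambda')^{-2}\log(1/\lambda')$ because $\lambda' \geq \lambda$ makes the right-hand side only smaller. The factor-of-2 slack in the hypotheses of Lemma~\ref{lemma:bernstein4.1} (requiring $\lambda \leq \epsilon/128$ and $\beta \geq 16\lambda^{-2}\log(1/\lambda)$, rather than $\epsilon/64$ and $8\lambda^{-2}\log(1/\lambda)$ as in Theorem~\ref{thm:EDCS}) exists precisely to absorb these additive losses introduced by the sparsification step.
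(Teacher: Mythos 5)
Your proof is correct and takes the standard route. The paper cites Lemma~\ref{lemma:bernstein4.1} directly from Bernstein~\cite{Bernstein} rather than re-proving it, but your sparsification of $U$ to the matching $U^* = U \cap M^*$ (so that $H \cup U^*$ becomes a genuine $(\beta+2,\lambda')$-EDCS of the small graph $G^* \supseteq M^*$, after which Theorem~\ref{thm:EDCS} finishes) is exactly Bernstein's original argument, and it is the same mechanism this paper re-deploys in the proof of Lemma~\ref{lemma:refolding_approx} with $\widetilde{U} = U \cap \phi(M^*)$. Your parameter bookkeeping is sound: $\lambda \le \lambda' \le 2\lambda$ follows from $\lambda\beta \ge 2$, which the stated hypotheses guarantee, and the factor-of-two slack in $\lambda \le \epsilon/128$ is indeed what absorbs the additive $+2$ degree perturbation.
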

\subsection{Approximate Blossom Inequality}
The following fact will allow us to round fractional weighted matchings. We refer readers to Lemma 2.2 in \cite{blossom} for a proof sketch, and to Section 25.2 of \cite{Schrijver} for a detailed discussion on blossom inequalities. 
\begin{proposition}[Folklore]\label{prop:blossom}
	Let G be a weighted graph, and let $x$ be a fractional matching on $G$. Suppose that for every set $S \subseteq V$ with $|S| \leq \frac{1}{\epsilon}$, we have
	$$\sum_{e \in  G[S]} x_e  \leq  \left\lfloor\frac{|S|}{2} \right \rfloor.$$ Then $\mu_w(G) \geq (1-\epsilon)\sum_e w_e x_e.$
\end{proposition}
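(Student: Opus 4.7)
The plan is to scale $x$ down by a factor of $(1-\epsilon)$ and argue that the resulting vector $x' := (1-\epsilon)x$ satisfies the full family of blossom inequalities, so that Edmonds' matching polytope theorem places $x'$ in the convex hull of integral matchings. If this succeeds, then $\mu_w(G) \geq \sum_e w_e x'_e = (1-\epsilon)\sum_e w_e x_e$, which is exactly the desired bound. The vertex constraints $\sum_{e \ni v} x'_e \leq 1-\epsilon \leq 1$ and non-negativity transfer trivially from $x$ to $x'$ coordinate-wise, so the only nontrivial step is to verify the blossom inequality $\sum_{e \in G[S]} x'_e \leq (|S|-1)/2$ for every odd $S \subseteq V$, including the large sets not covered by the hypothesis.

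I would split this verification into two cases matched to the threshold $|S| = 1/\epsilon$ appearing in the hypothesis. For small sets $|S| \leq 1/\epsilon$ the hypothesis directly provides $\sum_{e \in G[S]} x_e \leq \lfloor |S|/2 \rfloor$, and this survives multiplication by $(1-\epsilon) < 1$. For large sets $|S| > 1/\epsilon$ the hypothesis need not apply, but I can instead use the fractional matching property of $x$ itself: summing the vertex constraints over $v \in S$ and dividing by two gives $\sum_{e \in G[S]} x_e \leq |S|/2$, and after scaling this becomes $(1-\epsilon)|S|/2$, which is at most $(|S|-1)/2$ precisely because $\epsilon|S| \geq 1$ in this regime. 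Since we only need the inequality for odd $|S|$, $(|S|-1)/2$ equals $\lfloor |S|/2 \rfloor$ and we are done.

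I do not expect any real obstacle. The threshold $1/\epsilon$ in the hypothesis is calibrated so that the two cases meet exactly at the crossover where the naive vertex-constraint bound $|S|/2$ becomes too weak by less than $\epsilon |S|/2$. The only things to be careful about are restricting attention to odd $|S|$ when invoking Edmonds' description of the matching polytope (for even $|S|$, the vertex constraints already suffice and no blossom constraint is imposed), and citing Edmonds' theorem for the integer matching polytope to conclude $\sum_e w_e x'_e \leq \mu_w(G)$.
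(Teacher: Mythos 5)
Your proof is correct and is exactly the standard folklore argument the paper defers to (it cites a proof sketch in the reference labeled \emph{blossom} and Schrijver's book rather than giving its own proof). Scaling by $(1-\epsilon)$, verifying the odd-set constraints by the hypothesis when $|S|\le 1/\epsilon$ and by the summed degree constraints (which give $\sum_{e\in G[S]}x_e\le |S|/2$, hence $(1-\epsilon)|S|/2\le(|S|-1)/2$ once $\epsilon|S|\ge 1$) when $|S|>1/\epsilon$, and then invoking Edmonds' matching polytope theorem, is precisely the intended reasoning; your remarks about restricting to odd $S$ and the crossover at $|S|=1/\epsilon$ are the right points of care.
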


\subsection{Concentration Inequality}
We will use the Chernoff bound for negatively associated random variables (see e.g. the primer in \cite{NA}).

\begin{thm}\label{thm:chernoff_neg}
    Let $X_1, \dots X_n$ be negatively associated random variables taking values in $[0,1].$ Let $X:= \sum X_i$ and let $\mu:= \mathbb{E}[X]$. Then, for any $0 < \delta < 1$, we have 
    $$\Pr[X \leq \mu (1-\delta)] \leq \exp{\Bigr(\frac{- \mu \delta^2}{2}\Bigr)}.$$
   % and 
    % $$\Pr[X \geq \mu (1+\delta)] \leq \exp{\Bigr(\frac{- \mu \delta^2}{3} \Bigr)}.$$
\end{thm}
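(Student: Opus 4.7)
The plan is to reproduce the classical exponential-moment (Chernoff) argument, with the single modification of invoking negative association in place of independence at the moment-generating-function step. Concretely, I would fix a parameter $t > 0$ to be optimized later, and apply Markov's inequality to the random variable $e^{-tX}$:
$$\Pr[X \leq \mu(1-\delta)] = \Pr\bigl[e^{-tX} \geq e^{-t\mu(1-\delta)}\bigr] \leq e^{t\mu(1-\delta)} \mathbb{E}\bigl[e^{-tX}\bigr] = e^{t\mu(1-\delta)} \mathbb{E}\Bigl[\prod_i e^{-tX_i}\Bigr].$$
Because the $X_i$ are negatively associated and each function $x \mapsto e^{-tx}$ is non-increasing (and hence they are all monotone in the same direction), the defining closure property of negative association yields
$$\mathbb{E}\Bigl[\prod_i e^{-tX_i}\Bigr] \leq \prod_i \mathbb{E}[e^{-tX_i}].$$

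Next, I would bound each factor using convexity of $e^{-tx}$ on $[0,1]$: the chord inequality gives $e^{-tx} \leq 1 - x(1 - e^{-t})$, so, writing $p_i := \mathbb{E}[X_i]$ and applying $1-y \leq e^{-y}$,
$$\mathbb{E}[e^{-tX_i}] \leq 1 - p_i(1-e^{-t}) \leq \exp\bigl(-p_i(1-e^{-t})\bigr).$$
Taking the product over $i$ and using $\sum_i p_i = \mu$, this gives $\mathbb{E}[e^{-tX}] \leq \exp(-\mu(1-e^{-t}))$, and combining with the Markov step yields
$$\Pr[X \leq \mu(1-\delta)] \leq \exp\bigl(t\mu(1-\delta) - \mu(1-e^{-t})\bigr).$$

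Finally, I would optimize by setting $t = \ln\bigl(1/(1-\delta)\bigr) > 0$, so that $e^{-t} = 1-\delta$ and the exponent becomes $\mu\bigl[(1-\delta)\ln(1/(1-\delta)) - \delta\bigr]$. To finish, I would check the calculus inequality
$$(1-\delta)\ln\bigl(1/(1-\delta)\bigr) - \delta \leq -\delta^2/2 \qquad \text{for } 0 < \delta < 1,$$
by letting $g(\delta) := \delta - (1-\delta)\ln(1/(1-\delta)) - \delta^2/2$, noting $g(0) = 0$, and computing $g'(\delta) = -\ln(1-\delta) - \delta$, which is non-negative by the Taylor expansion $-\ln(1-\delta) = \delta + \delta^2/2 + \cdots \geq \delta$.

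The only step that is not a routine Chernoff calculation is the appeal to negative association. The important subtlety here is that one must apply the closure property to a family of functions that are all monotone in the same direction (all non-increasing in our case); this is precisely the guarantee provided by the standard NA machinery (see the primer cited as \cite{NA}), so this should pose no real obstacle. Once that step is granted, the remainder is the textbook proof of the multiplicative lower-tail Chernoff bound.
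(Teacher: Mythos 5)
The paper itself gives no proof of this theorem: it is cited as a known fact (``see e.g.\ the primer in~\cite{NA}''), so there is no paper-internal proof to compare against. Your argument is the standard lower-tail Chernoff proof with the single non-routine step being the product inequality $\mathbb{E}\bigl[\prod_i e^{-tX_i}\bigr] \leq \prod_i \mathbb{E}\bigl[e^{-tX_i}\bigr]$, which you correctly justify by the closure of negative association under products of functions that are all monotone in the same direction (all non-increasing here), and this is precisely the route taken in the cited reference. All the remaining steps (Markov on $e^{-tX}$, the chord bound $e^{-tx} \leq 1 - x(1 - e^{-t})$ on $[0,1]$, the choice $t = \ln\bigl(1/(1-\delta)\bigr)$, and the elementary inequality $-\ln(1-\delta) \geq \delta$ to clean up the exponent) check out, so the proof is correct and matches the expected argument.
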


\section{$2/3$-Approximation in Random-Order Streams}\label{sec:stream}
In this section, we prove Theorem \ref{thm:main_stream}. In Section \ref{sec:reduction}, we formally describe the reduction from weighted random-order streams to unweighted $b$-batch random-order streams, and we prove its correctness. In Section \ref{section:bip}, we show that Bernstein's $2/3$-approximation algorithm \cite{Bernstein} for random-order streams still works under batch arrivals. Finally, in Section \ref{sec:nonbip}, we show that the obtained weighted random-order streaming algorithm still works for non-bipartite graphs, and we complete the proof of Theorem \ref{thm:main_stream}. 

\subsection{Reduction to Unweighted $b$-Batch Random-Order Streams}\label{sec:reduction}
Gupta and Peng \cite{GP} gave a reduction which allows us to assume that the edge weights are integral and bounded above by a large constant. They originally proved the reduction for the dynamic graph model, however, it also applies to the streaming and one-way communication models (See Theorem 6.1 and Theorem 6.2 in \cite{BDL}). 
\begin{thm} [Reduction to bounded integral weights \cite{GP, BDL}] \label{thm:space} If there is a random-order streaming algorithm $\mathcal{A}$ to compute an $\alpha$-approximate maximum weight matching in graphs with weights in $[W]$ using space $S(n, m, W, \alpha)$, then there exists a random-order streaming algorithm $\mathcal{A}'$ to compute a $(1-\epsilon)\alpha$-approximate maximum weight matching in graphs with weights in $\mathbb{R}^{+}$ using space $O(S(n, m, \gamma_{\epsilon}, \alpha) \log R)$. 

Similarly, if there is a one-way robust communication complexity protocol to compute an $\alpha$-approximate maximum weight matching in graphs with weights in $[W]$ using $C(n,m,W, \alpha)$ words of communication, then there exists a protocol to compute a $(1-\epsilon) \alpha$-approximate maximum weight matching in graphs with weights in $\mathbb{R}^{+}$ using $O(C(n, m, \gamma_{\epsilon}, \alpha) \log R)$ words of communication. 
\end{thm}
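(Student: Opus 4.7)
The plan is to adapt the weight-bucketing reduction of Gupta and Peng \cite{GP} (originally for dynamic streams) to random-order streams; the robust-communication case is analogous. The argument breaks into three ingredients: weight rounding, parallel invocations of $\mathcal{A}$ on weight windows, and a combination step.

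First, round each edge weight $w_e$ down to $\tilde{w}_e := (1+\epsilon)^{\lfloor \log_{1+\epsilon} w_e \rfloor}$. This rounding shrinks the weight of any matching by at most a $(1+\epsilon)$-factor, so it suffices to $\alpha$-approximate the rounded instance. Partition the distinct rounded weights into $g = O(\log R)$ consecutive \emph{windows} of $T := \lceil \log_{1+\epsilon} \gamma_\epsilon \rceil$ weight classes each; since $T = \Theta(1)$ for fixed $\epsilon$, indeed $g = O(\log R)$. Within each window $W_j$ the max/min weight ratio is at most $\gamma_\epsilon$, so after dividing by the window minimum the weights are integers in $[1,\gamma_\epsilon]$.

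Second, for each window $W_j$ run an independent copy of the bounded-weight algorithm $\mathcal{A}$ on the substream of edges whose rounded weight lies in $W_j$, treating these as a $[\gamma_\epsilon]$-weighted instance. Restricting a uniformly random stream (resp.\ a uniformly random edge partition) to a fixed edge subset preserves uniform randomness, so $\mathcal{A}$'s guarantee applies: with high probability the returned matching $M_j$ satisfies $w(M_j) \geq \alpha \cdot \mu_w(G[W_j])$. The total space is $g \cdot S(n,m,\gamma_\epsilon,\alpha) = O(S(n,m,\gamma_\epsilon,\alpha)\log R)$, as claimed. At the end of the stream I store $H := \bigcup_j M_j$ (of size at most $gn/2$, within the budget) and output an offline maximum-weight matching $\hat M$ inside $H$.

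For the approximation analysis, decomposing the optimum as $M^* = \bigsqcup_j M^*_j$ with $M^*_j := M^* \cap E(W_j)$ gives the easy bound $\sum_j w(M_j) \geq \alpha \cdot \mu_w(G)$, since each $M^*_j$ is a matching in $G[W_j]$. The main obstacle is then showing that $\mu_w(H) \geq (1-\epsilon) \sum_j w(M_j)$ despite the $M_j$'s possibly sharing vertices. I expect to resolve this by building a fractional matching $x$ on $H$ that scales each $M_j$'s indicator by a factor depending on its window, chosen so that the per-vertex degree stays at most $1$ while the value $\sum_e w_e x_e$ remains close to $\sum_j w(M_j)$; the geometric weight separation between windows (each heavier window outweighs the lighter ones by a factor $\gamma_\epsilon$) is what makes this balancing possible. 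Proposition~\ref{prop:blossom} then rounds $x$ to an integral matching with only a $(1-\epsilon)$-factor loss, and choosing $\gamma_\epsilon = (1/\epsilon)^{\Theta(1/\epsilon)}$ absorbs both the conflict-resolution slack and the blossom-rounding slack into the final $(1-\epsilon)\alpha$ guarantee. The communication version is obtained by the identical construction, since restriction to any window preserves the random-partition property of the input.
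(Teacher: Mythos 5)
The paper does not prove Theorem~\ref{thm:space} itself; it imports it from Gupta and Peng \cite{GP} via Theorems~6.1--6.2 of \cite{BDL}, so I will compare against that reduction. Your plan has the right overall shape (round to powers of $1+\epsilon$, split the weight range into $O(\log R)$ windows of ratio at most $\gamma_\epsilon$, run $\mathcal{A}$ per window, combine), but it is missing the ingredient that makes the combination step sound: in the actual reduction the windows are separated by multiplicative \emph{gaps}, created by discarding a periodic family of weight levels (over $\Theta(1/\epsilon)$ choices of phase, one of which, by averaging, discards at most an $\epsilon$-fraction of $w(M^*)$; all phases are run in parallel, which costs only a constant factor). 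Your windows are consecutive, and your stated justification — ``each heavier window outweighs the lighter ones by a factor $\gamma_\epsilon$'' — is false where it matters: that ratio holds between the window \emph{maxima}, but the combination needs the \emph{minimum} of a window to dominate the \emph{maximum} of the next lighter window, which for consecutive windows is only a factor $1+\epsilon$.

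This is not merely a presentational gap: the conclusion fails outright. Take $w_1 = \gamma_\epsilon - 1$ and $w_2 = \gamma_\epsilon$, two weights just across a window boundary, and the graph on $\{u,v_1,v_1',v_2,v_2'\}$ with edges $(u,v_1),(v_1,v_1')$ of weight $w_1$ and $(u,v_2),(v_2,v_2')$ of weight $w_2$. Then $\mu_w(G) = w_1 + w_2 \approx 2\gamma_\epsilon$, but each window induces a three-vertex path through $u$, so a perfectly valid (even exactly optimal) $\mathcal{A}$ may return $M_1 = \{(u,v_1)\}$ and $M_2 = \{(u,v_2)\}$. Then $H = M_1 \cup M_2$ is a two-edge star at $u$ and $\mu_w(H) = w_2 \approx \tfrac{1}{2}\mu_w(G)$; no fractional reweighting or blossom rounding applied to $H$ can exceed $\mu_w(H)$, so the claimed $(1-\epsilon)\alpha$ guarantee fails for any $\alpha > 1/2$. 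Once gaps are in place, the natural combination is a greedy heavy-to-light merge of the $M_j$'s, which loses only an $O(\epsilon)$-fraction of $\sum_j w(M_j)$ because every dropped conflicting edge is geometrically dominated by the heavier edge that blocks it; the fractional-matching and Proposition~\ref{prop:blossom} machinery you invoke is then unnecessary. (A smaller point: dividing a window's rounded weights by the window minimum leaves powers of $1+\epsilon$, not integers in $[\gamma_\epsilon]$, so one more rounding step is needed before $\mathcal{A}$ can be invoked as a black box.)
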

We would like to use the unfolding technique to reduce to the unweighted problem. As Bernstein, Dudeja and Langley \cite{BDL} showed, in \emph{adversarially ordered} streams, unfolding immediately gives a reduction for bipartite graphs: Whenever a weighted edge $e \in G$ arrives, we can unfold it and pass the unweighted edges $\phi(e)$ sequentially into an unweighted streaming algorithm while tracking the updates in the weighted stream. In \emph{random-order} streams, there is a subtle issue with this approach. If the edges arrive uniformly at random in $G$, then the arrival order in $\phi(G)$ will not be uniformly at random, but rather there will be batches of edges that necessarily arrive together. To overcome this issue, we consider the $b$-batch random-order stream model, restated below. 
\bbatch*

Graph unfolding naturally gives a reduction from weighted random-order streams to unweighted $b$-batch random-order streams. 

\begin{thm}[Reduction to the $b$-batch model]\label{thm:reduction}
 If there exists an algorithm $\mathcal{A}_B$ for the unweighted $b$-batch random-order stream model that computes an $\alpha$-approximate maximum cardinality matching in bipartite graphs using space $S(n,m,b,\alpha)$, then there exists an algorithm $\mathcal{A}_w$ for weighted random-order streams (with weights in $\mathbb{R}^+$) that computes a $(1-\epsilon) \alpha$-approximate maximum weight matching in bipartite graphs using space $O(S(n \gamma_{\epsilon},m \gamma_{\epsilon},\gamma_{\epsilon},\alpha) \log R).$ 
    
    Moreover, suppose that $\mathcal{A}_B$ computes an $\alpha$-refolding approximate subgraph whenever the input graph is of the form $\phi(G)$ for some weighted graph $G$ with batches $\mathcal{B}=\{ \phi(e) : e \in G\}$. Then the guarantees of $\mathcal{A}_w$ also hold for non-bipartite graphs.   
\end{thm}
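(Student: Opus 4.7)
The plan is to compose two reductions. First, by Theorem \ref{thm:space}, it suffices to construct an intermediate algorithm $\mathcal{A}_w'$ that solves the problem on weighted graphs with integer weights in $[\gamma_\epsilon]$ using space $S(n\gamma_\epsilon, m\gamma_\epsilon, \gamma_\epsilon, \alpha)$; applying the Gupta--Peng reduction then yields $\mathcal{A}_w$ with the claimed approximation ratio $(1-\epsilon)\alpha$ and an additional $\log R$ factor in space. So the core of the proof is constructing $\mathcal{A}_w'$ from $\mathcal{A}_B$ using the unfolding operation.

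The algorithm $\mathcal{A}_w'$ proceeds as follows. Fix a weighted input graph $G$ with integer weights in $[\gamma_\epsilon]$ arriving in a uniformly random order. Maintain a simulated instance of $\mathcal{A}_B$ on the unfolded graph $\phi(G)$. Whenever a weighted edge $e = (u,v) \in G$ arrives in the stream, compute the batch $\phi(e) = \{(u^i, v^{w_e - i + 1}) : i \in [w_e]\}$ of $w_e \leq \gamma_\epsilon$ unweighted edges and feed it as a single batch into $\mathcal{A}_B$. At the end of the stream, let $H$ be the subgraph of $\phi(G)$ output by $\mathcal{A}_B$, and return a maximum weight matching of $\mathcal{R}(H) \subseteq G$ (this last step is performed offline on the stored output).

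To see that this is a valid execution of $\mathcal{A}_B$ in the $b$-batch random-order model, observe that the batches $\mathcal{B} = \{\phi(e) : e \in E(G)\}$ form a partition of $E(\phi(G))$ with $|\phi(e)| = w_e \leq \gamma_\epsilon$, so the batch-size bound $b = \gamma_\epsilon$ holds. Moreover, the bijection $e \leftrightarrow \phi(e)$ together with the fact that the edges of $G$ arrive uniformly at random implies that the batches of $\phi(G)$ arrive in a uniformly random order, as required by Definition~\bbatch{}. For the approximation guarantee, split into cases. If $G$ is bipartite, then $\mathcal{A}_B$ outputs a matching $H \subseteq \phi(G)$ with $|H| \geq \alpha \cdot \mu(\phi(G))$, and by Theorem~\ref{thm:refolding} we have $\mu(\phi(G)) = \mu_w(G)$; Lemma~\ref{lemma:matchsize} then gives $\mu_w(\mathcal{R}(H)) \geq |H| \geq \alpha \mu_w(G)$. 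For the non-bipartite case, we invoke the stronger hypothesis: $\mathcal{A}_B$ produces an $\alpha$-refolding approximate subgraph $H$, which by Definition~\ref{def:alphaapprox} directly yields $\mu_w(\mathcal{R}(H)) \geq \alpha \mu_w(G)$.

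For the space analysis, $\phi(G)$ has at most $n \gamma_\epsilon$ vertices and at most $m \gamma_\epsilon$ edges, and the batches have size at most $\gamma_\epsilon$, so running $\mathcal{A}_B$ uses space $S(n\gamma_\epsilon, m\gamma_\epsilon, \gamma_\epsilon, \alpha)$; the per-edge overhead of computing $\phi(e)$ requires only $O(\gamma_\epsilon) = O(1)$ additional working space. Applying Theorem~\ref{thm:space} to $\mathcal{A}_w'$ produces $\mathcal{A}_w$ with the desired $(1-\epsilon)\alpha$-approximation and space bound $O(S(n\gamma_\epsilon, m\gamma_\epsilon, \gamma_\epsilon, \alpha) \log R)$. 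The main conceptual point to articulate carefully is the distributional equivalence in the previous paragraph — that unfolding a uniformly random stream of weighted edges really does produce a valid $b$-batch random-order stream on $\phi(G)$ with the prescribed partition — since this is precisely what legitimizes invoking the guarantees of $\mathcal{A}_B$; the rest is bookkeeping of parameters.
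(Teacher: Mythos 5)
Your proof is correct and follows essentially the same route as the paper: apply Theorem~\ref{thm:space} to reduce to bounded integral weights, simulate $\mathcal{A}_B$ on $\phi(G)$ with batches $\mathcal{B} = \{\phi(e) : e \in G\}$, and then for the bipartite case chain Lemma~\ref{lemma:matchsize} with Theorem~\ref{thm:refolding}, and for the non-bipartite case invoke Definition~\ref{def:alphaapprox} directly. The one place you are slightly more careful than the paper is in explicitly flagging that the bijection $e \leftrightarrow \phi(e)$ transports a uniformly random edge order on $G$ to a uniformly random batch order on $\phi(G)$ — the paper asserts this implicitly — but it is the same argument.
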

\begin{proof}
    Let $\mathcal{A}_B$ be the unweighted $b$-batch random-order streaming algorithm using space $S(n,m,b,\alpha)$. By Theorem \ref{thm:space}, it suffices to construct an algorithm $\mathcal{A}_W$ that computes an $\alpha$-approximate maximum weight matching using space $S(Wn, Wm, W, \alpha)$ when the edge weights are in $[W]$.  We can obtain the required algorithm $\mathcal{A}_W$ as follows: 

    Whenever an edge $e$ arrives in the weighted stream, define a batch $B_e := \phi(e)$ consisting of the unfolded edges of $e$. Feed the batch $B_e$ as an update to the batch algorithm $\mathcal{A}_B$. In other words, $\mathcal{A}_B$ is applied to the graph $\phi(G)$ with batches $\mathcal{B} = \{ \phi(e): e\in G\}$. At the end of the stream, $\mathcal{A}_B$ outputs an $\alpha$-approximate maximum cardinality matching $M$ of $\phi(G)$. The algorithm $\mathcal{A}_W$ outputs the maximum weight matching in $\mathcal{R}(M)$ (which can easily be computed from $M$). We have
    \begin{align*}
    \mu_w(\mathcal{R}(M)) &  \geq \mu(M), & \text{by Lemma \ref{lemma:matchsize}} \\
    & \geq \alpha \cdot \mu(\phi(G)), & \text{by the assumption on $M$}\\
    &= \alpha \cdot \mu_w(G), & \text{by Theorem \ref{thm:refolding} }
    \end{align*}
    so $\mathcal{A}_W$ outputs an $\alpha$-approximate maximum weight matching. Since the graph $\phi(G)$ has at most $Wn$ vertices and $Wm$ edges, the space usage of $\mathcal{A}_W$ is at most $S(Wn, Wm, W, \alpha)$, as required. 

    For the ``Moreover''-part, suppose that $\mathcal{A}_B$ computes an $\alpha$-refolding approximate subgraph $H$. Define $\mathcal{A}_W$ as before, except that now $\mathcal{A}_W$ should output the maximum weight matching in $\mathcal{R}(H)$. Then 
    \begin{align*}
    \mu_w(\mathcal{R}(H)) \geq \alpha \cdot \mu_w(G), \qquad \text{by Definition \ref{def:alphaapprox}},
    \end{align*}
    so the approximation ratio achieved by $\mathcal{A}_W$ is still $\alpha$, even for non-bipartite graphs, as required. 
    \end{proof}
\begin{remark}
The argument can easily be adapted to the robust communication model. Consider a $b$-batch robust communication model, in which an adversary partitions the edges into batches of size at most $b$, and each batch gets assigned uniformly at random to each of the parties. Then any protocol for the $b$-batch robust communication model gives a protocol for the weighted robust communication model. 
\end{remark}
\subsection{$2/3$-Approximation in $b$-Batch Random-Order Streams}\label{section:bip}
 In this section, we prove the following proposition. 
 \begin{proposition}\label{prop:batch_algo}
    Given any unweighted graph $G$ and any approximation parameter $0< \epsilon < 1$, Bernstein's algorithm (Algorithm \ref{alg:batch_bernstein}) with high probability computes a $(2/3 -\epsilon)$- approximate maximum cardinality matching in the $b$-batch random-order stream model. The space complexity of the algorithm is $O(n b^2  \log n \log b\poly(\epsilon^{-1}))$, where $b$ is the upper bound on batch size. 
\end{proposition}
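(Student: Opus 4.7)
My plan is to adapt Bernstein's two-phase algorithm \cite{Bernstein} to the batched setting by rescaling the termination rule to count \emph{batches} rather than \emph{edges}. Fix $\lambda = \Theta(\epsilon)$ and $\beta = \Theta(\lambda^{-2}\log(1/\lambda))$ as demanded by Lemma \ref{lemma:bernstein4.1}, together with a termination threshold $T := \Theta\!\left(\tfrac{q}{bn}\poly(\epsilon^{-1})\right)$. In Phase 1, whenever a batch arrives, add every $(G,H,\beta,\lambda)$-underfull edge of the batch to $H$ and then prune any edges now violating $\deg_H(u)+\deg_H(v) \leq \beta$; call the batch \emph{underfull} if it contained at least one underfull edge, and terminate as soon as $T$ consecutive non-underfull batches have arrived. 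In Phase 2, freeze $H$ and collect every subsequently-arriving underfull edge into an auxiliary set $U$; at the end of the stream, output a maximum matching of $H \cup U$.

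For the approximation ratio, Bernstein's potential argument bounds the total number of modifications to $H$ by $O(n\beta^2)$, and since every underfull batch triggers at least one such modification, the stream contains at most $O(n\beta^2)$ underfull batches. Together with the termination rule, this forces Phase 1 to end within the first $L := O(n\beta^2 T) = O(\epsilon q/b)$ batches. I then control $|M^*\cap G_{early}|$ for a fixed maximum matching $M^*$ via Chernoff: writing the count as $\sum_i a_i X_i$ with $a_i = |M^* \cap B_i|$ and $X_i = \mathbf{1}[B_i \text{ is among the first } L \text{ batches}]$, the $X_i$ are negatively associated with marginals $L/q \leq \epsilon/b$ (they arise from sampling without replacement), so rescaling by $b$ turns $\{a_i X_i / b\}$ into $[0,1]$-valued negatively associated variables and applying Theorem \ref{thm:chernoff_neg} yields $|M^*\cap G_{early}| \leq \epsilon|M^*|$ with high probability provided $\mu(G) = \Omega(\poly(b/\epsilon)\log n)$; the complementary regime is handled by a straightforward fallback that stores enough of the graph within the space budget. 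An application of Lemma \ref{lemma:bernstein4.1} to $G_{late}$, $H$, and $U$ then gives $\mu(H\cup U) \geq (2/3-\epsilon)\mu(G_{late}) \geq (2/3-O(\epsilon))\mu(G)$.

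For the space, $|H| \leq n\beta = O(n)$ is immediate, so the challenge is bounding $|U|$. The key observation is that if, at the moment Phase 1 ended, there were still $K$ underfull batches yet to arrive, then the $T$ batches preceding termination must have missed all of them — an event of probability at most $(1-K/q)^T$. Setting $K = C \cdot n \log n \cdot \poly(b/\epsilon)$ drives this probability below $n^{-\Omega(C)}$, and a union bound over the $O(q)$ possible termination positions (together with a stratified estimate across the scales of $|U|$ encountered during Phase 2, which absorbs a $\log b$ factor) yields $|U| \leq Kb = O(nb^2 \log n \log b \poly(\epsilon^{-1}))$ with high probability. The main obstacle is the Chernoff step: because batches can carry up to $b$ edges of $M^*$ and are sampled without replacement rather than independently, rescaling by $b$ is required, and this is precisely what generates the $b^2$ factor in the final space bound; one must also separately verify that small-$\mu(G)$ instances can be accommodated by a fallback without breaking the overall guarantees.
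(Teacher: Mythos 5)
Your proposal matches the paper's proof essentially step for step: the same two-phase algorithm with a batch-counted termination rule, the same potential argument bounding Phase 1 to $O(n\beta^2 T) = O(\epsilon q/b)$ batches, the same negative-association Chernoff bound showing few matching edges arrive early (the paper counts batches hitting $M^*$ and multiplies by $b$; your weighted-sum rescaling is equivalent), and the same $(1-K/q)^T$ tail bound on the number of underfull batches remaining at termination (the paper partitions Phase 1 into fixed epochs of $\alpha$ batches rather than using a sliding window, a cosmetic difference). The one off note is your closing remark about a ``stratified estimate across the scales of $|U|$ absorbing a $\log b$ factor'': no such step is needed or present in the paper — their $\log b$ in the space bound comes entirely from choosing $\beta = \Theta(\lambda^{-2}\log(b/\lambda))$ (anticipating the later weighted refolding lemma), whereas with your $\beta = \Theta(\lambda^{-2}\log(1/\lambda))$ the bound $|U| \le Kb = O(nb^2\log n\,\poly(\epsilon^{-1}))$ already holds directly.
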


\begin{defn}[Parameters] Let $\epsilon < \frac{1}{2}$ be the final approximation parameter we are aiming for, $\lambda = \frac{\epsilon}{512}, \beta = 144\lambda^{-2}\log(2b/\lambda)$; note that $\beta = O(\poly(\epsilon^{-1})\log b)$. Set $\alpha =\frac{\epsilon q}{b(n\beta^2+1)}$ and $\gamma = 7 \log n \frac{q}{\alpha} = O(nb \log n\log b\poly(\epsilon^{-1}))$. 
\end{defn}
We now describe Bernstein's algorithm \cite{Bernstein} adapted to the $b$-batch model. The algorithm has two Phases. In Phase 1, it computes a subgraph $H$ that is bounded edge-degree $\beta$ (Definition \ref{def:bdddeg}). In Phase 2, it stores all the $(G,H,\beta,\lambda)$-underfull edges (Definition \ref{def:underfull}). That way, the algorithm computes a `relaxed' EDCS, which by Lemma \ref{lemma:bernstein4.1} contains a $(2/3-\epsilon)$-approximate maximum cardinality matching. 

More concretely, the algorithm proceeds as follows: 
Initialize an empty subgraph $H$ and start Phase 1. Phase 1 is split into epochs, each of which contains exactly $\alpha$ batches. In each epoch, the algorithm processes the batches sequentially. For each edge $(u,v)$ in the batch, if $\deg_H(u) + \deg_H(v) <\beta(1-\lambda)$, then $(u,v)$ is added to the subgraph $H$ (note that the algorithm changes $H$ over time, so $\deg_H$ always refers to the degree of $H$ at the time when the edge is examined). After adding an edge to $H$, the algorithm runs procedure \textsc{RemoveOverfullEdges}(H) to ensure that $H$ is always bounded edge-degree $\beta$. In each epoch, the algorithm also has a boolean \textsc{FoundUnderfull}, which tracks whether at least one underfull edge arrived in the epoch. If \textsc{FoundUnderfull} is FALSE at the end of an epoch, then the algorithm terminates Phase 1 and moves on to Phase 2. Once Phase 1 terminates, the subgraph $H$ becomes fixed and does not change anymore. Then, in Phase 2, the algorithm simply picks up all the underfull edges into a separate set $U$. 
For a formal description, see Algorithm \ref{alg:batch_bernstein}. Note that the only difference between Algorithm \ref{alg:batch_bernstein} and Bernstein's original algorithm (Algorithm 1 in \cite{Bernstein}) is that the length of each epoch is now determined by the number of batches, rather than the number of edges. 

\begin{algorithm}[H]
\caption{Bernstein's Algorithm \cite{Bernstein} adapted to the $b$-batch model}\label{alg:batch_bernstein}
    \SetAlgoLined\DontPrintSemicolon
    \SetAlgoNoEnd
  \SetKwFunction{algo}{algo}
  \SetKwFunction{proc}{proc}
  \SetKwProg{myproc}{Procedure}{}{end}
  \SetKwProg{Do}{Do until termination}{}{}
   $H \leftarrow \emptyset,U \leftarrow \emptyset $ \\
    \myproc{\textsc{Phase 1}}{
    \Do{}{
    { \textsc{FoundUnderfull} $\leftarrow$ FALSE} \\
    \For(\tcp*[f]{Each epoch has $\alpha$ batches}){$i = 1, \dots, \alpha$}{
    Let $B_i$ denote the $i^{th}$ batch in the epoch\\
    \For{$(u,v) \in B_i$ }{
    \If{$\deg_H(u) + \deg_H(v) < \beta(1-\lambda)$}{$H \leftarrow H \cup \{(u,v)\}$ \\
    {\textsc{FoundUnderfull} $\leftarrow$ TRUE} \\
    \textsc{RemoveUnderfullEdges(H)}
    }
    }
    }
    \If{{\textsc{ FoundUnderfull} = FALSE}}{Go to Phase 2}
    }
    }
    \myproc{\textsc{ RemoveOverfullEdges(H)}}{
    \While{there exists $(u,v) \in H$ such that $\deg_H(u) + \deg_H(v) > \beta$}{ Remove $(u,v)$ from $H$}
    }
     \myproc{\textsc{Phase 2}}{
     \ForEach{remaining edge $(u,v)$ in the stream}{
    \If{$\deg_H(u) + \deg_H(v) < \beta(1-\lambda)$}{$U \leftarrow U \cup \{(u,v)\}$}}
     \Return{a maximum cardinality matching in $H \cup U$}
     }
\end{algorithm}
    \begin{defn}
    Let $\mathcal{B}_{early}$ denote the first $\frac{\epsilon}{b} q$ batches in the stream and let $\mathcal{B}_{late}$ denote the remaining batches. Let $E_{late}:= \{e \in E: e \in B \text{ for some } B \in \mathcal{B}_{late} \}$ be the set of edges that arrive as part of the late batches. More generally, let $E_{>i}$ denote the set of edges that arrive after the first $i$ batches. 
   \end{defn} 
 First, we show that we don't lose too many matching edges in the early part of the stream. To this end, we need to assume that the maximum cardinality matching is sufficiently large.
 
 \begin{claim}
        	\label{claim:largematching}
    	We can assume that $\mu(G) \geq 20b^2 \log n\epsilon^{-2}$. 
    \end{claim}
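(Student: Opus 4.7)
The plan is to handle the small-matching case by running a simple side procedure in parallel with Algorithm~\ref{alg:batch_bernstein}. The key combinatorial observation is the elementary fact that if $M$ is any maximal matching of $G$, then $|M|\leq \mu(G)$ and every edge of $G$ has at least one endpoint in $V(M)$; consequently
$$
m \;=\; |E(G)| \;\leq\; |V(M)|\cdot(n-1) \;\leq\; 2\mu(G)\cdot n.
$$
Hence if $\mu(G) < 20 b^2 \log n / \epsilon^2$, then $m < 40 n b^2 \log n/\epsilon^2$, which is absorbed into the target space budget $O(n b^2 \log n \log b \cdot \poly(\epsilon^{-1}))$ of Proposition~\ref{prop:batch_algo}.

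Given this, I would augment the algorithm with a side buffer $\mathcal{E}$ that attempts to record every edge arriving in the stream, together with a counter. If the buffer ever exceeds $40 n b^2 \log n / \epsilon^2$ edges, the side procedure aborts and the algorithm simply returns Bernstein's output at the end of the stream. Otherwise, at the end of the stream $\mathcal{E}$ contains the entire edge set $E(G)$, and we output a maximum cardinality matching of $G$, which can be computed offline from $\mathcal{E}$.

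The correctness is a straightforward case split. If $\mu(G) < 20 b^2 \log n / \epsilon^2$, the inequality $m \leq 2\mu(G)\cdot n$ guarantees the buffer never aborts, so the algorithm returns an exact maximum cardinality matching, which trivially beats the $(2/3-\epsilon)$ target. If instead the buffer aborts, then by the same inequality we must have $\mu(G) \geq (40 n b^2 \log n/\epsilon^2)/(2n) = 20 b^2 \log n/\epsilon^2$, so the assumption of the claim is justified for the remainder of the analysis. The extra space consumed by the side buffer is $O(n b^2 \log n /\epsilon^2)$, which is dominated by the stated bound.

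I do not anticipate a substantive obstacle here; the only subtle point is that in a stream we cannot test $\mu(G)$ directly, so we use the purely combinatorial sufficient condition $m \leq 2\mu(G)n$ as a proxy that can be checked online by simply watching the size of the buffer. The argument is order-independent, so it works in the $b$-batch random-order model without modification.
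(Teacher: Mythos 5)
Your proof is correct and follows essentially the same route as the paper: both rest on the elementary bound $m \leq 2n\mu(G)$ and the observation that when $\mu(G)$ is small this makes $m = O(nb^2\log n\,\epsilon^{-2})$, so all edges can be stored within the allotted space budget. The only difference is that you spell out the parallel-buffer mechanism and abort logic explicitly, whereas the paper simply states ``we can simply store all the edges.''
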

    \begin{proof}
    	It is well known that every graph $G$ satisfies $m \leq 2n\mu(G)$. Hence, if $\mu(G) <  20b^2 \ \log n\epsilon^{-2}$, then $m =  O( n b^2 \log n \epsilon^{-2})$, so we can simply store all the edges. 
    \end{proof}

    \begin{lemma}\label{lemma:whp_stream}
        For $\epsilon < b/2$, it holds that $\Pr[\mu(E_{late}) \geq (1 - 2\epsilon)\mu(G) ] \geq 1-n^{-5}.$
    \end{lemma}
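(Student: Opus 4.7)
The plan is to use a random matching argument: I would lower bound $\mu(E_{late})$ by the number of edges of a fixed maximum matching that land in late batches, and then apply the negatively-associated Chernoff bound of Theorem \ref{thm:chernoff_neg}. Since the bound is vacuous for $\epsilon \geq 1/2$ (the right-hand side is then non-positive while $\mu(E_{late}) \geq 0$), I may assume $\epsilon < 1/2$ throughout. Fix a maximum matching $M^* \subseteq G$ with $|M^*| = \mu(G)$. Since $M^* \cap E_{late}$ is itself a matching contained in $E_{late}$, we have $\mu(E_{late}) \geq |M^* \cap E_{late}|$, so it suffices to show $|M^* \cap E_{late}| \geq (1-2\epsilon)\mu(G)$ with probability at least $1 - n^{-5}$.

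For each batch $B \in \mathcal{B}$ I set $k_B := |B \cap M^*|$; this is a deterministic quantity with $k_B \leq b$ and $\sum_B k_B = \mu(G)$. Let $Y_B \in \{0,1\}$ be the indicator that $B$ arrives among the last $(1-\epsilon/b)q$ batches of the random permutation of $\mathcal{B}$. Then $|M^* \cap E_{late}| = \sum_B k_B Y_B$ and $\mathbb{E}[Y_B] = 1 - \epsilon/b$. The crucial observation is that $\{Y_B\}_{B \in \mathcal{B}}$ is a negatively associated family: these are the indicators of whether each element of a uniform random permutation of $\mathcal{B}$ lands in a fixed suffix of positions, a standard example from the primer \cite{NA}. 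Scaling by the deterministic nonnegative constants $k_B/b$ preserves negative association, so $Z_B := k_B Y_B / b \in [0,1]$ are also negatively associated, and $Z := \sum_B Z_B$ has mean $\mu^* := \mu(G)(1-\epsilon/b)/b$.

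Next I rewrite the bad event $|M^* \cap E_{late}| < (1-2\epsilon)\mu(G)$ as $Z < \mu^*(1-\delta)$, where $\delta := 1 - \frac{1-2\epsilon}{1 - \epsilon/b} = \frac{2\epsilon - \epsilon/b}{1 - \epsilon/b}$. A short calculation shows $\delta \geq \epsilon$ for $b \geq 1$ and $\delta < 1$ for $\epsilon < 1/2$, so Theorem \ref{thm:chernoff_neg} applies and yields
\[\Pr\bigl[|M^* \cap E_{late}| < (1-2\epsilon)\mu(G)\bigr] \leq \exp(-\mu^* \delta^2 / 2) \leq \exp(-\mu^* \epsilon^2 / 2).\]
Since $\epsilon < b/2$ gives $\mu^* \geq \mu(G)/(2b)$, and by Claim \ref{claim:largematching} we may assume $\mu(G) \geq 20 b^2 \log n / \epsilon^2$, the exponent in the bound is at least $5 b \log n \geq 5 \log n$, yielding the desired probability bound.

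The only non-routine ingredient is the negative association of the batch-position indicators $\{Y_B\}$, which I would cite from the primer rather than reprove; everything else is bookkeeping. I do not anticipate any real obstacle in this argument.
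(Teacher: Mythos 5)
Your proof is correct and follows essentially the same approach as the paper: lower-bound $\mu(E_{late})$ by $|M^*\cap E_{late}|$, express this via the negatively associated indicators of which batches land in $\mathcal{B}_{late}$, and apply Theorem \ref{thm:chernoff_neg} together with Claim \ref{claim:largematching}. The only (cosmetic) difference is that you weight each batch indicator by $k_B/b$ rather than using the paper's unweighted per-batch indicators and then paying a factor of $b$ when converting batch counts to edge counts; you are also a bit more explicit than the paper about needing $\epsilon<1/2$ (not just $\epsilon<b/2$) for the Chernoff deviation parameter to satisfy $\delta<1$.
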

    \begin{proof}
    	Fix a maximum cardinality matching in $M^*$ in $G$, and let $B_1, \dots, B_k$ be the set of batches containing at least one matching edge from $M^*$. Each batch $B_i$ can contain at most $b$ edges from $M^*$, so it suffices to show that at most $\frac{2\epsilon \mu(G)}{b}$ of these batches arrive in the early part of the stream. 
    	For $1 \leq i \leq k$, let $X_i$ be the indicator variable that $B_i \in \mathcal{B}_{late}$, and let $X = \sum_{i =1 }^k X_i.$ We will show that with high probability, $X \geq k - \frac{2\epsilon \mu(G)}{b}$. For $1 \leq i \leq k$, we have $\mathbb{E}[X_i] = (1-\frac{\epsilon}{b})$, so $\mathbb{E}[X] = k(1-\frac{\epsilon}{b})$. 
     
     The $X_i$s are negatively associated since these variables correspond to sampling $(1-\epsilon)q$ batches uniformly at random without replacement, which is known to be negatively associated (see the primer \cite{NA}).  Applying Theorem \ref{thm:chernoff_neg} gives
    	\begin{align*}
    		\Pr\left[X \geq k - \frac{2 \epsilon \mu(G)}{b}\right]  &= 1-  	\Pr \Bigr[X -\mathbb{E}[X] < -   \Bigr(\frac{2 \epsilon \mu(G)}{b} -\frac{\epsilon k}{b}\Bigr)\Bigr]   \\
    		& \geq 1-\exp \Bigr( \frac{-\epsilon^2 (2 \mu-k)^2}{4b^2k} \Bigr) \\
    		& \geq 1 -n^{-5}.
    	\end{align*}
    Here the last inequality follows because $\mu(G) \geq k$ and $\mu(G) \geq 20 \log nb^2\epsilon^{-2}$, by Claim \ref{claim:largematching}. 
   \end{proof}
\begin{lemma}\label{lemma:batchstream}
Phase 1 satisfies the following properties: 
\begin{enumerate}
        \item Phase 1 terminates within the first $\frac{\epsilon q}{b}$ batches of the stream.
        \item Phase 1 constructs a subgraph $H \subseteq G$ with bounded edge-degree $\beta$. As a corollary, $H$ has at most $O(n \beta)$ edges. 
        \item When Phase 1 terminates after processing some batch $B_l$, with probability at least $1-n^{-5}$, the total number of $(E_{>l},H,\beta, \lambda)$-underfull edges in $E_{>l} \setminus H $ is at most $b\gamma$.
\end{enumerate}
\end{lemma}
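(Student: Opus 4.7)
My plan is to establish the three properties in order, adapting Bernstein's original single-edge analysis \cite{Bernstein} to the batched setting.

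Property 2 is purely structural. The invariant that $H$ has bounded edge-degree $\beta$ is maintained by construction: every time an edge is added, \textsc{RemoveOverfullEdges}$(H)$ restores the invariant before the next edge of the stream is processed. For the edge count, a Cauchy--Schwarz computation gives $\beta |H| \geq \sum_{(u,v) \in H}(\deg_H(u)+\deg_H(v)) = \sum_v \deg_H(v)^2 \geq (2|H|)^2/n$, so $|H| \leq n\beta/4 = O(n\beta)$.

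For Property 1, I would appeal to a potential-function argument analogous to Bernstein's (Lemma 4.3 in \cite{Bernstein}) to conclude that the total number of modifications to $H$ across Phase 1 is $O(n\beta^2)$. The argument only uses that added edges have edge-degree $< \beta(1-\lambda)$ while removed edges have edge-degree $> \beta$, so it carries over to the $b$-batch model verbatim. Any epoch that does not terminate Phase 1 must contain at least one modification (otherwise \textsc{FoundUnderfull} would remain FALSE), so there are at most $n\beta^2$ non-terminating epochs, plus one terminating epoch, for at most $n\beta^2+1$ epochs total. Multiplying by the epoch length $\alpha = \epsilon q / (b(n\beta^2+1))$ gives a bound of $\epsilon q/b$ batches.

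Property 3 is the genuinely probabilistic step. For each $1 \leq i \leq n\beta^2+1$, let $\mathcal{E}_i$ be the event that Phase 1 terminates at the end of epoch $i$ (so $l = l_i := i\alpha$) and more than $b\gamma$ edges in $E_{>l_i}$ are $(E_{>l_i}, H, \beta, \lambda)$-underfull. Since each batch contains at most $b$ edges, on $\mathcal{E}_i$ the remaining stream has at least $\gamma$ batches holding an underfull edge with respect to the subgraph $H_i$, which is fixed throughout epoch $i$ (because by definition of $\mathcal{E}_i$ no edges are added during that epoch). I would then condition on the unordered multiset of batches arriving in epochs $1,\ldots,i-1$: this determines $H_i$ and hence the set of underfull batches among the remaining ones, while leaving the order of the remaining batches uniformly random. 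The probability that none of the $\geq \gamma$ underfull batches lands in the $\alpha$ positions of epoch $i$ is then, by a standard sampling-without-replacement bound, at most $(1-\gamma/q)^\alpha \leq \exp(-\gamma\alpha/q) = n^{-7}$, using $\gamma = 7\log n \cdot q/\alpha$. A union bound over the $O(n\beta^2) \leq n^2$ possible terminating epochs yields $\Pr[\bigcup_i \mathcal{E}_i] \leq n^{-5}$.

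The main obstacle is making the conditioning in Property 3 precise: I need to verify that fixing the prefix of the stream up to the start of epoch $i$ (and hence $H_i$) preserves the uniform random order of the remaining batches, so that the without-replacement tail bound applies cleanly to each fixed history before integrating out. Everything else is either structural or a direct transcription of Bernstein's per-edge analysis to per-batch analysis.
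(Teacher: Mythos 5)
Your proof follows the same strategy as the paper for all three properties: Property~2 is immediate from the call to \textsc{RemoveOverfullEdges}; Property~1 uses the $n\beta^2$ bound on the number of legal moves (Lemma~\ref{lemma:bernstein4.2} in the paper's numbering, not 4.3) plus the observation that each non-terminating epoch modifies $H$ at least once; Property~3 is the per-epoch bad-event decomposition, the sampling-without-replacement bound $(1-\gamma/q)^\alpha \le n^{-7}$, and a union bound over at most $n^2$ epochs. The paper's bad event is phrased as ``no underfull batch in epoch $i$'' $\wedge$ ``more than $\gamma$ underfull batches remain'' rather than ``Phase 1 terminates at epoch $i$'', but since the latter implies the former this is cosmetic.

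One slip to flag: in the middle of your Property~3 argument you propose conditioning on the \emph{unordered} multiset of batches arriving in epochs $1,\dots,i-1$ and claim this determines $H_i$. It does not: $H_i$ depends on the arrival \emph{order} of those batches, not just which ones arrived (greedy insertion and overfull-removal are order-sensitive). You need to condition on the ordered prefix, which determines $H_i$ and leaves the suffix in uniformly random order --- and indeed this is precisely what you write in your concluding paragraph, where you self-correct to ``fixing the prefix of the stream.'' That conditioning is standard (the trailing $q - (i-1)\alpha$ batches of a uniform permutation are a uniform permutation of the remaining multiset, given the ordered prefix) and is what the paper implicitly uses via the statement that $\mathcal{B}_i^e$ is a uniform $\alpha$-subset of $\mathcal{B}_i^r$. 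So the plan is sound once the conditioning sentence is fixed to match the concluding paragraph.
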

The proof of Lemma \ref{lemma:batchstream} proceeds similarly to the proof of Lemma 4.1 in \cite{Bernstein}. We will use the following result from the original analysis. 

\begin{lemma}[\cite{Bernstein}]\label{lemma:bernstein4.2}
Fix any parameter $\beta > 2$. Let $H = (V,E_H)$ be a graph, with $E_H$ initially empty. Say that an adversary adds and removes edges from $H$ using an arbitrary sequence of two possible moves. 
\begin{itemize}
    \item (Deletion move) Remove an edge $(u,v)$ from H for which $\deg_H(u) + \deg_H(v) > \beta$. 
    \item (Insertion move) Add an edge $(u,v)$ to H for some pair $u,v \in V$ for which $\deg_H(u) + \deg_H(v) < \beta -1$. 
\end{itemize}
Then, after $n\beta^2$ moves, no legal move remains. 
\end{lemma}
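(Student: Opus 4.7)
The plan is to exhibit a potential function $\Phi(H)$ that starts at $0$, strictly increases (by at least $1$) with every legal adversarial move, and is bounded above by $n\beta^2$. These three facts immediately force the process to halt within $n\beta^2$ moves.

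The main obstacle is finding a single $\Phi$ that is monotone under \emph{both} insertions and deletions. The two natural candidates $|E_H|$ and $\sum_v\deg_H(v)^2$ each increase under insertions and decrease under deletions, so neither works on its own. The key observation is a magnitude asymmetry: a move on $(u,v)$ changes $|E_H|$ by exactly $\pm 1$, but changes $\sum_v \deg_H(v)^2$ by $\pm\bigl(2(\deg_H(u)+\deg_H(v)) \pm 2\bigr)$, so the latter quantity moves slowly under insertions (which only occur when $\deg_H(u)+\deg_H(v)\leq \beta-2$) and fast under deletions (which only occur when $\deg_H(u)+\deg_H(v)\geq \beta+1$). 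A suitable linear combination should therefore be monotone under both operations.

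Concretely, I would try
\[\Phi(H) := (2\beta-1)\,|E_H| - \sum_{v \in V} \deg_H(v)^2.\]
For an insertion at $(u,v)$ with $s := \deg_H(u)+\deg_H(v)\leq \beta-2$, direct computation gives $\Delta\Phi = (2\beta-1) - (2s+2) \geq (2\beta-1) - 2(\beta-2) - 2 = 1$. For a deletion at $(u,v)$ with $s \geq \beta+1$, one similarly obtains $\Delta\Phi = -(2\beta-1) + (2s-2) \geq -(2\beta-1) + 2(\beta+1) - 2 = 1$. The coefficient $2\beta-1$ is the unique value that makes both inequalities hold simultaneously; any smaller choice fails on insertions and any larger one fails on deletions, so it sits precisely at the intersection of the two regimes.

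For the upper bound, since $\sum_v \deg_H(v) = 2|E_H|$, Cauchy--Schwarz yields $\sum_v \deg_H(v)^2 \geq 4|E_H|^2/n$, hence $\Phi(H) \leq (2\beta-1)|E_H| - 4|E_H|^2/n$. Maximizing the right-hand side over $|E_H|\geq 0$ gives $\Phi(H) \leq (2\beta-1)^2 n/16 < n\beta^2$. Combined with $\Phi(\emptyset)=0$ and the unit-increment property established above, this bounds the total number of adversarial moves by $n\beta^2$, completing the proof.
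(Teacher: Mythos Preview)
Your proof is correct and is essentially the standard potential-function argument from Bernstein's original paper (which this paper cites without reproducing the proof). The only minor differences are cosmetic: Bernstein's version can be phrased with coefficient $2\beta$ rather than $2\beta-1$ (getting $\Delta\Phi\geq 2$ on insertions and $\Delta\Phi\geq 0$ on deletions, then bounding deletions by insertions since $|E_H|\geq 0$), and the upper bound on $\Phi$ is typically obtained from the observation that all degrees stay below $\beta$, giving $|E_H|\leq n\beta/2$ and hence $\Phi\leq n\beta^2$ directly. Your Cauchy--Schwarz argument is a clean alternative that avoids needing the degree invariant and in fact yields the sharper bound $\Phi\leq (2\beta-1)^2n/16 < n\beta^2/4$.
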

\begin{proof}[Proof of Lemma \ref{lemma:batchstream}]

\emph{Property 1:} By Lemma \ref{lemma:bernstein4.2},  the algorithm can make at most $n \beta^2$ changes to $H$. Since each epoch that does not terminate Phase 1 must make at least one change to $H$, there can be at most $n\beta^2 +1$ epochs in Phase 1. So overall, Phase 1 goes through at most $\alpha (n \beta^2 +1) = \frac{\epsilon q}{b}$ batches in Phase 1. \\
\emph{Property 2:} Holds by construction of the algorithm, since the \textsc{RemoveOverfull} procedure ensures that $H$ is always bounded edge-degree $\beta$. \\
\emph{Property 3:}
Let $l$ be the last batch processed in Phase 1. We will say that a batch $B_j$ with $j > l$ is \textit{underfull} if it contains at least one $(E_{>l},H,\beta, \lambda)$-underfull edge. 
We will show that with probability at least $1-n^{-5}$, the number of underfull batches is at most $\gamma$. Since each underfull batch contains at most $b$ underfull edges, this will give the result. The intuition is as follows: Phase 1 terminates only if there is an epoch without a single underfull batch. Since the stream is random, this means that there are relatively few underfull batches left in the stream. More formally, for each epoch $0 \leq i \leq \frac{\epsilon q}{b}$, say that a batch is \emph{underfull} if it contains at least one $(G, H_i, \beta, \lambda)$-underfull edge, where $H_i$ is the subgraph $H$ constructed by the algorithm at the beginning of epoch $i$. Let $\mathcal{E}_i$ be the event that no underfull batches appear in epoch $i$, and let $\mathcal{F}_i$ be the event that there are more than $\gamma$ underfull batches left in the stream when epoch $i$ begins. Property 3 fails only if $\mathcal{E}_i \land \mathcal{F}_i$ happens for some $i$, so we need to upper bound $ \Pr[\cup_{i=1}^{\epsilon q/b} \mathcal{E}_i \land \mathcal{F}_i]$. Let $\mathcal{B}_i^r$ denote the set of batches that have not yet appeared at the beginning of epoch $i$ (r for remaining), let $\mathcal{B}_i^e$ denote the set of batches that appear in epoch $i$ (e for epoch) and let $\mathcal{B}_i^u$ denote the set of underfull batches that remain in the stream (u for underfull). With these definitions, we can write $\mathcal{E}_i \land \mathcal{F}_i$ as the event $ (\mathcal{B}_i^u \cap \mathcal{B}_i^e = \emptyset)\land (|\mathcal{B}_i^u| > \gamma)$, since the event $\mathcal{B}_i^u \cap \mathcal{B}_i^e = \emptyset$ ensures that the graph $H$ does not change throughout the epoch. We have  
\begin{align*}
    \Pr \left[\mathcal{E}_i \land \mathcal{F}_i \right] &=  \Pr\left[ (\mathcal{B}_i^u \cap \mathcal{B}_i^e = \emptyset)\land (|\mathcal{B}_i^u| > \gamma) \right] \\
    & \leq \Pr\left[\mathcal{B}_i^u \cap \mathcal{B}_i^e = \emptyset \bigr| |\mathcal{B}_i^u | > \gamma \right] \\
    &< \left(1-\frac{\gamma}{q}\right)^{\alpha} \\
    & = \left(1-\frac{7\log n}{\alpha}\right)^{\alpha} \\ 
    & \leq n^{-7}.
    \end{align*}
    Here the second inequality follows because $\mathcal{B}_i^e$ is obtained by sampling $\alpha$ batches from $\mathcal{B}_i^r$ uniformly at random without replacement, and since $ |\mathcal{B}_i^u| > \gamma$ and $|\mathcal{B}_i^r| \leq q$. There are at most $n^2$ epochs in total, so taking the union bound over all epochs gives the result.
    \end{proof}

Finally, we complete the proof of Proposition \ref{prop:batch_algo}. 

\begin{proof}[Proof of Proposition \ref{prop:batch_algo}]
Let us first show the approximation guarantee. By Part 2 of Proposition \ref{lemma:batchstream}, Phase 1 computes a subgraph $H$ which has 
bounded edge-degree $\beta$. Moreover, by Part 1 of Proposition \ref{lemma:batchstream}, it holds that $H \subseteq E_{late}$. Phase 2 finds the set $U$ of all $(E_{late}, H, \beta, \lambda)$-underfull edges. So by Lemma \ref{lemma:bernstein4.1} applied to the graph $E_{late}$, the algorithm returns a matching of size at least 
\begin{align*}
\mu(H \cup U ) & \geq (2/3-\epsilon) \mu(E_{late}) & \text{by Lemma \ref{lemma:bernstein4.1}} \\
& \geq (2/3 - \epsilon)(1-2\epsilon) \mu(G), & \text{by Lemma \ref{lemma:whp_stream}}, 
\end{align*}
where the last inequality holds with probability at least $1-n^{-5}.$ Re-scaling $\epsilon$ gives the approximation ratio. 

For the space analysis: By Part 2 of Lemma \ref{lemma:batchstream}, the space usage of Phase 1 is $O(n \beta) = O(n \log b\poly(\epsilon^{-1}))$. By Part 3 of Lemma \ref{lemma:batchstream}, with probability at least $1-n^{-5}$, the space usage of Phase 2 is at most $O(b \gamma) = O(n b^2 \log n   \log b \poly(\epsilon^{-1}))$. So with probability at least $1- n^{-5}$, the total space usage is at most $O(n b^2 \log n \log b \poly(\epsilon^{-1}))$. 
By a union bound, the overall success probability of the algorithm is at least $1-2n^{-5}$. 
\end{proof}
\subsection{Extension to Non-Bipartite Graphs}\label{sec:nonbip}
In this section, we show that the computed graph $H \cup U$ is $(2/3-\epsilon)$-refolding approximate. This, together with Proposition \ref{prop:batch_algo} and Theorem \ref{thm:reduction} will complete the proof of Theorem \ref{thm:main_stream}. 

In \cite{BDL}, it was shown that EDCS are (almost) $2/3$-refolding approximate. However, since $H \cup U$ is not an EDCS, but rather a relaxed version of an EDCS, this result cannot be applied directly. Instead, we need a more careful argument. First, we need the following lemma which was proved in \cite{BDL}. 
\pagebreak
\begin{lemma}[Lemma 5.7 in \cite{BDL}]
\label{lemma:bipartify}
Let $G$ be a weighted graph with weights in  $[W]$. Let $\delta \in (0, 1/2)$, and let $d \geq 36\delta^{-2} log(W /\delta)$. For any matching $M$ in $G$ and any subgraph $H$ of $\phi(G)$
with maximum degree at most $d$, there exists a bipartite subgraph $\widetilde{G} =\widetilde{G}(M, H) $ of $G$ such that, setting 
$\widetilde{H}:= H \cap \phi(\widetilde{G})$, it holds that
\begin{enumerate}
    \item $M \subseteq \widetilde{G}$ and
    \item $| \deg_{\widetilde{H}}(v) - \deg_{H}(v)/2| \leq \delta d$ for all vertices $v \in V(H)$. 
\end{enumerate}
\end{lemma}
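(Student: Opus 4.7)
The plan is to construct $\widetilde{G}$ from a random $2$-coloring of $V$ that forces every edge of $M$ across the cut, and then to use Hoeffding to show that this cut halves $\deg_H$ on each vertex of $\phi(G)$ up to an additive $\delta d$ error. Concretely, fix an arbitrary orientation of $M$, assign the tail of each oriented matching edge to side $L$ and the head to side $R$, and for every remaining vertex $w \in V \setminus V(M)$ draw $\sigma(w) \in \{L,R\}$ independently and uniformly at random. Let $\widetilde{G}$ consist of all edges of $G$ whose endpoints received different colors. Property (1) is immediate, since every edge of $M$ crosses the cut by construction.

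For property (2), fix $x = u^i \in V(H)$. A neighbor of $x$ in $H$ has the form $v^{w_e - i + 1}$ for some edge $(u,v) \in G$ with $w_e \geq i$, and this neighbor survives in $\widetilde{H} = H \cap \phi(\widetilde{G})$ iff the underlying edge $(u,v)$ crosses the cut, i.e.\ iff $\sigma(u) \neq \sigma(v)$. Let $N^\star(x) \subseteq V$ denote the set of such $v$'s. Because $G$ is simple, distinct edges out of $u$ go to distinct neighbors, so $|N^\star(x)| = \deg_H(x) \leq d$ and the elements of $N^\star(x)$ are pairwise distinct vertices of $V$. Hence
\[
\deg_{\widetilde{H}}(x) \;=\; \sum_{v \in N^\star(x)} \mathbf{1}[\sigma(v) \neq \sigma(u)].
\]

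Condition on $\sigma(u)$. At most one vertex $v^\star \in N^\star(x)$ is the $M$-partner of $u$; for that $v^\star$ the indicator equals $1$ deterministically. For every other $v \in N^\star(x)$ the bit $\sigma(v)$ was drawn independently and uniformly at random, so the indicator $\mathbf{1}[\sigma(v) \neq \sigma(u)]$ is an independent Bernoulli$(1/2)$. Therefore $|\mathbb{E}[\deg_{\widetilde{H}}(x) \mid \sigma(u)] - \tfrac{1}{2}\deg_H(x)| \leq \tfrac{1}{2}$, and Hoeffding's inequality for sums of independent $[0,1]$ variables gives
\[
\Pr\!\left[\left|\deg_{\widetilde{H}}(x) - \tfrac12 \deg_H(x)\right| > \delta d\right] \;\leq\; 2\exp\!\left(-\Omega(\delta^2 d)\right),
\]
valid once $\delta d \gg 1$, which is implied by the hypothesis on $d$.

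Finally, I would take a union bound over $x \in V(H) \subseteq V(\phi(G))$, using $|V(\phi(G))| \leq nW$. The choice $d \geq 36\delta^{-2}\log(W/\delta)$ drives the per-vertex failure probability down to a polynomially small power of $W/\delta$, which beats $1/|V(H)|$ in the regime of interest (where, by prior Gupta--Peng bucketing, $W = \poly(n)$ so that $\log W$ absorbs $\log n$). The probabilistic method then produces a coloring satisfying (2) at every $x \in V(H)$ simultaneously, and the associated $\widetilde{G}$ is the desired bipartite subgraph. The main subtlety I anticipate is the bookkeeping in the concentration step: carefully isolating the deterministic contribution of the $M$-partner of $u$, and verifying genuine independence of the remaining indicators, which relies on the fact that $N^\star(x)$ consists of \emph{distinct} underlying vertices of $V$ (so that their colors are genuinely independent coins).
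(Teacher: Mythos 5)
Your construction---a random balanced $2$-coloring of $V$ that deterministically separates the endpoints of each $M$-edge---is the right starting point and morally matches the argument in BDL. However, as written the proof has two concrete gaps, and the second is essential.

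First, with a \emph{fixed arbitrary} orientation of $M$, the colors of all vertices in $V(M)$ are deterministic, not random. So for $v\in N^\star(x)\cap V(M)$ with $v$ not the $M$-partner of $u$, the indicator $\mathbf{1}[\sigma(v)\neq\sigma(u)]$ is \emph{not} a Bernoulli$(1/2)$ random variable---it is a constant. In the worst case (an adversarial orientation making every $v\in N^\star(x)$ a tail, say) the random variable $\deg_{\widetilde H}(x)$ concentrates at $0$ or at $\deg_H(x)$, not near $\deg_H(x)/2$, and Hoeffding cannot rescue you. The fix is to also flip an independent fair coin for each matching edge of $M$ to choose its orientation; then, conditioning on $\sigma(u)$, the variables $\sigma(v)$ for $v\in N^\star(x)\setminus\{\text{$M$-partner of }u\}$ are uniform, but they are not all independent: two neighbours $v,v'\in N^\star(x)$ matched to each other by $M$ have perfectly anticorrelated colors. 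You must pair those up (each such pair contributes exactly $1$ deterministically) before applying Hoeffding to the remaining singletons.

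Second, and more fundamentally, the union bound over $V(H)\subseteq V(\phi(G))$ cannot yield the stated hypothesis $d\geq 36\delta^{-2}\log(W/\delta)$. A union bound over up to $nW$ vertices requires a per-vertex failure probability below $1/(nW)$, which with Hoeffding forces $d=\Omega(\delta^{-2}\log(nW))$. Your attempted patch---``in the regime of interest $W=\poly(n)$''---goes the wrong way: after the Gupta--Peng bucketing is applied, $W=\gamma_\epsilon$ is a \emph{constant}, so $\log W$ does not absorb $\log n$; the regime of interest is precisely the one where your bound breaks. The lemma's $n$-free condition on $d$ is achieved in BDL via the Lov\'asz Local Lemma rather than a global union bound: the bad event at $x=u^i$ depends only on the at most $d+1$ colors $\sigma(w)$, $w\in\{u\}\cup N^\star(x)$, and since $H$ has maximum degree $d$ and there are at most $W$ copies of each vertex, the dependency degree is $\poly(d,W)$, with no dependence on $n$. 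This is exactly why the required $d$ scales with $\log W$ and not $\log n$. Replacing your union bound by an LLL argument (and randomizing the orientation of $M$ as above) closes both gaps.
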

\begin{remark}
Bernstein, Dudeja and Langley \cite{BDL} state the lemma for the special case when $M$ is a maximum weight matching in $G$, however, without changing their argument, the same is true for any arbitrary matching M.   
\end{remark}

\begin{lemma}\label{lemma:refolding_approx}
 Let $G$ be a (possibly non-bipartite) weighted graph with weights in $[W]$. Let  $\epsilon >0$, $\lambda \leq \frac{\epsilon}{512}$, $\beta  \geq \frac{144}{\lambda^2}\log(2W/\lambda)$. Let $G_S \subseteq G$ be a subgraph of $G$. Consider the unfolded graph $\phi(G)$. Let $H$ be a subgraph of $\phi(G)$ with bounded edge-degree $\beta$, and let $U$ be the set of all edges in $\phi(G_S) \setminus H$ that are $(\phi(G_S), H, \beta, \lambda)$-underfull. Then $\mu_w(\mathcal{R}(H \cup U)) \geq (2/3-\epsilon) \mu_w(G_S)$. 
\end{lemma}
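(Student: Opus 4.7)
The plan is to execute the reduction sketched in Section~\ref{sec:tech}: bipartify the input using Lemma~\ref{lemma:bipartify}, sparsify $U$ down to a matching inside $\phi(M^*)$, recognize the resulting subgraph as an EDCS so that Theorem~\ref{thm:EDCS} applies, and finally pull everything back through refolding, which preserves matching size on the bipartite side by Lemma~\ref{lemma:matchsize}.

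More precisely, I would fix a maximum weight matching $M^*$ of $G_S$ and note that $\phi(M^*)$ is itself a matching of size $\sum_{e \in M^*} w_e = \mu_w(G_S)$. Since $H$ has bounded edge-degree $\beta$, its maximum degree is at most $\beta$, so applying Lemma~\ref{lemma:bipartify} to $G$ with matching $M^*$, subgraph $H$, degree bound $d = \beta$, and $\delta = \Theta(\lambda)$ chosen small enough that the hypothesis $d \geq 36\delta^{-2}\log(W/\delta)$ holds, yields a bipartite subgraph $\widetilde{G} \subseteq G$ containing $M^*$ for which $\widetilde{H} := H \cap \phi(\widetilde{G})$ satisfies $|\deg_{\widetilde{H}}(v) - \deg_H(v)/2| \leq \delta\beta$ for every vertex $v$. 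I would then set $\widetilde{U} := U \cap \phi(M^*)$ (a sub-matching of $\phi(M^*)$) and let $X := \widetilde{H} \cup \widetilde{U}$ and $Y := \widetilde{H} \cup \phi(M^*)$.

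The core step is to verify that $X$ is a $(\beta', \lambda')$-EDCS of $Y$ for $\beta' := \beta/2 + 2\delta\beta + 2$ and some $\lambda' = \Theta(\lambda)$. For Property P1, every $(u,v) \in X$ lies in either $\widetilde{H} \subseteq H$ (where the edge-degree bound gives $\deg_H(u) + \deg_H(v) \leq \beta$) or in $\widetilde{U} \subseteq U$ (where the underfull definition gives $\deg_H(u) + \deg_H(v) < \beta(1-\lambda) \leq \beta$); combining either case with $\deg_X(v) \leq \deg_{\widetilde{H}}(v) + 1 \leq \deg_H(v)/2 + \delta\beta + 1$ yields $\deg_X(u) + \deg_X(v) \leq \beta'$. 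For Property P2, any $(u,v) \in Y \setminus X$ lies in $\phi(M^*) \subseteq \phi(\widetilde{G}) \cap \phi(G_S)$ but not in $\widetilde{H}$, hence (using $(u,v) \in \phi(\widetilde{G})$) not in $H$, and not in $\widetilde{U}$, hence (using $(u,v) \in \phi(M^*)$) not in $U$; so $(u,v) \in \phi(G_S) \setminus H$ is non-underfull, giving $\deg_H(u) + \deg_H(v) \geq \beta(1-\lambda)$ and consequently $\deg_X(u) + \deg_X(v) \geq \beta(1-\lambda)/2 - 2\delta\beta \geq \beta'(1-\lambda')$ for an appropriate $\lambda' = \Theta(\lambda)$.

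With P1 and P2 in hand, and with $\delta$ and $\lambda'$ chosen so that $\lambda' \leq \epsilon/64$ and $\beta' \geq 8(\lambda')^{-2}\log(1/\lambda')$ hold under our hypotheses on $\beta$ and $\lambda$, Theorem~\ref{thm:EDCS} gives $\mu(X) \geq (2/3-\epsilon)\mu(Y) \geq (2/3-\epsilon)\mu(\phi(M^*)) = (2/3-\epsilon)\mu_w(G_S)$. Because $X \subseteq \phi(\widetilde{G})$ and $\widetilde{G}$ is bipartite, Lemma~\ref{lemma:matchsize} yields $\mu_w(\mathcal{R}(X)) \geq \mu(X)$; and since $X \subseteq H \cup U$ implies $\mathcal{R}(X) \subseteq \mathcal{R}(H \cup U)$, I conclude $\mu_w(\mathcal{R}(H \cup U)) \geq (2/3-\epsilon)\mu_w(G_S)$. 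The main obstacle I anticipate is the constant-tuning in the EDCS verification: one must pick $\delta$ small enough that the halving error $2\delta\beta$ can be absorbed into a $\lambda'$-slack meeting the requirements of Theorem~\ref{thm:EDCS}, while simultaneously keeping $\delta$ large enough that the bipartification hypothesis $d = \beta \geq 36\delta^{-2}\log(W/\delta)$ still follows from the stated bound $\beta \geq 144\lambda^{-2}\log(2W/\lambda)$.
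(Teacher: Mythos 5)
Your proposal is correct and follows essentially the same route as the paper: fix $M^*$, bipartify via Lemma~\ref{lemma:bipartify} with $d=\beta$ and $\delta=\Theta(\lambda)$, restrict $U$ to $\widetilde{U}=U\cap\phi(M^*)$, verify that $\widetilde{H}\cup\widetilde{U}$ is a $(\beta',\lambda')$-EDCS of $\widetilde{H}\cup\phi(M^*)$ with $\beta'=\beta/2+2\delta\beta+2$ and $\lambda'=\Theta(\lambda)$, apply Theorem~\ref{thm:EDCS}, and refold via Lemma~\ref{lemma:matchsize}. The paper instantiates the constants as $\delta=\lambda/2$ and $\lambda'=8\lambda$, which makes the two parameter conditions you flagged go through exactly; your version leaves the bookkeeping implicit but identifies the same constraint and the same resolution.
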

\begin{proof}
Let $\delta = \frac{\lambda}{2}$. Fix a maximum-weight matching $M^*$ of $G_S$, and let $\widetilde{G}= \widetilde{G}(M^*, H)$ be the bipartite subgraph obtained from Lemma \ref{lemma:bipartify}. Consider the subgraph $\phi(\widetilde{G}) \subseteq \phi(G).$ Let $\widetilde{H}:=H \cap \phi(\widetilde{G})$ be the restriction of $H$ to $\phi(\widetilde{G})$, and let $\widetilde{U}:=U \cap \phi(M^*)$ be the restriction of $U$ to the matching $\phi(M^*)$. Note that $\widetilde{U}$ is a matching. By Lemma \ref{lemma:bipartify}, we have $\phi(M^*) \subseteq \phi(\widetilde{G})$, and therefore $\widetilde{H} \cup \widetilde{U} \subseteq \phi(\widetilde{G})$. Therefore, we may now apply Lemma \ref{lemma:matchsize} to the bipartite graph $\widetilde{G}$ and the subgraph $\widetilde{H} \cup \widetilde{U}$ of $\phi(\widetilde{G})$. 

\begin{gather}
\begin{aligned}\label{eqn:nonbipedcs1}
\mu_w(\mathcal{R}(H \cup U))  & \geq \mu_w(\mathcal{R}(\widetilde{H} \cup \widetilde{U})), \quad && \text{since $H \cup U \supseteq \widetilde{H} \cup \widetilde{U}$} \\
& \geq \mu(\widetilde{H} \cup \widetilde{U}), \quad  &&\text{by Lemma \ref{lemma:matchsize}}.
\end{aligned}
\end{gather}
Furthermore, recalling that $M^*$ is a maximum weight matching in $G_S$, we have
\begin{equation}\label{eqn:nonbipedcs2}
\mu_w(G_S)  = w(M^*) = |\phi(M^*)| \leq \mu(\widetilde{H} \cup \phi(M^*)).  
\end{equation}
We will show that $\mu(\widetilde{H} \cup \widetilde{U}) \geq (\frac23 - \epsilon) \mu(\widetilde{H} \cup \phi(M^*))$. To this end, we will show that $\widetilde{H} \cup \widetilde{U}$ is an EDCS of $\widetilde{H} \cup \phi(M^*)$. 
\begin{claim}\label{claim:relaxed_edcs}
$\widetilde{H} \cup \widetilde{U}$ is a $(\beta', \lambda')$-EDCS of $\widetilde{H} \cup \phi(M^*)$ for $\beta'=\frac{\beta}2 + \beta \lambda + 2, \lambda' = 8 \lambda$. 
\end{claim}
\begin{proof}
Let us start by showing property P1 in Definition \ref{def:EDCS}. First note that for all $(u,v) \in \widetilde{H} \cup \widetilde{U}$, it holds that  $\deg_H(u) + \deg_H(v) \leq \beta$. Indeed, if $(u,v) \in \widetilde{H}$, then $(u,v) \in H$, so $\deg_H(u) + \deg_H(v) \leq \beta$ since $H$ is bounded edge-degree $\beta$. If instead  $(u,v) \in \widetilde{U}$, then $(u,v) \in U$,  so $\deg_H(u) + \deg_H(v)\leq (1-\lambda) \beta$, since all elements of $U$ are $(\phi(G), H, \beta, \lambda)$-underfull. Therefore, for all $(u,v) \in \widetilde{H} \cup \widetilde{U}$, it holds that 
\begin{align*}
\deg_{\widetilde{H}\cup \widetilde{U}}(u) + \deg_{\widetilde{H}\cup \widetilde{U}}(v) & \leq \deg_{\widetilde{H}}(u) +  \deg_{\widetilde{H}}(v) + \deg_{\widetilde{U}}(u) + \deg_{\widetilde{U}}(v) & \\
& \leq \frac{\deg_H(u) + \deg_H(v)}{2}+2\delta \beta  + \deg_{\widetilde{U}}(u) + \deg_{\widetilde{U}}(v) , & \text{by Lemma \ref{lemma:bipartify}}\\
& \leq  \frac{\beta}{2}  +2\delta \beta  + \deg_{\widetilde{U}}(u) + \deg_{\widetilde{U}}(v) , & \text{since $\widetilde{U}$ is a matching} \\
& = \frac{\beta}{2} +  \beta \lambda   + \deg_{\widetilde{U}}(u) + \deg_{\widetilde{U}}(v),  & \text{since $\delta = \frac{\lambda}{2}$} \\
&\leq  \frac{\beta}{2} +  \beta \lambda + 2
\\
& =  \beta'. 
\end{align*}

We now show property P2 in Definition \ref{def:EDCS}: 
If $(u,v) \in (\widetilde{H} \cup \phi(M^*))\setminus (\widetilde{H} \cup \widetilde{U})$, then $(u,v) \in \phi(M^*) \setminus U$, and in particular $\deg_H(u) + \deg_H(v) >  (1-\lambda) \beta$ (by definition of $U$). Thus, 
\begin{align*}
\deg_{\widetilde{H}\cup \widetilde{U}}(u) + \deg_{\widetilde{H}\cup \widetilde{U}}(v) & \geq \deg_{\widetilde{H}}(u) + \deg_{\widetilde{H}}(v) && \\
& \geq \frac{\deg_H(u) + \deg_H(v)}{2}-2\delta \beta, && \text{by Lemma \ref{lemma:bipartify}} \\
& \geq \frac{\beta (1-\lambda)}{2}-2\delta \beta &&  \\
& =\frac{\beta (1-\lambda)}{2}- \beta \lambda, && \text{since $\delta = \frac{\lambda}{2}$} \\
& \geq \left(\frac{\beta}{2}+ \lambda \beta + 2\right)(1-8\lambda) &&  \\
&= \beta'(1-\lambda'). &&
\end{align*}
The last inequality follows from simple algebraic manipulations, using the fact that  $\lambda \beta \geq 2$. 
\end{proof}

By the choice of parameters, we have $\lambda' \leq \frac{\epsilon}{64}$ and $\beta' \geq 8 \lambda'^{-2}\log(1/\lambda')$, so Claim \ref{claim:relaxed_edcs} together with Theorem \ref{thm:EDCS} yields $\mu(\widetilde{H} \cup \widetilde{U}) \geq (2/3-\epsilon) \mu(\widetilde{H} \cup \phi(M^*))$. 
Combining everything, we get
\begin{align*}
    \mu_w(\mathcal{R}(H \cup U)) & \geq \mu(\widetilde{H} \cup \widetilde{U}), & \text{by Equation \ref{eqn:nonbipedcs1}}  \\
    & \geq (2/3-\epsilon) \mu(\widetilde{H} \cup \phi(M^*)) & \\
    & \geq (2/3-\epsilon) \mu_w(G_S), & \text{by Equation \ref{eqn:nonbipedcs2}.}
\end{align*}
\end{proof}
Finally, we complete the proof of Theorem \ref{thm:main_stream}.

\begin{proof}[Proof of Theorem \ref{thm:main_stream}]We apply the reduction in Theorem \ref{thm:reduction} to Algorithm \ref{alg:batch_bernstein}. By Proposition \ref{prop:batch_algo}, Algorithm \ref{alg:batch_bernstein} computes a $(2/3- \epsilon)$-approximate maximum cardinality matching using space $O(n \log n \poly(b/\epsilon))$ in the $b$-batch random-order stream model. 
It remains to show that if the input graph is of the form $\phi(G)$ for some weighted graph $G$ with batches $\mathcal{B} = \{ \phi(e): e \in G\}$, then $H\cup U$ is $(2/3-\epsilon)$-refolding approximate. Let $G_{late} \subseteq G$ denote the weighted edges corresponding to $\mathcal{B}_{late}$. An application of the Chernoff bound for negatively associated random variables (Theorem \ref{thm:chernoff_neg}) shows that 
$\Pr[\mu_w(G_{late}) \geq (1-2 \epsilon)\mu_w(G)] \geq 1-n^{-5}$ (the argument is similar to Lemma \ref{lemma:whp_stream}, only replacing the role of batches with weighted edges). Applying Lemma \ref{lemma:refolding_approx} to the graph $G$ and the subgraph $G_S := G_{late}$ yields
\begin{align*}
   \mu_w(\mathcal{R}(H\cup U)) & \geq (2/3 - \epsilon) \mu_w(G_{late}), &\text{by Lemma \ref{lemma:refolding_approx}} \\
   &\geq (1-2 \epsilon) (2/3 - \epsilon) \mu_w(G) & \\
   & \geq (2/3 - 3\epsilon) \mu_w(G),&
\end{align*}
as required.  Re-scaling $\epsilon$ and applying Theorem \ref{thm:reduction} yields the result. 
\end{proof}
\pagebreak
\section{$5/6$-Approximation in the Robust Communication Model}\label{sec:Comm}
In this section, we prove Theorem \ref{thm:main_2party} and Theorem \ref{thm:main_kparty}. By applying the results from the previous section, we can generalize the protocol of Azarmehr and Behnezhad \cite{RobustComm} to the weighted case. 
By the reduction in Theorem \ref{thm:space}, we can assume that the edge weights take integral values in $[W]$, for a large constant $W$. We will now describe the protocol for the two-party model. 

Let $\epsilon >0$ be the final parameter we are aiming for, and let 
\[ \lambda = \frac{\epsilon}{2048}, \beta = 144\lambda^{-4}\log(2W/\lambda).\]
Let $E_A$ denote the set of edges assigned to Alice and $E_B$ the set of edges assigned to Bob. Alice simulates a random-order stream. She unfolds the edges and runs Algorithm \ref{alg:batch_bernstein} on the corresponding unweighted $W$-batch random-order stream. That way, she obtains a set $H \subseteq \phi(E_A)$ with bounded edge degree $\beta$ and a set $U_A \subseteq \phi(E_A)$ consisting of all $(\phi(E_A  \setminus E_{early}), H, \beta, \lambda)$-underfull edges, where $E_{early} \subseteq E_A$ denotes the first $\frac{\epsilon}{W}m$ weighted edges in her simulated stream. She communicates $\mathcal{R}(H \cup U_A)$ to Bob. Bob outputs a maximum weight matching in $\mathcal{R}(H \cup U_A) \cup E_B$. See Algorithm \ref{alg:protocol} for a formal description. \\

\begin{algorithm}[H]
\caption{Robust Communication Protocol for Weighted Graphs}\label{alg:protocol}
\DontPrintSemicolon \LinesNotNumbered
    \SetAlgoNoEnd
    \begin{enumerate}
    \item  Alice simulates a random-order stream by ordering the edges in $E_A$ uniformly at random. 
    \item  Alice obtains $H \cup U_A \subseteq \phi(E_A)$ by running Algorithm \ref{alg:batch_bernstein} on input $\phi(E_A)$ with batches $\mathcal{B} = \{ \phi(e) : e \in E_A\}$. She communicates $\mathcal{R}(H \cup U_A)$ to Bob. 
    \item Bob outputs a maximum weight matching in $\mathcal{R}(H \cup U_A) \cup E_B$. 
    \end{enumerate}
 \end{algorithm}

The protocol for $k$ parties is similar, only that now all of the first $k-1$ parties should simulate a random-order stream  (we describe the protocol more formally in the proof of Theorem \ref{thm:main_kparty}). 

 Assume that each edge is assigned to Bob with probability $p \leq \frac{1}{2}$ (this will make the analysis applicable to the $k$-party setting). Let $U_B$ be the set of all $(\phi(E_B), H, \beta, \lambda)$-underfull edges, i.e. the set of underfull edges assigned to Bob. Let  $U:= U_A \cup U_B$ denote the set of all underfull edges. We will define an auxiliary fractional matching $x$ on $\mathcal{R}(H \cup U)$ of weight at least $(2/3-\epsilon) \mu_w(G)$.  We will then extend it to a fractional matching $y$ on $E_B \cup \mathcal{R}(H \cup U_A)$, and show that due to the additional edges in $E_B$, the fractional matching $y$ has weight at least $(5/6-\epsilon)\mu_w(G)$. 
 
Let $E_{late}:= E \setminus E_{early}$. Fix a maximum weight matching $M^*$ in $E_{late}$. Define a fractional matching $x$ on $\mathcal{R}(H \cup U)$ as follows: 

\begin{itemize}
	\item Start with $H_1 = H$ and $U_1 = U$. 
	\item For $i = 1, \dots, \lambda \beta: $
	\begin{itemize}
		\item Let $M_i$ be a maximum weight matching in $\mathcal{R}(H_i \cup U_i)$. 
		\item Let $H_{i+1} = H_i \setminus \phi(M_i\setminus M^*)$ and $U_{i+1} = U_i \setminus \phi(M_i \setminus  M^*).$
	\end{itemize}
	\item For every edge $e$, let $x_e = \frac{| \{i : e \in M_i\}|}{\lambda \beta}.$
\end{itemize}  In other words, we start with $H \cup U$, and then in each iteration, we find a maximum weight matching $M_i$ in the refolding and remove the edges in $\phi(M_i \setminus M^*)$ from $H \cup U$. 

\begin{remark}\label{remark:x}
Note that this is a valid fractional matching, since
$$    x_u = \sum_{e \ni u} x_e = \sum_{e \ni u} \frac{| \{i : e \in M_i\}|}{\lambda \beta} =  \sum_i \frac{|\{e \ni u: e \in M_i\}|}{\lambda\beta} \leq 1.  $$

Furthermore, note that $x_e \leq \frac{1}{\lambda \beta}$ whenever $e \notin M^*$. This is because, if $e \in M_i \setminus M^*$ for some $i$, then $e \notin \mathcal{R}(H_j \cup U_j)$ for all $j >i$. 
\end{remark}

\begin{lemma}\label{lemma:xval}
	It holds that $\sum_e w_e x_e \geq (\frac{2}{3} -  \epsilon) \mu_w(E_{late})$. 
\end{lemma}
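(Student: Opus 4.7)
The plan is to reduce to a per-iteration bound and then apply a slightly relaxed version of Lemma~\ref{lemma:refolding_approx} at each iteration. Unfolding the definition of $x$,
\[\sum_e w_e x_e \;=\; \frac{1}{\lambda\beta}\sum_{i=1}^{\lambda\beta} w(M_i),\]
so it suffices to prove $w(M_i)\geq(2/3-\epsilon)\mu_w(E_{late})$ for every iteration $i$. Since $M_i$ is a maximum-weight matching in $\mathcal{R}(H_i\cup U_i)$, this reduces to showing $\mu_w(\mathcal{R}(H_i\cup U_i))\geq(2/3-\epsilon)\mu_w(E_{late})$ for every $i\in[\lambda\beta]$.

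For $i=1$ this is exactly Lemma~\ref{lemma:refolding_approx} applied with $G_S := E_{late}$ (in this application $U = U_A\cup U_B$ really is the set of all underfull edges in $\phi(E_{late})$, so the hypothesis holds verbatim). For $i>1$ the lemma cannot be invoked as a black box: removing edges from $H$ may turn previously non-underfull edges into underfull ones, and such newly-underfull edges can be missing from $U_i$. My plan is to rerun the proof of Lemma~\ref{lemma:refolding_approx} at iteration $i$ with a mildly relaxed parameter $\lambda_i := \lambda + 2(i-1)/\beta$, which stays below $3\lambda$ since $i\leq\lambda\beta$. The central bookkeeping is that each $M_j$ is a matching in the refolding, so its unfolding $\phi(M_j\setminus M^*)$ contributes at most one edge at every vertex copy of $\phi(G)$, and moreover $\phi(M_j\setminus M^*)$ is disjoint from $\phi(M^*)$. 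Consequently, $\deg_{H_i}(u)+\deg_{H_i}(v)\geq\deg_H(u)+\deg_H(v)-2(i-1)$ for all $u,v$, and no edge of $\phi(M^*)$ is ever removed from $H\cup U$. Two facts follow that are exactly what the original proof used about $U$: first, $\phi(M^*)\setminus H_i \subseteq \phi(M^*)\setminus H$; second, any $(u,v)\in\phi(M^*)\setminus H_i$ that is $(\phi(E_{late}),H_i,\beta,\lambda_i)$-underfull must satisfy $\deg_H(u)+\deg_H(v)<\beta(1-\lambda)$ by the degree bound, so $(u,v)\in U$ and therefore $(u,v)\in U_i$.

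With these two facts in hand, I would rerun the proof of Lemma~\ref{lemma:refolding_approx} with $(H_i,U_i,\lambda_i)$ in place of $(H,U,\lambda)$: bipartify via $\widetilde{G}_i := \widetilde{G}(M^*,H_i)$ using Lemma~\ref{lemma:bipartify}, sparsify to the matching $\widetilde{U}_i := U_i\cap\phi(M^*)$, verify as in Claim~\ref{claim:relaxed_edcs} that $\widetilde{H}_i\cup\widetilde{U}_i$ is a $(\beta',\lambda')$-EDCS of $\widetilde{H}_i\cup\phi(M^*)$ with $\beta' := \beta/2+\beta\lambda_i+2$ and $\lambda' := 8\lambda_i$, invoke Theorem~\ref{thm:EDCS}, and refold through the bipartite $\widetilde{G}_i$ via Lemma~\ref{lemma:matchsize}. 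The protocol's choice $\beta = 144\lambda^{-4}\log(2W/\lambda)$ leaves enough slack to absorb $\lambda_i\leq 3\lambda$: indeed $\lambda' \leq 24\lambda \leq \epsilon/64$ and $\beta'\geq\beta/2\geq 8(\lambda')^{-2}\log(1/\lambda')$, so Theorem~\ref{thm:EDCS} applies and yields $\mu(\widetilde{H}_i\cup\widetilde{U}_i)\geq(2/3-\epsilon)|\phi(M^*)| = (2/3-\epsilon)\mu_w(E_{late})$. Refolding gives the per-iteration bound, and averaging over $i$ concludes the proof.

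The main obstacle, as flagged above, is the subset-versus-whole issue for $U_i$: because $H$ shrinks across iterations, $U_i$ no longer literally satisfies the hypothesis of Lemma~\ref{lemma:refolding_approx}, and the quantitative fix will require carefully tracking that each iteration drops the degree at any vertex copy by at most one and that $\phi(M^*)$ is untouched. The extra $\lambda^{-4}$ slack built into $\beta$ (versus the $\lambda^{-2}$ minimum required by Lemma~\ref{lemma:refolding_approx}) is what ensures the EDCS parameters degrade only by a constant factor over the $\lambda\beta$ iterations, so the $(2/3-\epsilon)$ guarantee survives uniformly.
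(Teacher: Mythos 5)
Your proof is correct and takes essentially the same route as the paper. The paper also argues per iteration that $w(M_i) \geq (2/3-\epsilon)\mu_w(G_i) \geq (2/3-\epsilon)\mu_w(E_{late})$ with $G_i := E_{late}\setminus \left(\cup_{j<i} M_j \setminus M^*\right)$, using the same observation that removing the matching $\phi(M_j\setminus M^*)$ from $H_j$ drops each edge-degree by at most two --- so that the underfullness threshold degrades only from $(1-\lambda)\beta$ to $(1-3\lambda)\beta$ over the $\lambda\beta$ iterations --- and then invokes Lemma~\ref{lemma:refolding_approx} with this relaxed parameter; you spell out more carefully why this invocation, taken literally, requires re-examining the two places $U$ enters the proof of Lemma~\ref{lemma:refolding_approx} (the $\phi(M^*)$-disjointness giving $\phi(M^*)\setminus H_i \subseteq \phi(M^*)\setminus H$, and the degree-drop bound for property P2), which is a genuine gloss in the paper's writeup. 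One small correction to your closing remark: the $\lambda^{-4}$ (rather than $\lambda^{-2}$) dependence of $\beta$ is not what absorbs the parameter degradation here --- the EDCS argument with $\lambda_i \leq 3\lambda$ would already be satisfied by $\beta = \Theta(\lambda^{-2}\log(W/\lambda))$; the $\lambda^{-4}$ is instead forced by later steps in Section~\ref{sec:Comm}, namely the Chernoff bound in Claim~\ref{claim:hat_prob_bd} (which needs $\beta \geq 12\epsilon^{-3}\lambda^{-1}$) and the bound $2/(\lambda\beta)\leq\epsilon^3$ used for the blossom-inequality rounding.
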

\begin{proof}
For each $i$, let $G_i := E_{late} \setminus \left( \cup_{j < i} M_j \setminus M^* \right)$. We will apply Lemma \ref{lemma:refolding_approx} to the graph $G  \setminus \left( \cup_{j < i} M_j \setminus M^* \right)$ and subgraph $G_S =G_i.$
    Recall that we obtain $H_{i+1}$ from $H_{i}$ by removing the edges in $\phi(M_i \setminus M^*)$. Since $\phi(M_i \setminus M^*)$ is a matching, the degree of each edge in $\phi(G)$ will decrease by at most two in each iteration. Therefore,  $U_i$ contains all the edges in $G_i \setminus H_i$ that have $H_i$ degree less than $(1-\lambda)\beta - 2(i -1) \geq (1-3 \lambda ) \beta$. Therefore, by Lemma \ref{lemma:refolding_approx}, we have 
    \begin{equation}\label{eqn:w(Mi)}
    w(M_i) = \mu_w\left(\mathcal{R}(H_i \cup U_i)\right) \geq \left( \frac{2}{3}-\epsilon\right)\mu_w(G_i). 
    \end{equation}
    Also, $G_i$ is constructed so that it always contains $M^*$, so 
    \begin{equation}\label{eqn:mu(Gi)}
    \mu_w(G_i) \geq w(M^*) = \mu_w(E_{late}).
    \end{equation}
    Combining, we obtain 
    \begin{align*}
    \sum_{e \in \mathcal{R}(H \cup U)} w_e x_e &= \sum_{e \in \mathcal{R}(H \cup U)} w_e  \frac{| \{i : e \in M_i\}|}{\lambda \beta} &  \\
    & = \frac{1}{\lambda \beta} \sum_{i} \sum_{e \in \mathcal{R}(H \cup U)} w_e \mathbbm{1} \{e \in M_i\} &\\
    & = \frac{1}{\lambda \beta} \sum_i w(M_i) & \\
    &\geq \frac{1}{\lambda \beta} \sum_i \left( \frac{2}{3}-\epsilon\right)\mu_w(G_i). & \text{by Equation \ref{eqn:w(Mi)}}\\
    &\geq  \left(\frac{2}{3}-\epsilon \right) \mu_w(E_{late}),& \text{by Equation \ref{eqn:mu(Gi)}.}
    \end{align*}
\end{proof}

Recall that the set of edges that Bob has access to is $E_B \cup \mathcal{R}(H \cup U)$. We need to show that $\mu_w(E_B \cup \mathcal{R}(H \cup U) ) \geq( \frac{5}{6} - \epsilon) \mu_w(G)$. We will do this by extending the fractional matching $x$ on $\mathcal{R}(H \cup U)$ to a fractional matching $y$ on $E_B \cup  \mathcal{R}(H \cup U)$.  In order to describe $y$, we will condition on the set of early edges $E_{early}$, thereby fixing $\mathcal{R}( H \cup U)$ and $x$. For each edge $e \in E_{late}$, we have 
$$\Pr[e \in E_B | e \in E_{late}] =\frac{ \Pr[e \in E_B  \land e \in  E_{late}]}{\Pr[e \in E_{late}]} = \frac{p}{1-\epsilon/W}. $$
 and 
 $$\Pr[e \in E_A | e \in E_{late}] = 1 -  \frac{p}{1-\epsilon/W}. $$
Recall that $M^*$ is a fixed maximum weight matching in $E_{late}$.  Let $M_{in} := M^* \cap \mathcal{R}(H \cup U)$ and let $M_{out}:= M^* \setminus  \mathcal{R}(H \cup U). $
After drawing $E_B$, define a random matching $M' \subseteq M^*$ as follows: 
\begin{itemize}
    \item Include each edge $e \in M_{in}$ independently with probability $p$.
    \item Include each edge $e \in M_{out} \cap E_B$ independently with probability $1-\epsilon/W.$
\end{itemize}
Conditioned on $E_{early}$, each edge in $M_{out}$ ends up in $M'$ independently with probability $(1-\epsilon/W) \cdot  \frac{p}{1-\epsilon/W} = p$.  Each edge in $M_{in}$ also ends up in $M'$ independently with probability $p$, so overall each edge in $M^*$ ends up in $M'$ independently with probability $p$. 

For any edge $e \notin M^*$, let $p_e$ denote the probability that $e$ is \emph{not} adjacent to any edge in $M'$. In other words, 
\begin{equation*}
p_e =
\begin{cases} (1-p) & \text{if $e$ has exactly one endpoint matched by $M^*$,} \\
(1-p)^2 & \text{if both of the endpoints of $e$ are matched by $M^*$.} 
\end{cases}
\end{equation*}
We can now define   $\hat{y}$ on $E_B \cup \mathcal{R}(H \cup U)$.
\begin{equation*}
\hat{y}_e = 
\begin{cases}
	1  &\text{if $e \in M'$}, \\
	x_e &\text{if $e \in M^* \setminus M'$}, \\
	0  &\text{if $e \notin M^* $ and $e$ is adjacent to at least one edge of $M'$ }\\
	(1-p) \frac{x_e}{p_e} &\text{if $e \notin M^*$ and $e$ is not adjacent to $M'$}. 	 
\end{cases}
\end{equation*}
Finally, we scale down $\hat{y}$ and zero out some of the entries to obtain a valid fractional matching $y$. 
\begin{equation*}
	y_{(u,v)} = 
	\begin{cases} 
		0 & \text{ if $\hat{y}_u/(1+\epsilon) > 1$ or $\hat{y}_v/(1+\epsilon) > 1$ }  \\
		\frac{\hat{y}_{(u,v)}}{1+\epsilon} & \text{otherwise}. 
	\end{cases}
\end{equation*}	
\begin{lemma}\label{lemma:fractmatching}
	It holds that $$\mathbb{E}\left[\sum_{e \in E} w_e y_e \right] \geq \left(\frac{2}{3}+\frac{p}{3}-4 \epsilon \right) \mu_w(G).$$
	\end{lemma}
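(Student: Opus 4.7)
The plan is to analyze $\mathbb{E}[\sum_e w_e \hat{y}_e]$ first and then argue that passing from $\hat{y}$ to $y$ loses only a $(1+O(\epsilon))$ factor.  First I would compute $\mathbb{E}[\hat{y}_e]$ edge-by-edge, conditioning on $E_{early}$ (which fixes $M^*$ and $x$).  For $e \in M^*$, the definition immediately gives $\mathbb{E}[\hat{y}_e] = p + (1-p)x_e$.  For $e \notin M^*$, splitting according to whether $e$ is adjacent to $M'$ (which happens with probability $1-p_e$) shows that the $(1-p)x_e/p_e$ case on the non-adjacent event exactly cancels the $p_e$ to give $\mathbb{E}[\hat{y}_e] = (1-p)x_e$.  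Summing yields
$$\mathbb{E}\Bigl[\sum_e w_e \hat{y}_e\Bigr] \;=\; p\,w(M^*) + (1-p)\sum_e w_e x_e \;\geq\; \Bigl(p + (1-p)(\tfrac{2}{3}-\epsilon)\Bigr)\mu_w(E_{late})$$
by Lemma \ref{lemma:xval} and $w(M^*) = \mu_w(E_{late})$.  A Chernoff bound for negatively associated variables, modelled on Lemma \ref{lemma:whp_stream}, gives $\mu_w(E_{late}) \geq (1-2\epsilon)\mu_w(G)$ with high probability, so the right-hand side is at least $\bigl(\tfrac{2}{3}+\tfrac{p}{3}-O(\epsilon)\bigr)\mu_w(G)$.

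Next I would show that the scaling $y_e = \hat{y}_e/(1+\epsilon)$ (together with its zeroing step) preserves essentially all of this weight, by establishing that $\hat{y}_v \le 1+\epsilon$ at every vertex with overwhelming probability.  The expectation bound $\mathbb{E}[\hat{y}_v] \le 1$ follows immediately from $\sum_{e \ni v} x_e \le 1$ (Remark \ref{remark:x}) combined with the per-edge expectation formulas above.  For concentration, after conditioning on $E_{early}$ and $M^*$ the random matching $M'$ is determined by independent Bernoulli$(p)$ inclusions of the edges of $M^*$; the random part of $\hat{y}_v$ then decomposes as a sum indexed by these independent bits, in which each summand has magnitude at most $(1-p)x_e/p_e \le 2x_e \le O(1/(\lambda\beta))$ by Remark \ref{remark:x} and $p \le 1/2$.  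A standard Chernoff bound for sums of bounded independent random variables gives $\Pr[\hat{y}_v > 1+\epsilon] \le \exp(-\Omega(\epsilon^2\lambda\beta))$, and by the choice of $\beta$ this is small enough that the contribution of "bad" vertices (either via a union bound or via a direct bound on $\mathbb{E}[\hat{y}_v\,\mathbbm{1}\{\hat{y}_v > 1+\epsilon\}]$ combined with the deterministic bound $\hat{y}_v = O(Wn)$ and $\mu_w(G) = \Omega(W\polylog n)$ from the reduction) is negligible compared to $\epsilon\mu_w(G)$.

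Combining the two steps,
$$\mathbb{E}\Bigl[\sum_e w_e y_e\Bigr] \;\geq\; \frac{1}{1+\epsilon}\mathbb{E}\Bigl[\sum_e w_e \hat{y}_e\Bigr] - \text{(negligible correction)},$$
and absorbing the $(1+\epsilon)$ factor into the $\epsilon$ budget yields the advertised $\bigl(\tfrac{2}{3}+\tfrac{p}{3}-4\epsilon\bigr)\mu_w(G)$ bound.  The main obstacle I expect is the concentration step for $\hat{y}_v$: although $\mathbb{E}[\hat{y}_v] \le 1$ is immediate, the indicators $\mathbbm{1}\{u \notin V(M')\}$ across edges $(v,u)$ are not independent and one must re-express $\hat{y}_v$ as a function of the truly independent Bernoullis (one per $M^*$ edge, with each such edge controlling at most two indicators) before Chernoff can be applied.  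The tightness $x_e \le 1/(\lambda\beta)$ for $e \notin M^*$ coming from Remark \ref{remark:x} is the crucial ingredient making the tail small enough to outweigh the loss incurred by the zeroing step, and it is this interplay between the construction of $x$ in the previous lemma and the Chernoff bound here that makes the argument go through.
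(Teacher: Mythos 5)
Your plan matches the paper's proof in structure: condition on $E_{early}$, compute $\mathbb{E}[\hat{y}_e]$ edge-by-edge to get $\mathbb{E}[\sum_e w_e\hat{y}_e] = p\,w(M^*) + (1-p)\sum_e w_e x_e \geq (\tfrac23 + \tfrac{p}{3}-\epsilon)\mu_w(E_{late})$, then show $\hat{y}_v \leq 1+\epsilon$ with good probability via a Chernoff bound over the independent Bernoulli$(p)$ inclusions of $M^*$-edges into $M'$ (after pairing up the two edges $(v,u),(v,w)$ when $(u,w)\in M^*$, exactly as you say), and finally argue that passing from $\hat y$ to $y$ costs only $(1-O(\epsilon))$. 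You also correctly identify the crucial role of the bound $x_e\leq 1/(\lambda\beta)$ for $e\notin M^*$ from Remark~\ref{remark:x}, which is what makes the increments small enough for concentration. The only cosmetic difference in the first half is that you use a concentration bound $\mu_w(E_{late})\geq(1-2\epsilon)\mu_w(G)$ whp, whereas the paper simply takes $\mathbb{E}[\mu_w(E_{late})]\geq(1-\epsilon)\mu_w(G)$; both work.

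The one place you would run into trouble is the final step, where you float two mechanisms for charging the loss from zeroing out: a union bound, or a direct bound on $\mathbb{E}[\hat{y}_v\mathbbm{1}\{\hat y_v>1+\epsilon\}]$ against the absolute quantities $\hat y_v = O(Wn)$ and $\mu_w(G)=\Omega(W\polylog n)$. Neither of these closes. With the paper's choice of parameters ($\beta = \Theta(\epsilon^{-4}\log W)$), the Chernoff tail $\exp(-\Omega(\epsilon^2\lambda\beta))$ is only shown to be at most $\epsilon$, a constant, so a union bound over $n$ vertices is hopeless; and an absolute comparison of $nW\epsilon$ against $\mu_w(G)$ requires $\mu_w(G)=\Omega(nW)$, which the reduction does not supply (also, $\hat y_v$ is deterministically $O(1)$, not $O(Wn)$, so the bound you quote is off in scale). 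What the paper does instead (Claim~\ref{claim:hat_to_y}) is a \emph{relative} charging: since $\Pr[\hat y_u>1+\epsilon]\leq\epsilon$ for each vertex, the expected weight lost by zeroing around $u$ is charged as an $\epsilon$-fraction of the weight $\sum_{e\ni u}w_e\hat{y}_e$ carried by $u$'s edges, so summing over $u$ gives an expected loss of at most $2\epsilon\sum_e w_e\hat{y}_e$, i.e.\ a multiplicative $(1-3\epsilon)$ factor on $\mathbb{E}[\sum_e w_e\hat y_e]$ with no reference to $\mu_w(G)$ or $n$. Replacing your absolute charging with this relative charging is what makes the last step go through, and it is the one structural piece your outline is missing.
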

 The proof is similar to Lemma 4.6 in \cite{RobustComm}, and is included in Appendix \ref{appendix:RobustCommProofs}. Next, we round $y$ to an integral matching.
 
	\begin{lemma}
		There exists a matching of weight at least  $(1-3\epsilon)\sum_{e \in E} w_e y_e$ in $E_B \cup \mathcal{R}(H \cup U_A)$. 
	\end{lemma}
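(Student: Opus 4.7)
The plan is to apply the approximate blossom inequality (Proposition \ref{prop:blossom}) to the fractional matching $y$ on the graph $G' := E_B \cup \mathcal{R}(H \cup U_A)$. Before doing so, I would verify that $y$ is indeed a valid fractional matching supported on $G'$. Vertex constraints $\sum_{e \ni u} y_e \leq 1$ are guaranteed by the zeroing-out step in the definition of $y$. For the support, $\hat{y}_e > 0$ forces $e$ to lie in one of three sets: $M' \cap M_{out} \subseteq E_B$, $M_{in} \subseteq \mathcal{R}(H \cup U)$, or an edge of $\mathcal{R}(H \cup U)$ not adjacent to $M'$ (since $x_e = 0$ outside $\mathcal{R}(H \cup U)$). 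In every case the edge lies in $E_B \cup \mathcal{R}(H \cup U)$, and since $U_B \subseteq \phi(E_B)$ implies $\mathcal{R}(U_B) \subseteq E_B$, we get $\mathcal{R}(H \cup U) \subseteq \mathcal{R}(H \cup U_A) \cup E_B = G'$.

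Next, I would verify the blossom constraints: for every $S \subseteq V$ with $|S| \leq 1/\epsilon$, $\sum_{e \in G'[S]} y_e \leq \lfloor |S|/2 \rfloor$. The case $|S|=1$ is trivial, so assume $|S|\geq 2$. I would split the sum into contributions from $M^*$-edges and from non-$M^*$-edges. Since $M^*$ is a matching, $|M^* \cap G'[S]| \leq \lfloor |S|/2 \rfloor$, and each such edge carries $y_e \leq 1/(1+\epsilon)$, so their total contribution is at most $\lfloor |S|/2 \rfloor/(1+\epsilon) \leq \lfloor |S|/2 \rfloor - \epsilon/2$, using $\lfloor |S|/2 \rfloor \geq 1$.

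For the non-$M^*$-edges I would exploit the sparsity guarantee from Remark \ref{remark:x}: $x_e \leq 1/(\lambda \beta)$ whenever $e \notin M^*$. Combined with $p_e \geq (1-p)^2 \geq 1/4$ (using the standing assumption $p \leq 1/2$), this gives $y_e \leq \hat{y}_e \leq (1-p) x_e / p_e \leq 4/(\lambda\beta)$ for every non-$M^*$ edge in the support. Since $|G'[S]| \leq \binom{|S|}{2} \leq 1/(2\epsilon^2)$, the total non-$M^*$ contribution is at most $2/(\epsilon^2 \lambda \beta)$. With $\lambda = \epsilon/2048$ and $\beta = 144\lambda^{-4}\log(2W/\lambda)$, the product $\lambda\beta$ is vastly larger than $4/\epsilon^3$, so this contribution is bounded by $\epsilon/2$. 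Adding the two bounds yields $\sum_{e \in G'[S]} y_e \leq \lfloor |S|/2 \rfloor$, as required.

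Having verified the hypotheses, Proposition \ref{prop:blossom} applied with parameter $\epsilon$ produces a matching in $G'$ of weight at least $(1-\epsilon)\sum_e w_e y_e$, which is certainly at least $(1-3\epsilon) \sum_e w_e y_e$. The only non-routine step is the blossom verification, and the heart of that verification is the observation that non-matching entries of $x$ (and hence of $\hat y$) are $O(1/(\lambda\beta))$-small thanks to the iterative construction of $x$; everything else is a straightforward counting argument.
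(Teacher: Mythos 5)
Your proof is correct and follows essentially the same route as the paper: both decompose the blossom sum into $M^*$-edges and non-$M^*$-edges, use $|M^* \cap G'[S]| \leq \lfloor |S|/2 \rfloor$ for the former, and use the sparsity $x_e = O(1/(\lambda\beta))$ from Remark~\ref{remark:x} together with $|G'[S]| = O(1/\epsilon^2)$ for the latter. The only (cosmetic) difference is in how the $O(\epsilon)$ slack is absorbed: the paper allows the constraint to overshoot by $\epsilon$ and then applies Proposition~\ref{prop:blossom} to $(1-2\epsilon)y$, whereas you observe that the built-in $1/(1+\epsilon)$ scaling in the definition of $y$ already leaves enough room on the $M^*$-edges to meet the constraint exactly and apply the proposition to $y$ directly.
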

	\begin{proof}
		For every edge $e \notin M^*$, we have that $y_e \leq \frac{2}{\lambda \beta} \leq \epsilon^3$, by Remark \ref{remark:x}. So for any $S \subseteq V$ with $|S| \leq \frac{1}{\epsilon}$, it holds that 
		$$\sum_{e \in G[S]} y_e = \sum_{e \in G[S] \cap M^* } y_e +  \sum_{e \in G[S] \setminus M^* } y_e \leq |G[S] \cap M^* | + \frac{\epsilon^3}{\epsilon^2} \leq \left \lfloor \frac{|S|}{2} \right \rfloor + \epsilon. $$
		Applying Proposition \ref{prop:blossom} to $(1-2\epsilon)y$ yields the result. 
	\end{proof}
 Finally, we show that we have a large matching with high probability, and not just in expectation. 
	\begin{lemma}\label{lemma:whp}
		With probability at least $1-n^{-5}$, there exists a matching of weight at least  $\left(\frac{2}{3}+\frac{p}{3}-O(\epsilon) \right) \mu_w(G)$ in $E_B \cup \mathcal{R}(H \cup U_A)$.
	\end{lemma}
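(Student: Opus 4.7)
The plan is to upgrade the expectation bound of Lemma \ref{lemma:fractmatching} into a high-probability bound by applying concentration over the two sources of randomness in the protocol: the random permutation that determines $E_{early}$, and the independent Bernoulli$(p)$ assignment of each late edge to Bob. As in Claim \ref{claim:largematching}, I would first reduce to the case $\mu_w(G) = \Omega(\log n / \epsilon^2)$; otherwise $m = O(n \mu_w(G) W) = O(n \log n / \epsilon^2)$ and Alice can simply send her entire edge set within the space budget.

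Then I would condition on $E_{early}$, which fixes $H$, $U_A$, $x$, $M^*$, $M_{in}$, and $M_{out}$. Writing $Z_e := \mathbbm{1}[e \in M']$ for each $e \in M^*$, the variables $\{Z_e\}$ are independent Bernoulli$(p)$ by the construction of $M'$, and the total weight $W_y := \sum_e w_e y_e$ can be written (up to the zeroing step) as
\[
(1+\epsilon)\, W_y \;=\; \sum_{e \in M^*} w_e \bigl(Z_e + x_e(1-Z_e)\bigr) \;+\; \sum_{e \notin M^*} \frac{(1-p) w_e x_e}{p_e} \prod_{e' \in M^* :\, e' \sim e} (1 - Z_{e'}).
\]
Flipping a single $Z_{e'} = \mathbbm{1}[e'=(u,v) \in M']$ changes the first sum by at most $w_{e'} \leq W = O(1)$, and changes the second sum by at most $\sum_{e \sim e'} w_e \hat{y}_e \leq W(\hat{y}_u + \hat{y}_v) = O(1)$, using the near-vertex constraints on $\hat{y}$ and the bounded edge weights. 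Hence $W_y$ is an $O(1)$-Lipschitz function of the $|M^*|$ independent variables $\{Z_e\}$, and a standard bounded-differences inequality (e.g.\ McDiarmid) yields
\[
\Pr\bigl[\, W_y \leq \mathbb{E}[W_y \mid E_{early}] - \epsilon \mu_w(G) \,\bigr] \;\leq\; \exp\bigl(-\Omega(\epsilon^2 \mu_w(G)^2 / |M^*|)\bigr) \;\leq\; n^{-6},
\]
using $|M^*| \leq \mu_w(G)$ together with $\mu_w(G) = \Omega(\log n / \epsilon^2)$.

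The randomness of $E_{early}$ is then handled by a Lemma \ref{lemma:whp_stream}-style argument: the negatively associated Chernoff bound of Theorem \ref{thm:chernoff_neg} yields $\mu_w(E_{late}) \geq (1 - 2\epsilon)\mu_w(G)$ with probability at least $1 - n^{-5}$, since edge weights are bounded by $W = O(1)$ after the Gupta--Peng reduction. A union bound combined with Lemma \ref{lemma:fractmatching} then gives $W_y \geq (2/3 + p/3 - O(\epsilon))\mu_w(G)$ with probability at least $1 - n^{-5}$. Finally, rounding $y$ to an integral matching via Proposition \ref{prop:blossom} (using $y_e \leq 1/(\lambda \beta) \leq \epsilon^3$ for $e \notin M^*$ by Remark \ref{remark:x}) costs only a factor $(1-3\epsilon)$, and the resulting matching is supported on $E_B \cup \mathcal{R}(H \cup U_A)$ because $\mathcal{R}(U_B) \subseteq E_B$ and $M' \setminus M_{in} \subseteq E_B$. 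The main obstacle will be the bounded-differences analysis for the non-matching sum: making precise that flipping a single $Z_{e'}$ moves only $O(1)$ of $\hat{y}$-weighted mass, which reduces to the (near-)vertex constraints on $\hat{y}$ together with the $O(1)$ weight bound furnished by the Gupta--Peng reduction.
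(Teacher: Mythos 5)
Your approach is genuinely different from the paper's. The paper does not apply concentration to the fractional matching value $\sum_e w_e y_e$ at all. Instead, it conditions on $E_{early}$ and directly treats the quantity $f(x) := \mu_w(\mathcal{R}(H \cup U) \cup E_x)/W$ (the normalized weight of the best matching Bob can output, as a function of the $0/1$ vector $x$ indicating which of the remaining edges land in $E_B$) as a \emph{self-bounding} function, and applies Proposition \ref{prop:selfbound}. The self-bounding property is exactly what is needed here: there may be up to $m$ independent indicator variables, so a naive bounded-differences bound would give variance proxy $O(m)$, far too large; self-boundedness gives variance proxy $O(\mathbb{E}[Z]) = O(\mu_w(G)/W)$, which combined with $\mu_w(G) = \Omega(W \epsilon^{-2}\log n)$ yields the $n^{-5}$ failure probability. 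You sidestep the many-variables problem by observing that the fractional matching $y$ depends only on the $|M^*| \leq \mu_w(G)$ variables $Z_e = \mathbbm{1}[e \in M']$, which is a nice reduction, but it comes at the cost described below.

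The gap in your argument is in the Lipschitz claim for $W_y = \sum_e w_e y_e$. The displayed identity and the bounded-differences analysis you sketch apply to $\hat{W} := \sum_e w_e \hat{y}_e$, not to $(1+\epsilon)W_y$; these agree only when no zeroing occurs. Flipping a single $Z_{e'}$ with $e' = (a,b)$ does change $\hat{W}$ by $O(W)$ (via $\hat{y}_a + \hat{y}_b = O(1)$), so $\hat{W}$ concentrates by McDiarmid. But concentration of $\hat{W}$ gives only an \emph{upper} bound on $W_y$ (since $y \leq \hat{y}/(1+\epsilon)$ pointwise), whereas the lemma needs a \emph{lower} bound. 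Attempting McDiarmid directly on $W_y$ runs into the zeroing step: flipping $Z_{e'}$ perturbs $\hat{y}_c$ for every $c \in N(a) \cup N(b)$, and if many such $\hat{y}_c$ happen to sit within $O(\epsilon^3)$ of the threshold $1+\epsilon$, the indicator $\mathbbm{1}[\hat{y}_c > 1+\epsilon]$ can flip for many $c$ simultaneously, un-zeroing (or zeroing) $\Theta(1)$ mass at each such vertex. In the worst case the change to $W_y$ is therefore $\Theta(W \cdot (\deg(a)+\deg(b)))$, not $O(W)$, so the bounded-differences hypothesis fails. Claim \ref{claim:hat_prob_bd} controls the zeroing loss \emph{in expectation}, but turning that into a high-probability statement would need its own concentration argument (and the events $\{\hat{y}_u > 1+\epsilon\}$ are correlated across $u$), which is precisely the difficulty the self-bounding route avoids by working with the integral matching weight from the start. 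To repair your proof you would either need a substantially more careful treatment of the zeroing step, or you would end up rediscovering something like the paper's self-bounding argument; as written, the proposal does not establish the required lower bound on $W_y$ with high probability.
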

 The proof is similar to Lemma 5.2 in \cite{RobustComm}, and is included in Appendix \ref{appendix:whpProof}. We now complete the proofs of Theorem \ref{thm:main_2party} and Theorem \ref{thm:main_kparty}.
		
\begin{proof}[Proof of Theorem \ref{thm:main_2party}]
Suppose that the edge weights are in $[W]$. By Proposition \ref{prop:batch_algo}, Protocol \ref{alg:protocol} uses $O(n \log n \poly(W/\epsilon))$ words of communication with high probability. By Lemma \ref{lemma:whp}, the protocol achieves a $\left(\frac{2}{3}+\frac{p}{3}-O(\epsilon)\right)$-approximation with high probability. So by Theorem \ref{thm:space}, there exists a protocol that achieves a $\left(\frac{2}{3}+\frac{p}{3}-O(\epsilon)\right)(1-\epsilon)$-approximation using space $O(n \log n \log R)$ when the edge weights are in $\mathbb{R}^+$. Letting $p = \frac{1}{2}$ and re-scaling $\epsilon$ proves the theorem. 
\end{proof} 
\begin{proof}[Proof of Theorem \ref{thm:main_kparty}]
Suppose that the edge weights are in $[W]$. We need to adjust the protocol to the $k$-party model. The first party simulates the start of a random-order stream by selecting an ordering of their edges uniformly at random.  They unfold the edges and run Algorithm \ref{alg:batch_bernstein} on the corresponding unweighted $W$-batch random-order stream. They pass the memory state of the algorithm to the next party. Each of the next $k-2$ parties will continue to simulate the random-order stream and pass on the memory state of the algorithm. The $(k-1)$st party communicates $\mathcal{R}(H \cup U)$ to the last party, where $H \cup U$ is the unweighted graph computed by Algorithm \ref{alg:batch_bernstein} on the unfolded $W$-batch stream. Finally, the last party will output the maximum weight matching in the graph consisting of all edges to which they have access. That way, we can set $p=\frac{1}{k}$ and treat the first $k-1$ parties as Alice and the last party as Bob.  
 By Proposition \ref{prop:batch_algo}, the protocol uses $O(n \log n \poly(W/\epsilon))$ words of communication with high probability. By Lemma \ref{lemma:whp}, the protocol achieves a $\left(\frac{2}{3}+\frac{p}{3}-O(\epsilon)\right)$-approximation with high probability. So by Theorem \ref{thm:space}, there exists a protocol that achieves a $\left(\frac{2}{3}+\frac{p}{3}-O(\epsilon)\right)(1-\epsilon)$-approximation using space $O(n \log n \log R)$ when the weights are in $\mathbb{R}^+$. Re-scaling $\epsilon$ proves the theorem. 
\end{proof}

\section*{Acknowledgements}We thank Ola Svensson and Michael Kapralov for helpful discussions. We additionally thank Michael Kapralov for useful comments on the manuscript. 

\nocite{*}
\bibliographystyle{alpha}
\bibliography{bibliography.bib}
\appendix 
\section{Omitted Proofs from Section \ref{sec:Comm}}
\subsection{Proof of Lemma \ref{lemma:fractmatching}}\label{appendix:RobustCommProofs}
Throughout this section, we condition on $E_{early},$ thereby fixing $H$, $U$, $M^*$, $\hat{x}$ and $x.$
\begin{claim} \label{claim:hat_exp}
		For every vertex $u \in G$, it holds that 
		\begin{equation*}
	 \mathbb{E}\left[\sum_{e  \ni u} w_e \hat{y}_e\right] =
		\begin{cases}
		 (1-p) \sum_{e \ni u} w_e x_e  \quad & \text{if $u$ is not covered by $M^*$} \\
		  (1-p) \sum_{e \ni u} w_e x_e   +p  w_{e^*}  \quad & \text{if $u \in e^{*}$ for some  $e^* \in M^*$}. \\
		 \end{cases}
		\end{equation*}
	\end{claim}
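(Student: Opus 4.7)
The plan is to do a case analysis on each edge $e \ni u$, compute $\mathbb{E}[\hat{y}_e]$, and then sum. Throughout, I condition on $E_{\text{early}}$ so that $H$, $U$, $M^*$, and the fractional matching $x$ are all deterministic, and the only randomness is over which late edges belong to Bob and, equivalently, which edges of $M^*$ are included in $M'$ (each independently with probability $p$, as established in the paragraph preceding the definition of $\hat{y}$).

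First I would compute $\mathbb{E}[\hat y_e]$ for a single edge $e$. If $e \in M^*$, then inspecting the definition of $\hat y$ yields $\hat y_e = 1$ when $e \in M'$ (probability $p$) and $\hat y_e = x_e$ when $e \in M^* \setminus M'$ (probability $1-p$), so
\[
\mathbb{E}[\hat y_e] \;=\; p \cdot 1 + (1-p)\, x_e.
\]
If instead $e \notin M^*$, then $\hat y_e = (1-p)\,x_e/p_e$ exactly when $e$ is not adjacent to any edge of $M'$ (an event of probability $p_e$ by definition), and $\hat y_e = 0$ otherwise. This gives
\[
\mathbb{E}[\hat y_e] \;=\; p_e \cdot \frac{(1-p)\,x_e}{p_e} \;=\; (1-p)\, x_e,
\]
where the point is that the $1/p_e$ normalization was chosen precisely so that $p_e$ cancels.

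Now I would sum over $e \ni u$ in the two cases. If $u$ is uncovered by $M^*$, then no edge incident to $u$ belongs to $M^*$, so every $e \ni u$ falls into the second case and
\[
\mathbb{E}\Bigl[\sum_{e \ni u} w_e \hat y_e\Bigr] \;=\; \sum_{e \ni u} w_e (1-p)\, x_e \;=\; (1-p) \sum_{e \ni u} w_e x_e.
\]
If $u$ is covered by a (unique) matching edge $e^* \in M^*$, then exactly one incident edge lies in $M^*$ and the rest lie outside, so
\[
\mathbb{E}\Bigl[\sum_{e \ni u} w_e \hat y_e\Bigr] \;=\; w_{e^*}\bigl(p + (1-p) x_{e^*}\bigr) + \sum_{e \ni u,\, e \ne e^*} w_e (1-p) x_e \;=\; p\, w_{e^*} + (1-p) \sum_{e \ni u} w_e x_e.
\]
This is exactly the claimed formula. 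There is no real obstacle here beyond keeping the case analysis clean; the only nontrivial input is the verification that every edge of $M^*$ ends up in $M'$ independently with probability $p$ (already argued right before the definition of $\hat y$) and the cancellation $p_e \cdot (1-p)/p_e = 1-p$, which is by design of $\hat y$.
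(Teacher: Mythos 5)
Your proof is correct and takes essentially the same approach as the paper's: both rely on linearity of expectation together with the fact that $\Pr[e \text{ not adjacent to } M'] = p_e$ (and $\Pr[e^* \in M'] = p$). The only difference is organizational --- you compute $\mathbb{E}[\hat y_e]$ per edge unconditionally and sum, whereas the paper handles the covered-vertex case by first conditioning on whether $e^* \in M'$ and combining the conditional expectations; the underlying probability facts and final algebra are the same.
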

	\begin{proof}
		First, consider the case when $u$ is not covered by $M^{*}$. For each edge $e$ adjacent to $u$, with probability $p_e$, the edge $e$ is not adjacent to $M'$, in which case $\hat{y}_e = \frac{(1-p)x_e}{p_e}$. Otherwise, with probability $1-p_e$, the edge $e$ is adjacent to $M'$, in which case $\hat{y}_e = 0.$ So 
	$$\mathbb{E}\left[\sum_{e  \ni u} w_e \hat{y}_e\right] =  \sum_{e \ni u} w_e p_e \frac{(1-p)x_e}{p_e} = (1-p) \sum_{e \ni u} w_e x_e .$$
	Now consider the case when $u$ is adjacent to some edge $e^* \in M^*$. If $e^* \in M'$, then $\hat{y}_{e^*} = 1$ and all edges adjacent to $e^*$ have value $0$. So 
	$$\mathbb{E}\left[\sum_{e  \ni u} w_e \hat{y}_e \Bigr| e^{*} \in M'\right] = w_{e^*}.$$
	If instead   $e^* \notin M'$, then $\hat{y}_{e^*} = x_{e^*}.$ For any other edge $e$ adjacent to $u$, the probability that $e$ is not adjacent to $M'$ is now $\frac{p_e}{1-p}$, in which case $\hat{y}_e = (1-p) \frac{x_e}{p_e}$. Otherwise, with probability $1-\frac{p_e}{1-p}$, the edge $e$ is adjacent to $M'$ and $y_e = 0$.
	So 
	$$\mathbb{E}\left[\sum_{e  \ni u} w_e \hat{y}_e \Bigr|  e^{*} \notin M'\right] = w_{e^*}x_{e^*} + \sum_{ e \ni u: e \neq e^*}  \frac{p_e}{1-p} w_e(1-p) \frac{x_e}{p_e} = \sum_{e \ni u} w_e x_e.$$
	Combining, we obtain 
	$$\mathbb{E}\left[\sum_{e  \ni u} w_e \hat{y}_e \right] = p \mathbb{E}\left[\sum_{e  \ni u} w_e \hat{y}_e \Bigr| e^{*} \in M'\right]  + (1-p)\mathbb{E}\left[\sum_{e  \ni u} w_e \hat{y}_e \Bigr| e^{*} \notin M'\right] = p w_{e^*} + (1-p) \sum_{e \ni u} w_e x_e.$$
	\end{proof}
	\begin{claim}\label{claim:hat_ratio}
		$$\mathbb{E}\left[\sum_{e \in E} w_e \hat{y}_e\right]  = \left(\frac{2}{3}+ \frac{p}{3}-\epsilon\right) \mu_w(E_{late}). $$
	\end{claim}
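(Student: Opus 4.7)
The plan is to reduce the claim to a purely algebraic identity via double counting, and then invoke Lemma \ref{lemma:xval}. I note in passing that as literally written the statement uses ``='', but the argument naturally yields ``$\geq$'' (since $\sum_e w_e x_e$ is only lower bounded by Lemma \ref{lemma:xval}); I read the claim as the inequality $\geq$, which is what the downstream argument in Lemma \ref{lemma:fractmatching} needs.

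First I would sum the per-vertex expectations from Claim \ref{claim:hat_exp} over all $u \in V$. Using the standard double-counting identity $\sum_u \sum_{e \ni u} f(e) = 2\sum_e f(e)$, the left side becomes $2\,\mathbb{E}[\sum_{e \in E} w_e \hat y_e]$. On the right, the $(1-p)\sum_{e \ni u} w_e x_e$ terms sum (again by double counting) to $2(1-p)\sum_{e \in E} w_e x_e$. The remaining $p\, w_{e^*}$ contributions appear only for vertices covered by $M^*$, and each matching edge $e^* = (u,v) \in M^*$ contributes $p\, w_{e^*}$ once at $u$ and once at $v$, so altogether these terms sum to $2p \sum_{e \in M^*} w_e = 2p\, w(M^*) = 2p\,\mu_w(E_{late})$.

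Dividing by $2$, this yields the identity
\[
\mathbb{E}\Bigl[\sum_{e \in E} w_e \hat y_e\Bigr] = (1-p)\sum_{e \in E} w_e x_e + p\,\mu_w(E_{late}).
\]
Plugging in Lemma \ref{lemma:xval}, which gives $\sum_e w_e x_e \geq (\tfrac23 - \epsilon)\mu_w(E_{late})$, produces
\[
\mathbb{E}\Bigl[\sum_{e \in E} w_e \hat y_e\Bigr] \geq \bigl((1-p)(\tfrac23 - \epsilon) + p\bigr)\mu_w(E_{late}) = \bigl(\tfrac23 + \tfrac{p}{3} - \epsilon(1-p)\bigr)\mu_w(E_{late}) \geq \Bigl(\tfrac23 + \tfrac{p}{3} - \epsilon\Bigr)\mu_w(E_{late}),
\]
which is the claim.

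There is no real obstacle: the work was done in Claim \ref{claim:hat_exp} (which already captures the random drawing of $M'$ and the definition of $\hat y$), and in Lemma \ref{lemma:xval}. The only subtlety is being careful about conditioning -- throughout I am working conditional on $E_{early}$, so $H$, $U$, $M^*$ and $x$ are fixed and the expectation is only over the random choice of $M' \subseteq M^*$. Since the final bound is uniform in $E_{early}$, taking an outer expectation preserves it, which is what the next lemma will use.
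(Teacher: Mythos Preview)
Your proof is correct and essentially identical to the paper's own proof: both double-count $\sum_u \mathbb{E}[\sum_{e \ni u} w_e \hat y_e]$ using Claim \ref{claim:hat_exp}, obtain $(1-p)\sum_e w_e x_e + p\,\mu_w(E_{late})$, and then apply Lemma \ref{lemma:xval}. Your observation that the statement should read ``$\geq$'' rather than ``$=$'' is also correct.
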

	\begin{proof}
		We have 
		\begin{align*}
			\mathbb{E}\Bigr[\sum_{e \in E} w_e \hat{y}_e\Bigr]  & =  \frac{1}{2} \sum_{u \in V} \mathbb{E}\Bigr[\sum_{e \ni u} w_e \hat{y}_e\Bigr] & \\
			& = \frac{1}{2} \Bigr[ \sum_{u \in V} (1-p) \sum_{e \ni u} w_e x_e + 2p \sum_{e^{*} \in M^{*} }w_{e^*}\Bigr] , & \text{by Claim \ref{claim:hat_exp} }\\
			&=   (1-p)\sum_{e \in E}w_e x_e + p \sum_{e^* \in M^*}w_{e^*} & \\
			& \geq   (1-p)\left(\frac{2}{3}-\epsilon\right) \mu_w(E_{late}) + p \mu_w(E_{late}), & \text{by Lemma \ref{lemma:xval}} \\
            &\geq  \left(\frac{2}{3}+ \frac{p}{3}-\epsilon\right) \mu_w(E_{late}). 
			\end{align*}
	\end{proof}
Next, we argue that we do not lose too much when transforming $\hat{y}$ into the valid fractional matching $y$. 
\begin{claim}\label{claim:hat_prob_bd}
	 For every vertex $u$, it holds that $\Pr[\hat{y}_u > 1+\epsilon] \leq \epsilon.$
\end{claim}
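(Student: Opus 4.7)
The plan is to show that $\hat{y}_u$ concentrates tightly around its mean, which is at most $1$. First, setting $w_e = 1$ in the derivation of Claim \ref{claim:hat_exp} yields
$$\mathbb{E}[\hat{y}_u] = (1-p)\sum_{e \ni u} x_e + p \cdot \mathbbm{1}[u \text{ covered by } M^*] \leq 1,$$
since $x$ is a fractional matching. It therefore suffices to show $\Pr[\hat{y}_u - \mathbb{E}[\hat{y}_u] > \epsilon] \leq \epsilon$.

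Second, I would exploit that the randomness defining $M'$ consists of independent Bernoulli indicators $\{Z_{e^*}\}_{e^* \in M^*}$, each with parameter $p$. Split on whether $u$ is covered by $M^*$. If $u \in e_u^* \in M^*$ and $Z_{e_u^*} = 1$ (probability $p$), then every edge $e = (u,v) \ni u$ with $e \notin M^*$ is adjacent to $M'$ through $e_u^*$, so $\hat{y}_e = 0$ for all such $e$, and thus $\hat{y}_u = \hat{y}_{e_u^*} = 1 \leq 1 + \epsilon$ deterministically. Hence the tail event $\{\hat{y}_u > 1+\epsilon\}$ can occur only when $Z_{e_u^*} = 0$, or when there is no $e_u^*$ (i.e.\ $u$ is uncovered by $M^*$).

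In both of these subcases I would decompose
$$\hat{y}_u = \mathbbm{1}[u \in M^*]\, x_{e_u^*} + \sum_{\substack{e = (u,v) \ni u \\ e \notin M^*}} \hat{y}_e,$$
and observe that each term $\hat{y}_e$ is a function only of $Z_{e_u^*}$ and $Z_{e_v^*}$. Since $M^*$ is a matching, the indices $\{e_v^*\}$ over distinct neighbours $v$ of $u$ are pairwise distinct, so conditional on $Z_{e_u^*} = 0$, the summands $\{\hat{y}_e\}_{e \ni u,\, e \notin M^*}$ are mutually independent. Each is bounded by $(1-p)x_e/p_e \leq 2 x_e \leq 2\epsilon^3$, using $p \leq 1/2$ together with Remark \ref{remark:x} (which gives $x_e \leq 1/(\lambda\beta) \leq \epsilon^3$ for $e \notin M^*$ by the choice of $\lambda, \beta$). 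The conditional variance of the sum is therefore at most $2\epsilon^3 \cdot \mathbb{E}[\hat{y}_u] \leq 2\epsilon^3$.

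Finally, Bernstein's inequality applied to this sum of independent, bounded, small-variance random variables yields
$$\Pr\bigl[\hat{y}_u - \mathbb{E}[\hat{y}_u] > \epsilon \,\bigm|\, Z_{e_u^*} = 0\bigr] \leq \exp\!\bigl(-\Omega(1/\epsilon)\bigr) \leq \epsilon$$
for $\epsilon$ sufficiently small. Combined with the trivial subcase $Z_{e_u^*} = 1$ (where $\hat{y}_u = 1$ deterministically), this gives $\Pr[\hat{y}_u > 1+\epsilon] \leq \epsilon$. The main obstacle is establishing the conditional independence cleanly: the case split on $Z_{e_u^*}$ is essential, because $\hat{y}_{e_u^*}$ itself is not a small random variable and its fluctuation would otherwise swamp any direct concentration bound on $\hat{y}_u$.
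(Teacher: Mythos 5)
Your overall plan tracks the paper's argument closely: show $\hat{y}_u \leq 1$ deterministically when $u$ is uncovered or when $e_u^* \in M'$, and apply concentration conditional on $e_u^* \notin M'$. However, there is a genuine error in your independence claim. You assert that ``the indices $\{e_v^*\}$ over distinct neighbours $v$ of $u$ are pairwise distinct,'' and conclude that the summands $\{\hat{y}_e\}_{e \ni u,\, e \notin M^*}$ are mutually independent once you condition on $Z_{e_u^*} = 0$. This is false: if two neighbours $v$ and $w$ of $u$ are matched \emph{to each other} in $M^*$, then $e_v^* = e_w^* = (v,w)$, so $\hat{y}_{(u,v)}$ and $\hat{y}_{(u,w)}$ both depend on the same Bernoulli indicator $Z_{(v,w)}$ and are not independent. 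The paper deals with exactly this situation by \emph{pairing} such edges: it treats $\hat{y}_{(u,v)} + \hat{y}_{(u,w)}$ as a single random variable, which is now independent of the other (paired or unpaired) terms, bounded by $4/(\lambda\beta)$ rather than $2/(\lambda\beta)$, and nonzero with probability $1-p$. After this repair your Bernstein-inequality calculation goes through with the bound per super-variable doubled (which only changes constants in the exponent, not the $\exp(-\Omega(1/\epsilon))$ asymptotics), so the fix is local but necessary: without it the application of any concentration inequality for independent summands is not justified.

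Two smaller remarks. First, in the ``$u$ uncovered'' subcase you do not actually need concentration at all, since $p_e \geq 1-p$ for every $e \ni u$ gives $\hat{y}_u \leq \sum_{e \ni u} x_e \leq 1$ deterministically, which is how the paper dispatches that case. Second, the paper uses a Chernoff bound rather than Bernstein's inequality after rescaling the summands to lie in $[0,1]$; your use of Bernstein is an equivalent choice and does not change the result.
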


\begin{proof}
 Suppose first that $u$ is not covered by $M^*$. For each edge $e \ni u$, it holds that $p_e \geq (1-p)$, since $e$ has at most one neighbour in $M^*$. So we have
 $$ \hat{y}_u = \sum_{e \ni u} \hat{y}_e \leq \sum_{e \ni u} (1-p) \frac{x_e}{p_e} \leq \sum_{e \ni u} x_e \leq 1.$$
 Now suppose that $u$ is covered by $M^*$. Let $e^* \in M^*$ be the edge containing $u$. If $e^* \in M'$, then $\hat{y}_u =1$ and we are done. So we will condition on the event that $e^* \in M^* \setminus M'$. In this case, $\hat{y}_{e^*} =x_{e^*}$. 
Let $X:= \hat{y}_u - \hat{y}_{e^*} = \sum_{e \ni u: e \neq e^*} \hat{y}_e$. We will decompose $X$ into the sum of independent bounded random variables, and use the Chernoff bound to show that with high probability, $X$ is sufficiently small.  

Fix an edge $e = (u,v) \neq e^*$ adjacent to $u$. If $v$ is not matched by $M^*$ to another neighbour of $u$, then $y_e$ is independent of all the other edges adjacent to $u$. In this case $\hat{y}_e$ takes value $(1-p)\frac{x_e}{p_e}$ with probability $\frac{p_e}{1-p}$, and value $0$ otherwise. Note that $(1-p)\frac{x_e}{p_e} \leq \frac{2}{\lambda \beta} $ by Remark \ref{remark:x}. 
Otherwise, suppose that $(u,v)$ is matched by $M^*$ to a vertex $w$ in the neighbourhood of $u$. Let $e' = (u,w)$. Then $y_e + y_{e'}$ is independent of all the other edges adjacent to $u$. It takes value $(1-p)\frac{x_e+x_{e'}}{(1-p)^2} = \frac{x_e+x_{e'}}{1-p}$ with probability $\frac{p_e}{1-p} =\frac{p_{e'}}{1-p} = (1-p)$, and value $0$ otherwise. Note that  $\frac{x_e+x_{e'}}{1-p} \leq \frac{4}{\lambda \beta}$  by Remark \ref{remark:x}.  

By pairing together the edges that are matched together, we can now write $X$ as a sum of independent variables taking values in $[0, 4/\lambda \beta]$. We have $\mathbb{E}[\hat{y}_u -y_{e^*}] = \sum_{e \ni u : e \neq e^*} \frac{p_e}{1-p}\cdot (1-p) \frac{x_e}{p_e} \leq x_u \leq 1. $ So by Chernoff bound, we obtain 
\begin{align*}
\Pr[\hat{y}_u > 1 + \epsilon] & \leq \Pr[X > \mathbb{E}[X] + \epsilon] & \\
& = \Pr\left[ X \cdot \frac{\lambda \beta} {4} > \mathbb{E}\left[X \cdot \frac{\lambda \beta} {4}\right] + \epsilon \cdot \frac{\lambda \beta} {4}\right]&  \\
&\leq \exp \left[ - \frac{\epsilon^2 \lambda \beta}{12}\right]&  \\
& \leq \epsilon , \qquad & \text{since $\beta \geq 12 \epsilon^{-3}\lambda^{-1}$.}
\end{align*}

\end{proof}
\begin{claim} \label{claim:hat_to_y} 
	$$\mathbb{E} \left[ \sum_{e\in E} w_e y_e \right] \geq  (1-3\epsilon) \mathbb{E} \left[ \sum_{e \in E} w_e \hat{y}_e \right]  $$
\end{claim}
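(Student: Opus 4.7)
The plan is to control the gap between $\sum_e w_e y_e$ and $\sum_e w_e \hat{y}_e$ introduced by the truncation step. By definition of $y$,
$$\sum_{e \in E} w_e y_e = \frac{1}{1+\epsilon} \sum_{e = (u,v)} w_e \hat{y}_e \, \mathbbm{1}\bigl[\hat{y}_u \le 1+\epsilon \,\wedge\, \hat{y}_v \le 1+\epsilon\bigr],$$
and a union bound on the complementary event yields, with $S_u := \sum_{e \ni u} w_e \hat{y}_e$,
$$\sum_e w_e \hat{y}_e - (1+\epsilon)\sum_e w_e y_e \;\le\; \sum_{u \in V} \mathbbm{1}[\hat{y}_u > 1+\epsilon] \cdot S_u.$$
Writing $\Delta := \mathbb{E}\bigl[\sum_u \mathbbm{1}[\hat{y}_u > 1+\epsilon]\, S_u\bigr]$, the proof reduces to showing $\Delta \le 2\epsilon\, \mathbb{E}\bigl[\sum_e w_e \hat{y}_e\bigr]$, since then $\mathbb{E}[\sum_e w_e y_e] \ge (1-\epsilon)\mathbb{E}[\sum_e w_e \hat{y}_e] - \Delta \ge (1-3\epsilon)\mathbb{E}[\sum_e w_e \hat{y}_e]$, using $\tfrac{1}{1+\epsilon}\ge 1-\epsilon$.

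To bound $\Delta$ I would first establish the pointwise estimate $\hat{y}_u \le 3$, sharpened to $\hat{y}_u \le 1$ whenever $u$ is not covered by $M^*$. The matching edge at $u$ (if any) contributes at most $1$ to $\hat{y}_u$, and for a non-matching edge $e \ni u$ the definition of $\hat{y}_e$ gives $\hat{y}_e \le (1-p)x_e/p_e \le 2 x_e$, because $p_e \ge (1-p)^2$ and $p \le 1/2$. Summing and using $x_u \le 1$ from Remark \ref{remark:x} yields $\hat{y}_u \le 1 + 2 x_u \le 3$ at covered $u$, and $\hat{y}_u \le x_u \le 1$ at uncovered $u$. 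Consequently, only covered vertices $u \in V(M^*)$ contribute to $\Delta$, and for each such $u$ we have the crude bound $S_u \le W\hat{y}_u \le 3W$.

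The hard part is then absorbing the factor of $W$ that appears from this pointwise bound on $S_u$: naively using $\Pr[\hat{y}_u > 1+\epsilon] \le \epsilon$ as stated in Claim \ref{claim:hat_prob_bd} would only yield $\Delta \le 6W\,|M^*|\,\epsilon$, which is off by a factor of $W$. The remedy is to invoke the sharper Chernoff estimate that is actually produced inside the proof of Claim \ref{claim:hat_prob_bd}, namely $\Pr[\hat{y}_u > 1+\epsilon] \le \exp(-\epsilon^2 \lambda \beta / 12)$, and to observe that the choice $\beta = 144\lambda^{-4}\log(2W/\lambda)$ made at the start of this section (with an extra $\lambda^{-2}$ compared to the stream analysis) is exactly what forces this probability down to at most $\tfrac{2\epsilon}{9W}$. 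Summing over the $|V(M^*)| = 2|M^*| \le 2\,\mu_w(E_{late})$ covered vertices (using $|M^*|\le w(M^*)$ for integral weights) and then invoking Claim \ref{claim:hat_ratio} gives
$$\Delta \le 2|M^*| \cdot 3W \cdot \frac{2\epsilon}{9W} = \frac{4\epsilon}{3}\,|M^*| \le \frac{4\epsilon}{3}\,\mu_w(E_{late}) \le 2\epsilon\, \mathbb{E}\Bigl[\sum_e w_e \hat{y}_e\Bigr],$$
which closes the argument. In summary, the claim is essentially a truncation bound, but the need to cancel the $W$ introduced by $S_u \le 3W$ is the reason this section requires the stronger quartic $\lambda^{-4}$ dependence in $\beta$ rather than the quadratic one used in Section \ref{sec:stream}.
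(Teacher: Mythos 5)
Your proof is correct, and it is actually \emph{more} careful than the paper's own argument. The paper's proof of this claim is quite terse: it says that step~1 loses in expectation at most $\epsilon \sum_{u} \sum_{v \in N(u)} w_{(u,v)}\hat{y}_{(u,v)}$, quoting only the bound $\Pr[\hat{y}_u > 1+\epsilon] \le \epsilon$ from Claim~\ref{claim:hat_prob_bd}, and then combines with the $\frac{1}{1+\epsilon}$ rescaling. As written, this treats the truncation event $\{\hat{y}_u > 1+\epsilon\}$ as if it were independent of $S_u = \sum_{e\ni u} w_e\hat y_e$ --- which it is not, and indeed cannot be, since a large $S_u$ forces a large $\hat y_u$. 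You correctly identify this gap, and correctly observe that the fix is to ignore the coarse statement of Claim~\ref{claim:hat_prob_bd} and instead use the exponential tail $\Pr[\hat{y}_u > 1+\epsilon] \le \exp(-\epsilon^2\lambda\beta/12)$ that its proof actually establishes, paired with the pointwise bounds $\hat y_u \le 1$ (uncovered) and $\hat y_u \le 3$, $S_u \le 3W$ (covered) and the bound $|V(M^*)| \le 2\mu_w(E_{late})$. Your arithmetic then closes cleanly (for $\epsilon \le p/3$, which holds since $\epsilon$ is arbitrarily small). One small refinement to your remark on parameters: the $\lambda^{-4}$ in $\beta$ is already what the paper needs even for the weaker $\Pr \le \epsilon$ bound (via the stated requirement $\beta \ge 12\epsilon^{-3}\lambda^{-1} = \Theta(\lambda^{-4})$ once $\lambda = \Theta(\epsilon)$); it is the $\log(2W/\lambda)$ factor inside $\beta$ that supplies the extra $\log W$ needed to drive the probability all the way down to $O(\epsilon/W)$. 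But the substance of your observation --- that the $\beta$ of Section~\ref{sec:Comm} must be chosen large enough to cancel the $W$ coming from the pointwise bound $S_u \le 3W$ --- is exactly the right reading of why this claim holds.
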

\begin{proof} 
We can think of the procedure of generating $y$ from $\hat{y}$ as two separate steps: First, in step 1, we zero out all edges $(u,v)$ with $\hat{y}_u > 1+ \epsilon$ or $\hat{y}_v > 1+ \epsilon$. Then, in step 2, we scale everything down by a factor of $(1+\epsilon)$.

Let us first consider step 1. For each vertex $u$, we zero out all the edges adjacent to $u$ with probability  $\Pr[\hat{y}_u > 1+ \epsilon]$. By Claim \ref{claim:hat_prob_bd}, we have $\Pr[\hat{y}_u > 1+ \epsilon] \leq \epsilon$. So in expectation, in step 1, we lose at most 
$$ \epsilon \sum_{u \in V} \sum_{v \in N(u)} w_{(u,v)}\hat{y}_{(u,v)} = 2 \epsilon \sum_{e\in E} w_e \hat{y}_e.$$ Then in step 2, we scale everything by a factor of $\frac{1}{1+\epsilon}. $
In total, we obtain
	$$\mathbb{E} \left[ \sum_{e\in E} w_e y_e \right] \geq \frac{1}{1+\epsilon} \sum_{e\in E} w_e \hat{y}_e  - 2\epsilon \sum_{e\in E} w_e \hat{y}_e  \geq  (1-3\epsilon) \sum_{e\in E} w_e \hat{y}_e. $$
\end{proof}
Finally, we lift the conditioning on $E_{early}$. 
\begin{proof}[Proof of Lemma \ref{lemma:fractmatching}]
		We have
		\begin{align*}
			\mathbb{E}\left[ \sum_{e \in E} w_e y_e \right] & = 	\mathop{\mathbb{E}}_{E_{early}}\left[ \mathop{\mathbb{E}}_{M'} \left[ \sum_{e \in E} w_e y_e \Bigr| E_{early} \right]  \right] & \\
			& \geq \mathop{\mathbb{E}}_{E_{early}} \left[  \mathop{\mathbb{E}}_{M'}  \left[ (1-3\epsilon) \sum_{e \in E} w_e \hat{y}_e \Bigr| E_{early} \right] \right], & \text{by Claim \ref{claim:hat_to_y}} \\
			&\geq \mathop{\mathbb{E}}_{E_{early}} \left[ (1-3\epsilon)\left(\frac{2}{3}+ \frac{p}{3}-\epsilon\right) \mu_w(E_{late})\right], & \text{by Claim \ref{claim:hat_ratio}} \\
			&= (1-3\epsilon)(1-\epsilon)\left(\frac{2}{3}+ \frac{p}{3}-\epsilon\right) \mu_w(G), & \text{since $\mathbb{E}[ \mu_w(E_{late}) ] = (1-\epsilon) \mu_w(G)$} \\
			& \geq \left(\frac{2}{3}+\frac{p}{3}-4 \epsilon \right) \mu_w(G). 
		\end{align*}
	\end{proof}
 
\subsection{Proof of Lemma \ref{lemma:whp}}\label{appendix:whpProof}
We will use the following concentration inequality for self-bounding functions. 
\begin{defn}[\cite{selfbounding}]
	A function $f: \{0,1\}^n \rightarrow \mathbb{R}$ is called self-bounding if  there exists functions $f_1, \dots f_n: \{0,1\}^{n-1} \rightarrow \mathbb{R}$ such that for all $x = (x_1, \dots x_n) \in \{0,1\}^n$,  
	$$0 \leq f(x) -f_i(x^{(i)}) \leq 1  \qquad \forall i \in [n],$$
		and
	$$\sum_{i=1}^n \left(f(x)-f_i(x^{(i)}) \right) \leq f(x),$$
	where $x^{(i)}$ is obtained by dropping the $i$-th entry of $x$. 
		\end{defn}
\begin{proposition}[\cite{selfbounding}]\label{prop:selfbound}
Let $X_1, \dots X_n$ be independent random variables taking values in $\{0,1\}$, and let $f: \{0,1\}^n \rightarrow \mathbb{R}$ be a self-bounding function. Define $Z = f(X_1, \dots X_n)$. Then 
	$$\Pr[Z \leq \mathbb{E}Z -t] \leq \exp\Bigr(\frac{-t^2}{2\mathbb{E}Z}\Bigr). $$
\end{proposition}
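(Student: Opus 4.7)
The plan is to prove Proposition~\ref{prop:selfbound} via the entropy method (Herbst's argument), which is the standard route to concentration inequalities for self-bounding functions, originating with Boucheron, Lugosi and Massart. The argument has four ingredients: a Chernoff-bound setup, tensorization of entropy, a modified log-Sobolev inequality applied coordinate-wise, and a Herbst-type integration to control the moment generating function.

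First, for $\lambda \geq 0$, I would set $\psi(\lambda) = \log \mathbb{E}[\exp(-\lambda(Z - \mathbb{E}Z))]$. Markov's inequality applied to $e^{-\lambda(Z - \mathbb{E}Z)}$ yields $\Pr[Z \leq \mathbb{E}Z - t] \leq \exp(-\lambda t + \psi(\lambda))$, so it suffices to establish $\psi(\lambda) \leq \lambda^2 \mathbb{E}Z / 2$ for every $\lambda \geq 0$; optimizing the Chernoff bound via $\lambda = t/\mathbb{E}Z$ then gives the stated inequality.

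Next I would tensorize the entropy functional $\mathrm{Ent}(Y) = \mathbb{E}[Y \log Y] - \mathbb{E}[Y]\log\mathbb{E}[Y]$ across the independent coordinates: independence of $X_1, \dots, X_n$ implies
\[
\mathrm{Ent}(e^{-\lambda Z}) \;\leq\; \sum_{i=1}^{n} \mathbb{E}\bigl[\mathrm{Ent}_i(e^{-\lambda Z})\bigr],
\]
where $\mathrm{Ent}_i$ denotes conditional entropy with respect to $X_i$ only. For each coordinate, setting $Z_i = f_i(X^{(i)})$, the self-bounding hypothesis gives $0 \leq Z - Z_i \leq 1$ almost surely, and a variational identity for entropy produces the modified log-Sobolev bound $\mathrm{Ent}_i(e^{-\lambda Z}) \leq \mathbb{E}_i[\phi(\lambda(Z - Z_i))\,e^{-\lambda Z}]$ with $\phi(u) = e^{-u} + u - 1$. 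Since $\phi(u) \leq u^2/2$ on $u \geq 0$ (verifiable by checking $\phi(0) = \phi'(0) = 0$ and $\phi''(u) = e^{-u} \leq 1$), one obtains $\mathrm{Ent}_i(e^{-\lambda Z}) \leq \tfrac{\lambda^2}{2}\mathbb{E}_i[(Z-Z_i)^2 e^{-\lambda Z}]$.

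Summing over $i$ and invoking the two self-bounding inequalities, first $(Z-Z_i)^2 \leq Z - Z_i$ (valid because $Z - Z_i \in [0,1]$) and then $\sum_i (Z - Z_i) \leq Z$, yields the key functional bound
\[
\mathrm{Ent}(e^{-\lambda Z}) \;\leq\; \frac{\lambda^2}{2}\, \mathbb{E}\bigl[Z\, e^{-\lambda Z}\bigr].
\]
Expanding the left-hand side as $-\lambda \mathbb{E}[Z e^{-\lambda Z}] - \mathbb{E}[e^{-\lambda Z}] \log \mathbb{E}[e^{-\lambda Z}]$ and using $\mathbb{E}[Z e^{-\lambda Z}] = -\tfrac{d}{d\lambda}\mathbb{E}[e^{-\lambda Z}]$ gives a first-order differential inequality in $\log \mathbb{E}[e^{-\lambda Z}]$; a standard Herbst integration (divide by $\lambda^2$ and integrate from $0$ to $\lambda$, using $\psi(0) = \psi'(0) = 0$) converts this into $\psi(\lambda) \leq \lambda^2 \mathbb{E}Z / 2$, finishing the proof. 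The main technical obstacle is the local entropy step: producing the correct form of the modified log-Sobolev inequality with the specific $\phi$ that is dominated by a quadratic on $u \geq 0$, and carefully verifying that the two self-bounding inequalities can be chained inside the $\mathbb{E}_i[\,\cdot\, e^{-\lambda Z}]$ expectation so that the resulting right-hand side is controlled by $\mathbb{E}[Z e^{-\lambda Z}]$ rather than a weaker quantity (which would give only a Bernstein-type bound rather than the clean Gaussian lower tail claimed).
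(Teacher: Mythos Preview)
The paper does not actually prove Proposition~\ref{prop:selfbound}; it is quoted from \cite{selfbounding} (Boucheron, Lugosi, Massart) and used as a black box in the proof of Lemma~\ref{lemma:whp}. Your sketch is the standard entropy-method argument from that reference and is correct: the tensorization of entropy, the coordinate-wise modified log-Sobolev bound with $\phi(u)=e^{-u}+u-1$, the quadratic domination $\phi(u)\le u^2/2$ on $u\ge 0$, the chaining $(Z-Z_i)^2\le Z-Z_i$ and $\sum_i(Z-Z_i)\le Z$, and the Herbst integration are exactly the ingredients of the original proof, so there is nothing to compare.
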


	\begin{proof}[Proof of Lemma \ref{lemma:whp}]
		Condition on $E_{early}$, thereby fixing $E_{late}, H$ and $U$. Let $e_1, \dots e_k$ denote the edges in $E_{late} \setminus \mathcal{R}(H \cup U)$. Bob has access to all of the edges $\mathcal{R}(H \cup U)$ with probability $1$, and to each of the edges $e_1, \cdots e_k$ independently with probability $ \frac{p}{1-\epsilon/W}.$ We will define a self-bounding function $f:\{0,1\}^k  \rightarrow \mathbb{N}$ describing the weight of the matching output by Bob. Given $x \in \{0,1\}^k $ , write $E_x := \{e_i : x_i = 1\}$ and define
		$$f(x):= \frac{\mu_w(\mathcal{R}(H\cup U)\cup E_x)}{W}. $$
		For $i = 1, \dots, k$, define 
		$$f_i(x^{(i)}) := f(x_1, \dots x_{i-1}, 0,  x_{i+1}, \dots, x_k ) =\frac{\mu_w(\mathcal{R}(H\cup U) \cup E_x \setminus \{e_i\})}{W}.$$ 
		Removing one edge from the graph reduces the weight of the matching by at most $W$, so 
		$$ 0 \leq f(x) - f_i(x^{(i)}) \leq \frac{W}{W} =1.$$
		Next, fix a maximum-weight matching $M_x$ of $\mathcal{R}(H \cup U) \cup E_x$. Then $f(x)$ and $f_i (x^{(i)})$ differ only if $e_i \in M_x$, in which case $f(x) - f_i (x^{(i)}) \leq \frac{w(e_i)}{W}$. Therefore, 
		$$\sum_{i=1}^k \left( f(x) - f_i (x^{(i)}) \right) \leq  \sum_{e \in M_x} \frac{w(e)}{W} =  \frac{\mu_w(\mathcal{R}(H\cup U) \cup E_x)}{W} = f(x).$$
		So $f$ is indeed self-bounding and we can apply Proposition \ref{prop:selfbound}. Let $X_i$ be the indicator variable that $e_i$ is assigned to Bob. Let $Z = f(X_1, \dots, X_k)$, and let $\mu = \mathbb{E}[Z] =   \frac{r \mu_w(G)}{W}$, where $r \geq \frac{2}{3}+\frac{p}{3}-4\epsilon$ is the approximation ratio that Protocol \ref{alg:protocol} achieves in expectation. By Proposition \ref{prop:selfbound}, we have
		$$\Pr \Bigr[Z \leq \frac{r\mu_w(G)}{W} - \sqrt{10r\mu_w(G)\log n/W} \Bigr] \leq \exp\Bigr( -\frac{10r\mu_w(G)\log n/W}{2 r\mu_w(G)/W } \Bigr)= \frac{1}{n^5}. $$
		Thus, with probability at least $1-\frac{1}{n^5}$, the protocol outputs a matching of size at least
		$r \mu_w(G) - W \sqrt{10r\mu_w(G)\log n/W} = (1-O(\epsilon))r  \mu_w(G)$ . The fact that the deviation is $O(\epsilon)$ follows from the assumption that $\mu_w(G) \geq O(W\epsilon^{-2} \log n)$. 
		\end{proof}
\end{document}